\documentclass[journal, twoside, final]{IEEEtran}

\usepackage{graphicx}
\usepackage[cmex10]{amsmath}
\usepackage{cases}
\usepackage[tight,footnotesize]{subfigure}
\usepackage{amsthm} 
\usepackage{amsmath}
\usepackage{cite}
\usepackage{amssymb}
\usepackage{algorithm}
\usepackage{algorithmic}
\usepackage{xcolor}
\usepackage{stfloats}
\usepackage{multirow}
\usepackage{CJK}
\usepackage{subeqnarray}
\usepackage{longtable}
\usepackage{multicol}
\usepackage{extpfeil}
\usepackage{verbatim}
\newtheorem{lemma}{\bf{Lemma}}
\newtheorem{theorem}{\bf{Theorem}}

\newtheorem{prop}{\bf{Proposition}}

\begin{document}

\title{Massive Access in Secure NOMA under Imperfect CSI: Security Guaranteed Sum-Rate Maximization with First-Order Algorithm}

\author{Zongze Li, Minghua Xia, Miaowen Wen and Yik-Chung Wu

\thanks{Manuscript received January 27, 2020; revised May 31, 2020; accepted July 17, 2020. This work was supported in part by the National Natural Science Foundation of China under Grant 61671488, in part by the Major Science and Technology Special Project of Guangdong Province under Grant 2018B010114001, and in part by the Fundamental Research Funds for the Central Universities under Grants 191gjc04 and 2019SJ02.
	
Zongze Li and Yik-Chung Wu are with the Department of Electrical and Electronic Engineering, The University of Hong Kong, Hong Kong (e-mail: \{zzli,ycwu\}@eee.hku.hk).
    
Minghua Xia is with the School of Electronics and Information Technology, Sun Yat-sen University, Guangzhou 510006, China, and also with the Southern Marine Science and Engineering Guangdong Laboratory, Zhuhai 519082, China (e-mail: xiamingh@mail.sysu.edu.cn).
    
Miaowen Wen is with the School of Electronics and Information Engineering, South China University of Technology, Guangzhou 510640, China (e-mail: eemwwen@scut.edu.cn).
}

}

\maketitle

\begin{abstract}
Non-orthogonal multiple access (NOMA) is a promising solution for secure transmission under massive access. However, in addition to the uncertain channel state information (CSI) of the eavesdroppers due to their passive nature, the CSI of the legitimate users may also be imperfect at the base station due to the limited feedback. Under both channel uncertainties, the optimal power allocation and transmission rate design for a secure NOMA scheme is currently not known due to the difficulty of handling the probabilistic constraints. This paper fills this gap by proposing novel transformation of the probabilistic constraints and variable decoupling so that the security guaranteed sum-rate maximization problem can be solved by alternatively executing branch-and-bound method and difference of convex programming. To scale the solution to a truly massive access scenario, a first-order algorithm with very low complexity is further proposed. Simulation results show that the proposed first-order algorithm achieves identical performance to the conventional method but saves at least two orders of magnitude in computation time. Moreover, the resultant transmission scheme significantly improves the security guaranteed sum-rate compared to the orthogonal multiple access transmission and NOMA ignoring CSI uncertainty.
\end{abstract}

\begin{IEEEkeywords}
First-order algorithm, limited feedback, massive access, non-orthogonal multiple access, outage probability, physical layer security.
\end{IEEEkeywords}

\IEEEpeerreviewmaketitle

\section{Introduction}

With the explosive growth of the Internet-of-Things, massive number of users and devices will access wireless networks at the same time~\cite{B_19WireFuture,J_LiuMA18_R2,J_Zhang19prospective}. 
However, the large amount of data flowing in various wireless propagation channels poses significant privacy and security challenges to the next-generation wireless system design~\cite{J_19SurveyFutureNOMA}. 
Although physical layer security exploits channel capacity difference between the legitimate user and the eavesdropper to protect the legitimate transmission, this technique, when applied in its primitive form, is vulnerable in multiple access systems since eavesdroppers have more targets to choose from~\cite{J_Wu18Survey}. 
Fortunately, non-orthogonal multiple access (NOMA) technique provides a promising solution by serving a group of legitimate users through power domain multiplexing~\cite{J_Wei_DN20}, thus generating artificial interference to the eavesdroppers~\cite{J_SeNOMA19_Zeng}.

Pioneering works on NOMA with security consideration assume perfect channel state information (CSI) of the legitimate users' channel and imperfect CSI of the eavesdropper's channel~\cite{J_Sun18NOMA,J_Feng19SeNoma,J_YueSecure20NOMA}.
Under this assumption, secure transmission with NOMA achieves a higher
secure transmission rate (i.e., transmission rate with secrecy outage probability constraint satisfied) than the time-division multiplexing and frequency-division multiplexing access~\cite{J_Wang19SecureNOMI}.
While these results are encouraging, the assumption on perfect CSI of the legitimate users' channels is too strong in practice, especially with massive users where a large number of CSIs from legitimate users need to be fed back to the base station (BS). As a result, limited feedback or vector quantization error is inevitable. Due to the uncertainty in the legitimate users' CSIs, an outage may also occur during the legitimate transmission~\cite{J_Wei17OptPower}.

To ensure both the legitimate transmission outage probability and the secrecy outage probability are within tolerable levels, the secure transmission scheme design should incorporate two probabilistic constraints, which unfortunately do not admit closed-form expressions for further analysis.
To overcome this challenge, this paper adopts the quantization cell approximation~\cite{J_Yoo_pdf07} and the Bernstein-type inequality~\cite{J_Bech09Berntein} to transform the intractable probabilistic constraints into deterministic ones.
Furthermore, by introducing an auxiliary variable and leveraging Lambert W function, 
the constraints are decoupled. Then, the resultant problem is readily solved via block coordinate ascent approach~\cite{C_Xie18BCA}, where branch-and-bound algorithm~\cite{J_Mult_FP01} and difference of convex (DC) programming~\cite{J_An05Opt} are respectively used to handle each subproblem.

While the above algorithm is a workable solution, it does not scale well with network size. When the size of the network is large, both branch-and-bound algorithm and DC programming would be too time-consuming.
To scale the transmission scheme to massive access, a fast first-order algorithm is further proposed by exploiting the block separability of constraints under the alternating maximization framework~\cite{Bert1989Padc}. In particular, to replace the branch-and-bound algorithm, the multiple-ratio structure of the objective function is exploited, and a quadratic transform is employed to obtain an iterative algorithm, which converges to a local optimal solution.
On the other hand, to get around the DC programming, a projected-gradient method~\cite{B_Beck09_AM} is employed to obtain a stationary point. 
It is proved that the overall first-order algorithm is guaranteed to converge. Furthermore, simulation results demonstrate that the proposed first-order algorithm reduces computation time by at least two orders of magnitude compared to the method employing branch-and-bound and DC programming while achieving the same performance. 
Finally, simulation results show that the resultant transmission scheme achieves a significantly higher security guaranteed sum-rate than the orthogonal multiple access scheme and NOMA scheme ignoring CSI uncertainty.

The rest of this paper is organized as follows. System model and the security guaranteed sum-rate maximization problem are formulated in Section~\ref{Sec:II}. Then, the probabilistic constraints are transformed into deterministic constraints in Section~\ref{Sec:III}. In  Sections~\ref{Sec:IV} and~\ref{Sec:V}, a conventional solution and a first-order algorithm are respectively proposed. Simulation results are presented in Section~\ref{Sec:VI}. Finally, conclusion is drawn in Section~\ref{Sec:VII}.

$\mathit{Notation:}$ Column vectors and matrices are denoted by lowercase and uppercase boldface letters, respectively. Conjugate transpose, transpose, Frobenius norm, trace, the modulus of a scalar and mathematical expectation are denoted by $(\cdot)^H$, $(\cdot)^T$, $\|\cdot\|_F$, $\mathrm{Tr}(\cdot)$, $|\cdot|$ and $\mathbb{E}\{\cdot\}$, respectively. The notations $[x]^+$ and  $\mathrm{Pr}(\cdot)$ stand for $\max\{x,0\}$ and probability, respectively. 
$\mathcal{CN}\left(0,b\right)$ denotes the circularly symmetric complex normal distribution with zero mean and variance $b$, and $\mathrm{Exp}(\lambda)$ denotes the exponential distribution with mean $\lambda$. The principal branch of Lambert W function is defined by $W_0(x)e^{W_0(x)}=x$ for $x\geq-1/e$ with $W_0(x)\geq-1$~\cite{Lambert_W96}.

\section{System Model and Problem Formulation}\label{Sec:II}
\begin{figure}[tb]
    \centering
    \includegraphics[scale=0.52]{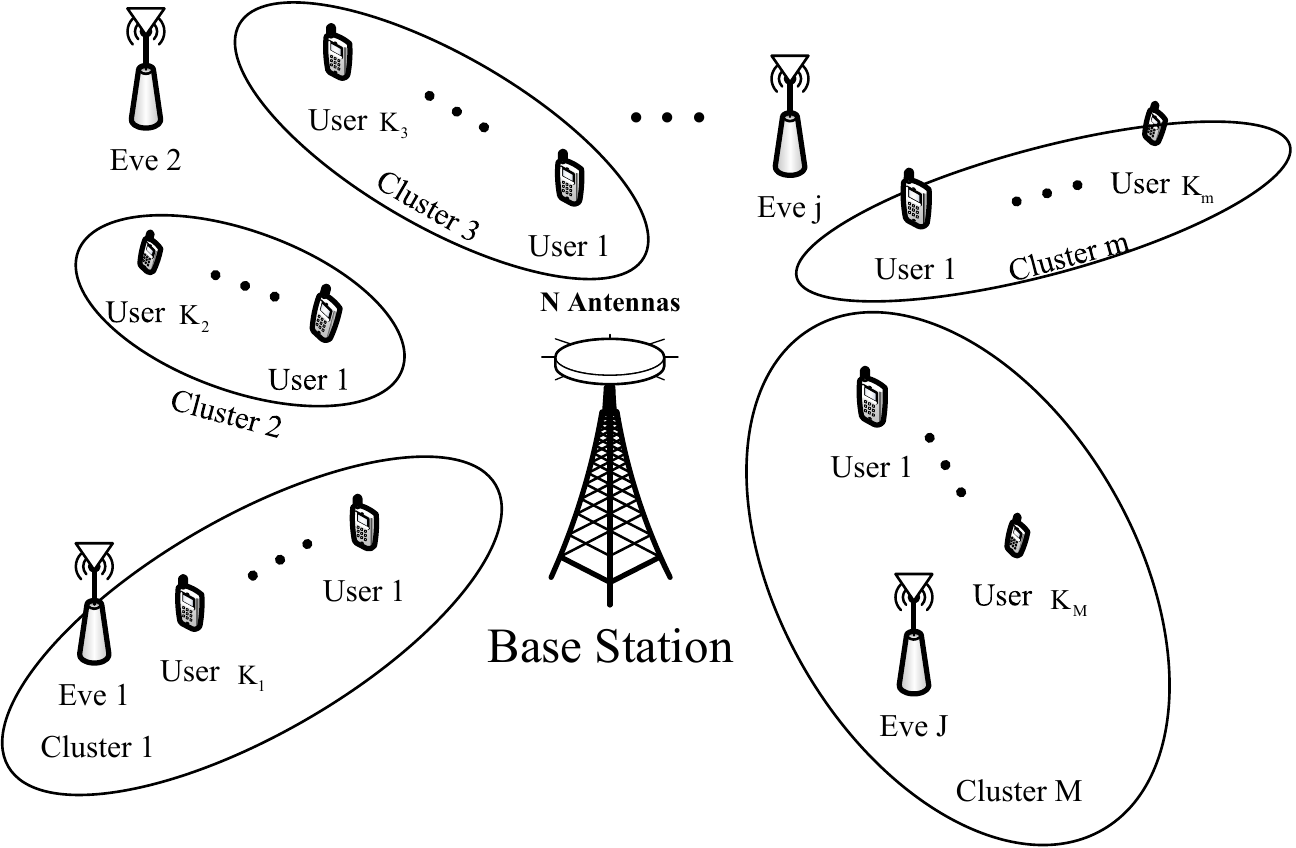}
    \caption{System model of downlink massive access secure NOMA network.}\label{fig:SystemModel}
\end{figure}
We consider a downlink secure multiple-input single-output (MISO) system as shown in Fig.~\ref{fig:SystemModel}. There are one $N$-antennas BS, $M$ clusters of single-antenna legitimate users with cluster $m$ having $K_m$ users, and $J$ passive single-antenna non-colluding eavesdroppers (Eves) who potentially wiretap any message being sent. 
This paper considers the massive access setting in which $\sum_{m=1}^{M} K_m > M$ and assumes that all wireless channels experience quasi-static Rayleigh block fading (i.e., all channels are subject to zero mean complex Gaussian distribution).
The main channel (i.e., from the BS to user $k$ in cluster $m$) and the eavesdropper's channel (i.e., from the BS to Eve $j$) are respectively denoted by $d^{-\alpha/2}_{m,k}\mathbf{g}_{m,k}\in \mathbb{C}^{N\times 1}$ and $d^{-\alpha/2}_{e,j}\mathbf{g}_{e,j}\in \mathbb{C}^{N\times 1}$, where $\mathbf{g}_{m,k}\sim \mathcal{CN}(\mathbf{0},\mu^2_{m,k}\mathbf{I}_N),\mathbf{g}_{e,j}\sim \mathcal{CN}(\mathbf{0},\mu^2_{e,j}\mathbf{I}_N)$ are the small-scale fading vectors, $d_{m,k}$ and $d_{e,j}$ respectively denote the distances from the BS to user $k$ in cluster $m$ and the Eve $j$, and $\alpha$ denotes the path-loss exponent. 
At network initialization, a randomly generated codebook consisting of $M=2^B$ unit-norm vectors each with length $N$ (denoted by $\{\hat{\mathbf{g}}_{m}\}_{m=1}^M$) is designed off-line and made known at both the BS and users via codebook distribution scheme~\cite{B_Clerckx13}.
Then, the BS sends a sequence of training symbols to all users who perform channel estimation to obtain knowledge of their own channels.
Since channel estimation error is negligible when the training sequence is long and when signal-to-noise ratio is high~\cite{J_YCestimation08}, it is assumed that the CSI of the main channel is accurate at the users. After channel quantization (using the codebook obtained earlier from the BS), 
each user conveys its channel direction information (CDI) using $B$ bits over the feedback channel to the BS.
Due to the limited feedback per channel coherence block, the CSI of the main channel obtained at the BS is imperfect~\cite{J_Kout12DownSDMA}.  But the BS can still group users into clusters based on the obtained CDI.

Employing the well-known quantization cell approximation~\cite{J_Jindal06Finite}, the users having the maximum inner product between their small-scale fading vectors and $\hat{\mathbf{g}}_{m}$ are assigned to cluster $m$, and the value of $K_m$ is automatically obtained after the grouping. The normalized channel vector $\tilde{\mathbf{g}}_{m,k}:=\mathbf{g}_{m,k}/\|\mathbf{g}_{m,k}\|$ and $\hat{\mathbf{g}}_{m}$ are related by~\cite{J_Kout12DownSDMA}
\begin{equation}\label{eq:rela_3_gk_channel_imperfect}
\tilde{\mathbf{g}}_{m,k}=(\cos\beta_{m,k} ) \hat{\mathbf{g}}_{m}+(\sin \beta_{m,k}) {\mathbf{e}}_{m,k},
\end{equation}
where ${\mathbf{e}}_{m,k}\in \mathbb{C}^{N\times 1}$ is the unit-norm quantization error vector isotropically distributed in the nullspace of $\hat{\mathbf{g}}_{m}$, and $\beta_{m,k}$ represents the angle between $\tilde{\mathbf{g}}_{m,k}$ and $\hat{\mathbf{g}}_{m}$ with $\sin^2 \beta_{m,k}$ being a random variable with variance determined by $B$~\cite{J_Jindal06Finite}.

By employing NOMA transmission, the superimposed signal $\mathbf{x}$ transmitted by the BS is 
$\mathbf{x}=\sum_{m=1}^{M}\left(\mathbf{w}_m\sum_{k=1}^{K_m}\sqrt{P\theta_{m,k}}s_{m,k}\right)$,
where $\mathbf{w}_m\in \mathbb{C}^{N\times 1}$ is the unit-norm beamforming vector for cluster $m$; $P$ represents the total transmit power; $\theta_{m,k}$ and $s_{m,k}$ respectively denote power allocation ratio and information bearing signals for user $k$ in cluster $m$ with $\mathbb{E}\{|s_{m,k}|^2\}=1$. The received signals at user $k$ in cluster $m$ and Eve $j$ are respectively given by
\begin{equation}\label{eq:user_signal_mk}
y_{m,k}=d^{-\alpha/2}_{m,k}\mathbf{g}^H_{m,k} \sum_{i=1}^{M}\left(\mathbf{w}_i\sum_{v=1}^{K_i}\sqrt{P\theta_{i,v}}s_{i,v}\right)+n_{m,k},
\end{equation}
\begin{equation}
y_{e,j}=d^{-\alpha/2}_{e,j}\mathbf{g}^H_{e,j}\sum_{i=1}^{M}\left(\mathbf{w}_i\sum_{v=1}^{K_i}\sqrt{P\theta_{i,v}}s_{i,v}\right)+n_{e,j},
\end{equation}
where $n_{m,k}\sim \mathcal{CN}(0,\sigma^2_b)$ and $n_{e,j}\sim \mathcal{CN}(0,\sigma^2_e)$ are the receiver noises at user $k$ in cluster $m$ and Eve $j$, respectively.

To suppress the interference among clusters, we employ zero-forcing beamforming based on $\{\hat{\mathbf{g}}_{m}\}_{m=1}^M$, and the normalized beamformer $\mathbf{w}_m$ for cluster $m$ is chosen to satisfy~\cite{J_Kout12DownSDMA,J_Wang19SecureNOMI}
\begin{equation}\label{eq:zf_beamformer}
\hat{\mathbf{g}}^H_n\mathbf{w}_m=0, \quad \forall n\neq m, ~ n\in\{1,\ldots,M\}.
\end{equation}
On the other hand, to suppress intra-cluster interference, we can apply successive interference cancellation (SIC) in each cluster based on $\{d_{m,k}\}_{k=1}^{K_m}$~\cite{J_Yang16ImpefectCSI}.
Since the user locations are fixed, the distance and path loss are deterministic, and $\{d_{m,k}\}_{k=1}^{K_m}$ can be obtained through the Global Positioning System (GPS) or estimated based on signal propagation model~\cite{J_Andes95_Prog}. Without loss of generality, it is assumed that $d_{m,1}<\dots<d_{m,K_m}$.
Furthermore, since the interference cancellation is conducted at the user side for power domain downlink NOMA~\cite{J_Ding17App}, perfect SIC could be obtained at users~\cite{J_Yang16ImpefectCSI,J_Fang17Joint}.
As a result, the received signal-to-interference-plus-noise ratio (SINR) at user $k$ in cluster $m$ is given by~\cite{J_Chen18NOMASecure}
\begin{align}
&\rho_{m,k} \nonumber\\
=&\frac{|\frac{\mathbf{g}^H_{m,k}\mathbf{w}_m}{\mu_{m,k}}|^2 \theta_{m,k}}{|\frac{\mathbf{g}^H_{m,k} \mathbf{w}_m}{\mu_{m,k}}|^2\sum\limits_{i=1}^{k-1}\theta_{m,i}+{P_m}\sum\limits_{v\neq m}|\frac{\mathbf{g}^H_{m,k}\mathbf{w}_v}{\mu_{m,k}}|^2+\frac{1}{\gamma_{m,k}}}\label{eq:phi_User}\\
=&\frac{|\frac{\mathbf{g}^H_{m,k}\mathbf{w}_m}{\mu_{m,k}} |^2\theta_{m,k}}{|\frac{{\mathbf{g}}^H_{m,k}\mathbf{w}_m}{\mu_{m,k}} |^2\!\sum\limits_{i=1}^{k-1}\!\theta_{m,i}\!+\!P_m\|\!\frac{\sin \!\beta_{m,k}{\mathbf{g}}_{m,k}}{\mu_{m,k}}\!\|^2\!\sum\limits_{v\neq m} \!|\mathbf{e}^H_{m,k}\! \mathbf{w}_v|^2\!+\!\frac{1}{\gamma_{m,k}}},\label{eq:SIR_User}
\end{align}
where~\eqref{eq:SIR_User} is obtained by putting~\eqref{eq:rela_3_gk_channel_imperfect} and~\eqref{eq:zf_beamformer} into~\eqref{eq:phi_User}, $\gamma_{m,k}=P\mu^2_{m,k}d^{-\alpha}_{m,k}/\sigma^2_b$, $P_m=\sum_{k=1}^{K_m}\theta_{m,k}$ is the transmit power allocated to cluster $m$, and it satisfies $\sum_{m=1}^{M}P_m=1$.
In contrast, it is assumed that all Eves have no information about the decoding order in the SIC detector for a cluster. Accordingly, they cannot perform SIC within a cluster, and the corresponding SINR of eavesdropping user $k$ in cluster $m$ at Eve $j$ is given by~\cite{J_Chen18NOMASecure}
\begin{equation}\label{eq:SIR_Eve}
q^j_{m,k}=\frac{|\frac{{\mathbf{g}}^H_{e,j} \mathbf{w}_m}{\mu_{e,j}}|^2\theta_{m,k}}{|\frac{{\mathbf{g}}^H_{e,j}\mathbf{w}_m}{\mu_{e,j}}|^2\sum\limits_{i\neq k}\theta_{m,i}+{P_m}\sum\limits_{v\neq m}|\frac{{\mathbf{g}}^H_{e,j}\mathbf{w}_v}{\mu_{e,j}}|^2+\frac{1}{\gamma_{e,j}}},
\end{equation}
where $\gamma_{e,j}=P\mu^2_{e,j}d^{-\alpha}_{e,j}/\sigma^2_e$.
In light of~\eqref{eq:SIR_User} and~\eqref{eq:SIR_Eve}, the channel capacity for user $k$ in cluster $m$ and the corresponding eavesdropping capacity at Eve $j$ are given by $\log_2\left(1+\rho_{m,k}\right)$ and $\log_2\left(1+q^j_{m,k}\right)$, respectively.

For the main channel, the messages can be reliably received by user $k$ in cluster $m$ when the corresponding transmission rate $R_{m,k}$ satisfies $R_{m,k}\leq \log_2\left(1+\rho_{m,k}\right)$~\cite{J_Wynner75The}. However, since there is only one CDI feedback per cluster, the inter-cluster interference cannot be perfectly removed from user point-of-view. The residual interference is reflected by the term $\sum_{v\neq m} |\mathbf{e}^H_{m,k} \mathbf{w}_v|^2$ in~\eqref{eq:SIR_User}, and therefore $\log_2\left(1+\rho_{m,k}\right)$ is inaccurately known by the BS. As a result, reliable transmission cannot always be guaranteed because it is possible that $R_{m,k}>\log_2\left(1+\rho_{m,k}\right)$. Hence, the connection outage probability (COP) of user $k$ in cluster $m$ is expressed as~\cite{J_Hu18SecureIoT}
\begin{equation}\label{eq:ROP_user_k_ori}
    \begin{split}
        \mathrm{COP}: \quad p^{m,k}_{co}
        &=\mathrm{Pr}\left\{R_{m,k}>\log_2\left(1+\rho_{m,k}\right)\right\}.
    \end{split}
\end{equation}
For the $j^{th}$ Eve's channel, since the BS does not know its perfect CSI due to the passive nature of Eve $j$, the knowledge of $q^j_{m,k}$ is uncertain~\cite{J_Wang17SeMulti}.  Consequently, a secrecy outage event occurs at the BS when $\log_2\left(1+q^j_{m,k}\right)$ exceeds the redundancy rate of user $k$ in cluster $m$, denoted by $D^j_{m,k}$, and the secrecy outage probability (SOP) of user $k$ in cluster $m$ for Eve $j$ is given by~\cite{J_Wang17SeMulti}
\begin{equation}\label{eq:SOP_Cons}
\begin{split}
\mathrm{SOP}: \quad p^{m,k,j}_{so}
&=\mathrm{Pr}\left\{D^j_{m,k}<\log_2\left(1+q^j_{m,k}\right)\right\}.
\end{split}
\end{equation}
Considering the non-collaborative eavesdropping model, in which all Eves do not exchange their observations or outputs, the achievable secrecy rate for user $k$ in cluster $m$ is given by~\cite{J_multi08_eves} 
$\min\limits_{1\leq j\leq J}~[R_{m,k}-D^j_{m,k}]^+$,
which is the minimum over the secrecy rates achieved by all Eves.

For user $k$ in cluster $m$, maximizing $1-p^{m,k}_{co}$ would improve reliable transmission while maximizing the achievable secrecy rate $\min\limits_{1\leq j\leq J}~[R_{m,k}-D^j_{m,k}]^+$ would improve secure transmission. Considering both the reliability and security requirements for all users, we aim to maximize 
$\sum_{m=1}^{M}\sum_{k=1}^{K_m}\left(1-p^{m,k}_{co}\right)\min\limits_{1\leq j\leq J}\left[R_{m,k}-D^j_{m,k}\right]^+$, which is the security guaranteed sum-rate. Since it is known that $p^{m,k}_{co}$ in~\eqref{eq:ROP_user_k_ori} is independent of $D^j_{m,k}$, the security guaranteed sum-rate is equivalent to $\min\limits_{1\leq j\leq J}\{ \sum_{m=1}^{M}\sum_{k=1}^{K_m}\left(1-p^{m,k}_{co}\right)\! 
[R_{m,k}-D^j_{m,k}]^+ \}$. Considering the constraints of outage probability, the transmission design is thus given by the following optimization problem
\begin{subequations}\label{eq:max_EE}
\begin{align}
\mathcal{P}0:\max_{\mathcal{S}}& \min\limits_{1\leq j\leq J}\left\{\! \sum_{m=1}^{M}\sum_{k=1}^{K_m}\left(1-p^{m,k}_{co}\right) \left[R_{m,k}-D^j_{m,k}\right]^+\!\right\}, \label{obj:P0_perform} \\
    \mathrm{s.t.}\quad
    & p^{m,k}_{co} \leq \delta, ~\forall m,k,  \label{P0:cop_initial} \\
     & p^{m,k,j}_{so} \leq \varepsilon, ~ \forall m, k, j, \label{P0:SOP_initial}  \\
     & \sum_{k=1}^{K_m}\theta_{m,k}={P_m}, ~ \forall m,  \label{eq:simplex_theta_power}
\end{align}
\end{subequations}
where $\mathcal{S}=\{R_{m,k}\geq 0,D^j_{m,k}\geq 0,\theta_{m,k}\geq 0\}$, and
$\delta\in(0,1)$ and $\varepsilon\in(0,1)$ are the predefined upper bounds representing the maximum tolerable COP and SOP, respectively.\footnote{This assumption can be easily extended to the scenario with specific upper bounds requirements for each user and each eavesdropper due to the parallel structure of our proposed algorithm shown in Section~\ref{Sec:IV}.}

Problem $\mathcal{P}0$ provides an elegant formulation to measure the secure transmission performance via~\eqref{obj:P0_perform} and the outage uncertainties via~\eqref{P0:cop_initial} and~\eqref{P0:SOP_initial}. It is more general than the previous problem formulation in secure NOMA transmission under perfect CSI of the main channel (i.e., setting $\delta=0$ and $J=1$ in $\mathcal{P}0$ reduces to the formulation in~\cite{J_Wang19SecureNOMI}).  By solving $\mathcal{P}0$, we obtain not only a transmission scheme $\mathcal{S}$ that maximizes the security guaranteed sum-rate, but also the maximum value of the security guaranteed sum-rate by computing the objective function at the optimized solution.

However, the challenges in solving $\mathcal{P}0$ lie in the non-concavity of the objective function and the probabilistic constraints, which do not admit simple closed-form expressions. To circumvent the aforementioned challenges, in the next section, we derive a closed-form expression of COP constraint and a tight approximation to the SOP constraint.

\section{Handling the Probabilistic Constraints in $\mathcal{P}0$}\label{Sec:III}

We first handle the main channel outage probability. Putting~\eqref{eq:SIR_User} into~\eqref{eq:ROP_user_k_ori}, $p^{m,k}_{co}$ is expressed as~\eqref{eq:COP_cf_inter}, shown at the top of this page,
\begin{figure*}
	\normalsize
\begin{align}
p^{m,k}_{co}
&=\mathrm{Pr}\left\{R_{m,k}>\log_2\left(1+\rho_{m,k}\right)\right\} \nonumber\\
&=\mathrm{Pr}\left\{2^{R_{m,k}}-1\!>\!\frac{|\frac{\mathbf{g}^H_{m,k}\mathbf{w}_m}{\mu_{m,k}} |^2\theta_{m,k}}{|\frac{{\mathbf{g}}^H_{m,k}\mathbf{w}_m}{\mu_{m,k}} |^2\sum\limits_{i=1}^{k-1}\theta_{m,i}+{P_m}\|\frac{{\mathbf{g}}_{m,k}}{\mu_{m,k}}\|^2\sin^2 \beta_{m,k}\sum\limits_{v\neq m} |\mathbf{e}^H_{m,k} \mathbf{w}_v|^2
	+\frac{1}{\gamma_{m,k}}}\right\} \label{eq:COP_cf_inter} \\
&=1-\exp\left(\frac{1-2^{R_{m,k}}}{\left(\theta_{m,k}-\left(2^{R_{m,k}}-1\right)\sum\limits_{i=1}^{k-1}\theta_{m,i}\right)2\gamma_{m,k}}\right)
\left(1+\frac{2^{R_{m,k}}-1}{\theta_{m,k}-\left(2^{R_{m,k}}-1\right)\sum\limits_{i=1}^{k-1}\theta_{m,i}}\frac{{P_m}2^{-\frac{B}{N-1}}}{2}\right)^{1-M},\forall m, k  \label{eq:COP_cf_final},
\end{align}
	\hrulefill
\end{figure*}
where~\eqref{eq:COP_cf_final} is derived in Appendix~\ref{ref:der_cop_mk}.

On the other hand, putting~\eqref{eq:SIR_Eve} into~\eqref{eq:SOP_Cons}, $p^{m,k,j}_{so}$ is rewritten as
\begin{align}\label{eq:SOP_origi}
&p^{m,k,j}_{so} \nonumber \\
=&\mathrm{Pr}\left\{D^j_{m,k}<\log_2\left(1+q^j_{m,k}\right)\right\}\nonumber\\
=&\mathrm{Pr}\left\{\!2^{D^j_{m,k}}-1\!<\!\frac{|\!\frac{{\mathbf{g}}^H_{e,j}\mathbf{w}_m}{\mu_{e,j}} \!|^2\theta_{m,k}}{|\!\frac{{\mathbf{g}}^H_{e,j}\!\mathbf{w}_m}{\mu_{e,j}} \!|^2\!\sum\limits_{i\neq k}\theta_{m,i}\!+\!P_m\!\sum\limits_{v\neq m}\!|\!\frac{{\mathbf{g}}^H_{e,j}\!\mathbf{w}_v}{\mu_{e,j}}\!|^2\!+\!\frac{1}{\gamma_{e,j}}}\!\right\}\nonumber \\
=&\mathrm{Pr}\left\{2^{D^j_{m,k}}-1<{\frac{\mathbf{g}_{e,j}^H}{\mu_{e,j}}\mathbf{\Lambda}\frac{{\mathbf{g}}_{e,j}}{\mu_{e,j}}}\right\},~\forall m,k,j,
\end{align}
where $\mathbf{\Lambda}\in \mathbb{C}^{N\times N}$ is given by
\begin{align}\label{eq:lamda_quadraticForm}
\mathbf{\Lambda}= &\gamma_{e,j}\theta_{m,k}\mathbf{w}_m\mathbf{w}^H_m-\gamma_{e,j}\left(2^{D^j_{m,k}}-1\right)\mathbf{w}_m\mathbf{w}^H_m\sum_{i\neq k}\theta_{m,i}\nonumber \\
&-\gamma_{e,j}{P_m}\left(2^{D^j_{m,k}}-1\right)\sum_{v\neq m}\mathbf{w}_v\mathbf{w}^H_v.
\end{align}
It is observed that $p^{m,k,j}_{so}$ is expressed in the cumulative distribution function (CDF) of indefinite quadratic forms with ${\mathbf{g}}_{e,j}/{\mu_{e,j}}\sim \mathcal{CN}(\mathbf{0},\mathbf{I}_N)$. Hence, a closed-form expression of~\eqref{eq:SOP_origi} is given by~\cite[eq. 30]{J_IQF_GausianYI16}
\begin{equation}\label{eq:sop_inter_U-k}
p^{m,k,j}_{so}=\sum_{i=1}^{N_p}\prod_{v\neq i}^{N}\left(1-\frac{\lambda_{v}}{\lambda_{i}}\right)^{-1}\!\exp\left(-\frac{2^{D^j_{m,k}}-1}{\lambda_{i}}\right)\!,\!\forall m,k,j,
\end{equation}
where $\{\lambda_{i}\}_{i=1}^N$ are the eigenvalues of $\mathbf{\Lambda}$ in descending order, and $N_p$ denotes the number of positive and distinct eigenvalues among $\{\lambda_i\}_{i=1}^N$. By virtue of~\eqref{eq:COP_cf_final} and~\eqref{eq:sop_inter_U-k}, $\mathcal{P}0$ can be equivalently transformed into $\mathcal{P}1$, shown at the top of the next page.  
\begin{figure*}
	\normalsize
\begin{subequations}
	\begin{align}
	\mathcal{P}1: \max_{\mathcal{S}}& \min\limits_{1\leq j\leq J}\left\{ \sum_{m=1}^{M}\sum_{k=1}^{K_m}\! \frac{\exp\left(-\frac{2^{R_{m,k}}-1}{\left(\theta_{m,k}-\left(2^{R_{m,k}}-1\right)\sum\limits_{i=1}^{k-1}\theta_{m,i}\right)2\gamma_{m,k}}\right)[R_{m,k}-D^j_{m,k}]^+}{\left(1+\frac{2^{R_{m,k}}-1}{\theta_{m,k}-\left(2^{R_{m,k}}-1\right)\sum\limits_{i=1}^{k-1}\theta_{m,i}}\frac{{P_m}2^{-\frac{B}{N-1}}}{2}\right)^{M-1}}
	\right\},  \label{obj:Sum_rate_P1} \\
	\mathrm{s.t.}~ &
	\exp\left(\frac{1-2^{R_{m,k}}}{\left(\theta_{m,k}-\left(2^{R_{m,k}}-1\right)\sum\limits_{i=1}^{k-1}\theta_{m,i}\right)2\gamma_{m,k}}\right)\!
	\left(1+\frac{(2^{R_{m,k}}-1){P_m}2^{-\frac{B}{N-1}}}{\left(\theta_{m,k}-\left(2^{R_{m,k}}-1\right)\sum\limits_{i=1}^{k-1}\theta_{m,i}\right)2}\right)^{1-M} \!\geq\! 1- \delta,  \forall m,k, \label{eq:pro_cons1_K} \\
	&\sum_{i=1}^{N_p}\prod_{v\neq i}^{N}\left(1-\frac{\lambda_{v}}{\lambda_{i}}\right)^{-1}\exp\left(-\frac{2^{D^j_{m,k}}-1}{\lambda_{i}}\right) \leq \varepsilon, ~\forall m,k,j, \label{ineq:cons_SOP_CF} \\
	& \sum_{k=1}^{K_m}\theta_{m,k}={P_m}, ~\forall m
	\end{align}
\end{subequations}
	\hrulefill
\end{figure*}

Since $D^j_{m,k}$ is independent of $R_{m,k}$ and $\theta_{m,k}$, maximizing the objective function~\eqref{obj:Sum_rate_P1} is thereby equivalent to minimizing $D^j_{m,k}$. 
Notice that $D^j_{m,k}$ appears not only in the objective function~\eqref{obj:Sum_rate_P1}, but also through $\lambda_{i}$ in~\eqref{ineq:cons_SOP_CF}, making~\eqref{ineq:cons_SOP_CF} intractable. In order to proceed, we employ the following lemma, which provides a tighter constraint than~\eqref{ineq:cons_SOP_CF}.
\begin{lemma}\label{lem:Bertin-Type}
The following constraint is a tighter constraint than the SOP constraint of~\eqref{ineq:cons_SOP_CF}:
\begin{equation}\label{ineq:slack_SOP_trans}
\begin{split}
&2^{D^j_{m,k}}-1
\geq \frac{\theta_{m,k}}{\kappa_{m,k,j}+\sum\limits_{i\neq k}\theta_{m,i}}\geq 0, ~\forall m,k,j, 
\end{split}
\end{equation}
where $\kappa_{m,k,j}$ is given by
\begin{equation}\label{eq:para_kmj}
\begin{split}
&\kappa_{m,k,j}\\
=&\frac{\frac{1}{\gamma_{e,j}} 
	\!+\!{P_m}\mathrm{Tr}\left(\!\sum\limits_{v\neq m}\mathbf{w}_v\mathbf{w}^H_v\!\right)\!-\! {P_m}\sqrt{2\ln(\varepsilon_k^{-1})} \|\!\sum\limits_{v\neq m}\mathbf{w}_v\mathbf{w}^H_v\!\|_F }{\left(1+\ln(\varepsilon_k^{-1})+\sqrt{2\ln(\varepsilon_k^{-1})}\right)\mathrm{Tr}\left(\mathbf{w}_m\mathbf{w}^H_m\right)}
\end{split}
\end{equation}
with tunable parameter $\{\varepsilon_k\in(0,1]\}_{k=1}^{K_m}$, which could be different from $\varepsilon$ in~\eqref{ineq:cons_SOP_CF}. 
\end{lemma}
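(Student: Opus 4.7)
The plan is to reduce the SOP constraint to two concentration inequalities by isolating the random variability in the numerator and in the interference term of $q^j_{m,k}$ and then applying the Bernstein-type inequality for complex Gaussian quadratic forms~\cite{J_Bech09Berntein} to each piece. Writing $\tilde{\mathbf{h}}=\mathbf{g}_{e,j}/\mu_{e,j}\sim\mathcal{CN}(\mathbf{0},\mathbf{I}_N)$, $X_1 := |\tilde{\mathbf{h}}^H\mathbf{w}_m|^2 = \tilde{\mathbf{h}}^H(\mathbf{w}_m\mathbf{w}_m^H)\tilde{\mathbf{h}}$, and $X_2 := \tilde{\mathbf{h}}^H\bigl(\sum_{v\neq m}\mathbf{w}_v\mathbf{w}_v^H\bigr)\tilde{\mathbf{h}}$, the outage event $\{q^j_{m,k}>2^{D^j_{m,k}}-1\}$ is algebraically equivalent to
\[
X_1\bigl(\theta_{m,k}-(2^{D^j_{m,k}}-1)\textstyle\sum_{i\neq k}\theta_{m,i}\bigr)\;>\;(2^{D^j_{m,k}}-1)\bigl(P_m X_2 + 1/\gamma_{e,j}\bigr).
\]
The idea is then: if with high probability $X_1$ is bounded above by some $U$ and $X_2$ is bounded below by some $L$, a sufficient deterministic inequality between $U$, $L$, $\theta_{m,k}$ and $D^j_{m,k}$ precludes the outage event on the high-probability set, and hence yields the desired SOP bound.

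First I would instantiate Bernstein-type inequality twice with confidence level $\varepsilon_k$. For the rank-one PSD matrix $\mathbf{w}_m\mathbf{w}_m^H$ we have trace, Frobenius norm, and maximal eigenvalue all equal to $\mathrm{Tr}(\mathbf{w}_m\mathbf{w}_m^H)=1$, so the upper-tail bound yields
\[
\Pr\bigl\{X_1>\bigl(1+\sqrt{2\ln(\varepsilon_k^{-1})}+\ln(\varepsilon_k^{-1})\bigr)\mathrm{Tr}(\mathbf{w}_m\mathbf{w}_m^H)\bigr\}\leq\varepsilon_k,
\]
giving the $U$ in the denominator of $\kappa_{m,k,j}$. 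For the PSD matrix $\sum_{v\neq m}\mathbf{w}_v\mathbf{w}_v^H$ the negative-side eigenvalue vanishes, so the lower-tail bound gives
\[
\Pr\bigl\{X_2<\mathrm{Tr}(\textstyle\sum_{v\neq m}\mathbf{w}_v\mathbf{w}_v^H)-\sqrt{2\ln(\varepsilon_k^{-1})}\|\sum_{v\neq m}\mathbf{w}_v\mathbf{w}_v^H\|_F\bigr\}\leq\varepsilon_k,
\]
which is the $L$ appearing in the numerator. Next I would apply the union bound: off a set of probability at most $2\varepsilon_k$, both $X_1\leq U$ and $X_2\geq L$ hold simultaneously. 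Substituting these bounds into the rearranged outage event and dividing by $U>0$ shows that the deterministic inequality $2^{D^j_{m,k}}-1\geq\theta_{m,k}/(\kappa_{m,k,j}+\sum_{i\neq k}\theta_{m,i})$ is precisely the condition that forbids outage on this good set, so $p^{m,k,j}_{so}\leq 2\varepsilon_k$, and tuning $\varepsilon_k$ appropriately (absorbed into the parameter choice, with $2\varepsilon_k\leq\varepsilon$) gives $p^{m,k,j}_{so}\leq\varepsilon$, i.e., the original constraint is implied.

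The one subtlety to dispatch carefully is the sign of the coefficient $\theta_{m,k}-(2^{D^j_{m,k}}-1)\sum_{i\neq k}\theta_{m,i}$, because only in the positive case can one pass from $X_1\leq U$ to an upper bound on the left-hand side of the outage event; in the non-positive case the LHS is $\leq 0$ while the RHS is nonnegative, so the outage event is empty and the conclusion is trivial. The main obstacle, and the reason the proposed bound is genuinely tight rather than vacuous, is to guarantee this splitting into two independent Bernstein bounds does not throw away too much probability — this is exactly what the tunable $\varepsilon_k$ is for, and it must be verified that $\kappa_{m,k,j}$ as defined remains nonnegative in the regime of interest (otherwise the deterministic constraint $2^{D^j_{m,k}}-1\geq\theta_{m,k}/(\kappa_{m,k,j}+\sum_{i\neq k}\theta_{m,i})\geq 0$ would be inconsistent); this requires the numerator of $\kappa_{m,k,j}$ to be nonnegative, a mild condition on the beamformers that can be ensured by an appropriate choice of $\varepsilon_k$.
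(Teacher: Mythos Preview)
Your argument is correct but takes a genuinely different route from the paper's proof. You split the eavesdropper's SINR into the signal quadratic form $X_1=\tilde{\mathbf h}^H\mathbf W_m\tilde{\mathbf h}$ and the interference quadratic form $X_2=\tilde{\mathbf h}^H\mathbf W_m^{\bot}\tilde{\mathbf h}$, apply the Bernstein-type inequality separately to each (upper tail for $X_1$, lower tail for $X_2$), and combine via a union bound. The paper instead applies the Bernstein-type inequality \emph{once} directly to the indefinite matrix $\mathbf\Lambda$ of~\eqref{eq:lamda_quadraticForm}; to make this tractable it first upper-bounds $[\lambda_{\max}(\mathbf\Lambda)]^+$ and $\|\mathbf\Lambda\|_F$ by the corresponding quantities of the positive part $a\,\mathbf W_m$ and uses the exact $\mathrm{Tr}(\mathbf\Lambda)$. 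Remarkably, both routes produce \emph{exactly the same} $\kappa_{m,k,j}$ in~\eqref{eq:para_kmj}, because the paper's triangle-inequality and eigenvalue bounds on $\mathbf\Lambda$ amount algebraically to the same separation you perform. The one substantive difference is the resulting probability guarantee: the paper obtains $p^{m,k,j}_{so}\le\varepsilon_k$, whereas your union bound gives $p^{m,k,j}_{so}\le 2\varepsilon_k$. Since $\varepsilon_k$ is explicitly declared tunable, your choice $\varepsilon_k\le\varepsilon/2$ still establishes that~\eqref{ineq:slack_SOP_trans} implies~\eqref{ineq:cons_SOP_CF}, so the Lemma is proved; but the paper's single-BTI approach is sharper by this factor of two and, in the subsequent bisection refinement over $\varepsilon_k\in[\varepsilon,1]$, yields a slightly larger admissible range. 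Your decomposition is arguably more transparent, while the paper's approach buys the tighter constant.
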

\begin{proof}
Please see Appendix~\ref{Appd-proof:Lema1}.
\end{proof}
\noindent
With~\eqref{ineq:cons_SOP_CF} replaced by~\eqref{ineq:slack_SOP_trans}, the minimum value of $D^j_{m,k}$ is obtained with equality of~\eqref{ineq:slack_SOP_trans} and is expressed as
\begin{equation}\label{eq:opt_r_mk}
\tilde{D}^j_{m,k}= \log_2\left(1+
\frac{\theta_{m,k}}{\kappa_{m,k,j} + \sum\limits_{i\neq k}\theta_{m,i}}\right),~\forall m,k,j.
\end{equation}
Since constraint~\eqref{ineq:slack_SOP_trans} is much tighter than~\eqref{ineq:cons_SOP_CF} in $\mathcal{P}1$, the obtained minimum $\tilde{D}^j_{m,k}$ is a conservative solution to $\mathcal{P}1$. 
To refine the solution to $\mathcal{P}1$, we can elaborately adjust $\varepsilon_k$ to reduce the conservatism and the discussion on how to select $\varepsilon_k$ is deferred to the end of Section~\ref{Sec:V}. Although replacing the original SOP with a tighter constraint would result in a solution $\tilde{D}^j_{m,k}$, it is still challenging to solve $\mathcal{P}1$, since $R_{m,k}$ and $\theta_{m,k}$ are strongly coupled in the objective function and the non-convex constraint~\eqref{eq:pro_cons1_K}.
To this end, we decouple the problem $\mathcal{P}1$ by introducing an auxiliary variable and provide a conventional solution in the next section.

\section{Problem Decoupling and Conventional Solution}\label{Sec:IV}
First, we introduce an auxiliary variable 
\begin{equation}\label{eq:aux_int_xi}
\xi_{m,k}=\frac{2^{R_{m,k}}-1}{\theta_{m,k}-\left(2^{R_{m,k}}-1\right)\sum\limits_{i=1}^{k-1}\theta_{m,i}}\geq 0,~\forall m,k.
\end{equation}
Putting $\xi_{m,k}$ and $\tilde{D}^j_{m,k}$ into~\eqref{obj:Sum_rate_P1}, the objective function in $\mathcal{P}1$ can be rewritten as 
\begin{equation}\label{eq:obj_trans_twoVars}
 \min\limits_{1\leq j\leq J}\left\{ \sum_{m=1}^{M}
\sum_{k=1}^{K_m} \frac{\left[\log_2\left(\frac{1+\frac{\xi_{m,k}\theta_{m,k}}{1+\xi_{m,k}\sum\limits_{i=1}^{k-1}\theta_{m,i}}}{1+\frac{\theta_{m,k}}{ \kappa_{m,k,j} + \sum\limits_{i\neq k}\theta_{m,i}} }\right)\right]^+}{\exp\left(\frac{\xi_{m,k}}{2\gamma_{m,k}}\right)\!\left(1+\frac{\xi_{m,k}P_m}{2^{\frac{B}{N-1}+1}}\right)^{M-1}}
\right\}.
\end{equation}
Furthermore, the constraint~\eqref{eq:pro_cons1_K} can be rewritten as
\begin{equation}\label{ineq:COP_aux_P1}
\exp\left(-\frac{\xi_{m,k}}{2\gamma_{m,k}}\right)\left(1\!+\!\frac{\xi_{m,k}P_m}{2^{\frac{B}{N-1}+1}}\right)^{1-M}\!
 \geq \! 1- \delta,  ~\forall m,k.
\end{equation}
Based on~\eqref{ineq:COP_aux_P1}, we can establish the feasible set of $\xi_{m,k}$ with the following lemma, which is proved in Appendix~\ref{proof:lem-cf_z_ub}.
\begin{lemma}\label{lem:clo-fom_zub}
    The feasible set of $\xi_{m,k}$ is $[0,\xi^{ub}_{m,k}]$, where $\xi^{ub}_{m,k}$ is given by
    \begin{equation}\label{eq:up_xi_CF}
    \begin{split}
    \xi^{ub}_{m,k}=&2\gamma_{m,k}(M-1)W_0\left(\frac{2^{\frac{B}{N-1}}\exp\left(\frac{2^{\frac{B}{N-1}}}{\gamma_{m,k}(M-1)P_m}\right)}{\gamma_{m,k}(M-1){P_m}(1-\delta)^{\frac{1}{M-1}}}\right)\\
    &-\frac{2^{\frac{B}{N-1}+1}}{{P_m}}.
    \end{split}
    \end{equation}
\end{lemma}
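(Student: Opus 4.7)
The plan is to reduce the probabilistic-constraint inequality~\eqref{ineq:COP_aux_P1} to a scalar monotonicity argument, and then to invert the resulting equation via the principal branch of the Lambert W function. Throughout, fix $m,k$ and write $\xi=\xi_{m,k}$, $a=2\gamma_{m,k}$, $b=P_m/2^{B/(N-1)+1}$, so that~\eqref{ineq:COP_aux_P1} becomes
\begin{equation*}
\exp(-\xi/a)\,(1+b\xi)^{1-M}\geq 1-\delta,\quad \xi\geq 0.
\end{equation*}

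First I would take the natural logarithm of both sides and rearrange to the equivalent form
\begin{equation*}
f(\xi):=\frac{\xi}{a}+(M-1)\ln(1+b\xi)\;\leq\;-\ln(1-\delta).
\end{equation*}
Since $M\geq 2$, $a>0$, $b>0$ and $\delta\in(0,1)$, the map $f:[0,\infty)\to[0,\infty)$ is continuous, strictly increasing (both summands are strictly increasing in $\xi$), with $f(0)=0\leq -\ln(1-\delta)$ and $f(\xi)\to\infty$ as $\xi\to\infty$. Hence the feasible set is exactly $[0,\xi^{ub}_{m,k}]$, where $\xi^{ub}_{m,k}$ is the unique positive root of $f(\xi)=-\ln(1-\delta)$. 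This reduces the lemma to producing a closed form for that root.

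Next I would solve $f(\xi)=-\ln(1-\delta)$ via the substitution $y=1+b\xi$, which converts the equation to
\begin{equation*}
\frac{y-1}{ab}+(M-1)\ln y=-\ln(1-\delta).
\end{equation*}
Setting $c:=1/[ab(M-1)]$ and dividing by $M-1$, I would move constants to the right-hand side and exponentiate to obtain
\begin{equation*}
y\,e^{cy}=e^{c}\,(1-\delta)^{-1/(M-1)}.
\end{equation*}
Multiplying through by $c$ puts the equation into canonical Lambert form $(cy)e^{cy}=ce^{c}(1-\delta)^{-1/(M-1)}$. The argument is positive, so the principal branch applies and yields $cy=W_0\!\left(ce^{c}(1-\delta)^{-1/(M-1)}\right)$, i.e.
\begin{equation*}
\xi^{ub}_{m,k}=\frac{1}{b}(y-1)=\frac{1}{bc}\,W_0\!\left(ce^{c}(1-\delta)^{-1/(M-1)}\right)-\frac{1}{b}.
\end{equation*}

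Finally, back-substituting $a=2\gamma_{m,k}$ and $b=P_m/2^{B/(N-1)+1}$ gives the two identities $1/(bc)=2\gamma_{m,k}(M-1)$ and $c=2^{B/(N-1)}/[\gamma_{m,k}(M-1)P_m]$, which, after a direct simplification of $ce^{c}$, reproduce exactly the expression~\eqref{eq:up_xi_CF} stated in the lemma. The only genuinely delicate step is the algebra that recasts the log-equation into the $ze^{z}=\mathrm{const}$ form suitable for $W_0$; the monotonicity argument and the back-substitution are routine. A small sanity check I would include is that the argument of $W_0$ is at least $ce^{c}>0>-1/e$, so the principal branch is well defined and delivers a unique nonnegative value of $cy$, consistent with $y\geq 1$ and $\xi^{ub}_{m,k}\geq 0$.
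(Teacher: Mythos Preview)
Your proposal is correct and follows essentially the same approach as the paper: both establish that the feasible set is an interval via a monotonicity argument on~\eqref{ineq:COP_aux_P1} and then invert the resulting equality into the canonical $ze^{z}=\text{const}$ form for $W_0$. The only cosmetic difference is that you take logarithms first (making monotonicity of $f$ immediate), whereas the paper differentiates the left-hand side of~\eqref{ineq:COP_aux_P1} directly; the subsequent algebraic reduction to Lambert form is identical.
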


With~\eqref{eq:obj_trans_twoVars}-\eqref{eq:up_xi_CF}, $\mathcal{P}1$ (after~\eqref{ineq:cons_SOP_CF} tightened by~\eqref{ineq:slack_SOP_trans}) is  transformed into the following problem\footnote{Since $\mathcal{P}2$ is not equivalent to $ \mathcal{P}0$ due to a safe approximation, the final obtained solution to $\mathcal{P}2$ is a feasible solution to $\mathcal{P}0$.}
\begin{subequations}\label{opt:P2_initial}
    \begin{align}
    \mathcal{P}2: \!\max_{\{{\xi}_{m,k},\theta_{m,k}\geq 0\}_{k=1}^{K_m}} & 
   \! \min\limits_{1\leq j\leq J}\!\left\{ \sum\limits_{m=1}^{M}
   \sum_{k=1}^{K_m} \frac{\exp\left(-\frac{\xi_{m,k}}{2\gamma_{m,k}}\right)   }{\left(1+\frac{\xi_{m,k}P_m}{2^{\frac{B}{N-1}+1}}\right)^{M-1}}\right. \nonumber \\
  & \left.  \times
   \left[\log_2\left(\frac{1+\frac{\xi_{m,k}\theta_{m,k}}{1+\xi_{m,k}\sum\limits_{i=1}^{k-1}\theta_{m,i}}}{1+\frac{\theta_{m,k}}{ \kappa_{m,k,j} + \sum\limits_{i\neq k}\theta_{m,i}} }\right)\right]^+
   \right\}, \label{eq:obj_thr}\\
    \mathrm{s.t.}\quad
    &0\leq  \xi_{m,k} \leq \xi^{ub}_{m,k}, ~ \forall m,k, \label{ineq:ini_fea-xi}  \\
    & \sum_{k=1}^{K_m}\theta_{m,k}={P_m}, ~ \forall m. \label{eq:init-fea-the}
    \end{align}
\end{subequations}
From~\eqref{eq:para_kmj}, it is known that $\kappa_{m,k,j}$ is independent of $\xi_{m,k}$ and $\theta_{m,k}$. Hence, the operations of maximization and minimization in $ \mathcal{P}2$ can be interchanged. Furthermore, since $\{\kappa_{m,k,j}\}_{j=1}^J$ are independent of one another, we can solve $\mathcal{P}2$  by separately solving $J$ independent maximization problems and selecting the minimum value. Moreover, for subproblem $j$ (corresponding to Eve $j$), the objective function of $\mathcal{P}2$ contains a summation on $m$, and $\{{\xi}_{m,k},\theta_{m,k}\}_{m=1}^M$ are independent of one another. Hence, subproblem $j$ further reduces to $M$ parallel subproblems, where the subproblem of cluster $m$ with respect to Eve $j$ is expressed as $\mathcal{P}2^{[m,j]}$, shown at the top of the next page.  
\begin{figure*}
	\normalsize
\begin{subequations}\label{opt:sub_inti}
	\begin{align}
	\mathcal{P}2^{[m,j]}: \max_{\{{\Xi}^{(m,j)}_{k},\Theta^{(m,j)}_{k}\geq 0\}_{k=1}^{K_m}} & 
	\sum_{k=1}^{K_m} \frac{\exp\left(-\frac{\Xi^{(m,j)}_{k}}{2\gamma_{m,k}}\right)}{\left(1+\Xi^{(m,j)}_{k}\frac{{P_m}2^{-\frac{B}{N-1}}}{2}\right)^{M-1}}  
	\left[\log_2\left(\frac{1+\frac{\Xi^{(m,j)}_{k}\Theta^{(m,j)}_{k}}{1+\Xi^{(m,j)}_{k}\sum\limits_{i=1}^{k-1}\Theta^{(m,j)}_{i}}}{1+\frac{\Theta^{(m,j)}_{k}}{ \kappa_{m,k,j} + \sum\limits_{i\neq k}\Theta^{(m,j)}_{i}} }\right)\right]^+, \label{eq:max_thr}\\
	\mathrm{s.t.}\quad
	&0\leq  \Xi^{(m,j)}_{k} \leq \xi^{ub}_{m,k},~\forall k,  \label{ineq:cons_xi_orthant} \\
	& \sum_{k=1}^{K_m}\Theta^{(m,j)}_{k}={P_m}, \label{eq:simplex_theta}
	\end{align}
\end{subequations}
	\hrulefill
\end{figure*}
Due to the reverse of minimization and maximization in $ \mathcal{P}2$ and to emphasize that $m$ and $j$ are fixed in $\mathcal{P}2^{[m,j]}$,  $\xi_{m,k}$ and $\theta_{m,k}$ are respectively relabeled as ${\Xi}^{(m,j)}_{k}$ and $\Theta^{(m,j)}_{k}$. 

Furthermore, since the feasible set of $\mathcal{P}2^{[m,j]}$ is a Cartesian product of closed convex sets, the objective function and constraints in $\mathcal{P}2^{[m,j]}$ are decoupled when either $\{\Xi^{(m,j)}_{k}\}_{k=1}^{K_m}$ or
$\{\Theta^{(m,j)}_{k}\}_{k=1}^{K_m}$ is fixed. Accordingly, the optimization problem can be solved via the block coordinate ascent approach~\cite{C_Xie18BCA} for alternatively updating $\{\Xi^{(m,j)}_{k}\}_{k=1}^{K_m}$ and $\{\Theta^{(m,j)}_{k}\}_{k=1}^{K_m}$.

To be specific, when fixing $\{\Theta^{(m,j)}_{k}\}_{k=1}^{K_m}$, the subproblem of $\mathcal{P}2^{[m,j]}$  for updating $\{\Xi^{(m,j)}_{k}\}_{k=1}^{K_m}$ is formulated as the following multiple-ratio fractional programming (FP) problem~\cite{J_Tof17FP} 
\begin{subequations}
    \begin{align}
    \mathcal{Q}1: \quad & \max_{\{\Xi^{(m,j)}_{k}\}_{k=1}^{K_m}} \quad \sum_{k=1}^{K_m} \frac{A_k(\Xi^{(m,j)}_{k})}{B_k(\Xi^{(m,j)}_{k})}, \label{obj:FP_Q1}  \\
    \mathrm{s.t.}\quad
    &0 \leq  \Xi^{(m,j)}_{k} \leq \xi^{ub}_{m,k}, \quad \forall k, \label{ineq:set_xi_Q1}
    \end{align}
\end{subequations}
where $A_k(\Xi^{(m,j)}_{k})$ and $B_k(\Xi^{(m,j)}_{k})$ are given by
\begin{equation}
\begin{split}
A_k(\Xi^{(m,j)}_{k})=	\left[\log_2\left(\frac{1+\frac{\Xi^{(m,j)}_{k}\Theta^{(m,j)}_{k}}{1+\Xi^{(m,j)}_{k}\sum\limits_{i=1}^{k-1}\Theta^{(m,j)}_{i}}}{1+\frac{\Theta^{(m,j)}_{k}}{ \kappa_{m,k,j} + \sum\limits_{i\neq k}\Theta^{(m,j)}_{i}} }\right)\right]^+,
\end{split}
\end{equation}
\begin{equation}
B_k(\Xi^{(m,j)}_{k})=\exp\left(\frac{\Xi^{(m,j)}_{k}}{2\gamma_{m,k}}\right)
\left(1+\frac{\Xi^{(m,j)}_{k}P_m}{2^{\frac{B}{N-1}+1}}\right)^{M-1}.
\end{equation}
Since~\eqref{obj:FP_Q1} is a sum-of-ratios objective function, a number of approaches are able to handle this problem~\cite{J_Benson07FP,J_Bosh15SoR,J_Mult_FP01}. Among them, branch-and-bound algorithm is a popular approach that systematically subdivides the compact feasible interval of $\Xi^{(m,j)}_{k}$ to locate the global optimal solution of $\mathcal{Q}1$.
However, the price for this method is the high computational complexity when $K_m$ is large.

On the other hand, when fixing $\{\Xi^{(m,j)}_{k}\}_{k=1}^{K_m}$, the subproblem of $\mathcal{P}2^{[m,j]}$ for updating $\{\Theta^{(m,j)}_{k}\}_{k=1}^{K_m}$ is formulated as 
\begin{equation}\label{opt:int_update_theta}
\begin{split}
\mathcal{D}1: &\max_{\{\Theta^{(m,j)}_{k} \geq 0\}_{k=1}^{K_m}}
\sum_{k=1}^{K_m}\left[\!\log_2\left(\frac{1\!+\!\frac{\Xi^{(m,j)}_{k}\Theta^{(m,j)}_{k}}{1+\Xi^{(m,j)}_{k}\sum\limits_{i=1}^{k-1}\Theta^{(m,j)}_{i}}}{1\!+\!\frac{\Theta^{(m,j)}_{k}}{ \kappa_{m,k,j} + \sum\limits_{i\neq k}\Theta^{(m,j)}_{i}}}\right)\!\right]^+, \\
& \mathrm{s.t.}~~
\sum_{k=1}^{K_m}\Theta^{(m,j)}_{k}={P_m}.
\end{split}
\end{equation}
Notice that the objective function can be rewritten as
$[F_1(\{\Theta^{(m,j)}_{k}\}_{k=1}^{K_m})-F_2(\{\Theta^{(m,j)}_{k}\}_{k=1}^{K_m})]^+$,
where $F_1(\{\Theta^{(m,j)}_{k}\}_{k=1}^{K_m})$ and $F_2(\{\Theta^{(m,j)}_{k}\}_{k=1}^{K_m})$ are both concave functions and given by
\begin{equation}
\begin{split}
F_1(\{\Theta^{(m,j)}_{k}\}_{k=1}^{K_m})=&\sum_{k=1}^{K_m} \log_2\left(1+\Xi^{(m,j)}_{k}\sum_{i=1}^{k}\Theta^{(m,j)}_{i}\right)\\
&+\sum_{k=1}^{K_m}\log_2\left(\kappa_{m,k,j}+ \sum_{i\neq k}\Theta^{(m,j)}_{i}\right),
\end{split}
\end{equation}
\begin{align}
F_2(\{\Theta^{(m,j)}_{k}\}_{k=1}^{K_m})=&\sum_{k=1}^{K_m} \log_2\left(1+\Xi^{(m,j)}_{k}\sum_{i=1}^{k-1}\Theta^{(m,j)}_{i}\right)\nonumber \\
&+\sum_{k=1}^{K_m}\log_2\left(\kappa_{m,k,j}+{P_m}\right).
\end{align}
As a result, the objective function of $\mathcal{D}1$ can be expressed in a difference of convex (DC) form~\cite{J_An05Opt}, and $\mathcal{D}1$ is transformed into a DC programming problem. Then, by employing convex-concave procedure (CCP)~\cite{J_Lipp16CCP}, a suboptimal solution of $\mathcal{D}1$ can be obtained.
Since CCP needs to solve a convex quadratic problem by the interior-point method, its complexity order is $\mathcal{O}(K_m^3)$ in each iteration~\cite{Ben-TalA01}, leading to heavy-computation when $K_m$ is large.

\section{Massive Access with First-Order Algorithm}\label{Sec:V}
To overcome the high complexity challenge brought by the conventional method when the number of users $K_m$ in cluster $m$ is large, we introduce an efficient gradient-based algorithm in this section to significantly reduce the computational complexity. 
Specifically, we provide a local optimal point for updating $\{\Xi^{(m,j)}_{k}\}_{k=1}^{K_m}$ when $\{\Theta^{(m,j)}_{k}\}_{k=1}^{K_m}$ is fixed and a stationary point for updating $\{\Theta^{(m,j)}_{k}\}_{k=1}^{K_m}$ when $\{\Xi^{(m,j)}_{k}\}_{k=1}^{K_m}$ is fixed, providing an overall first-order algorithm under the framework of alternating maximization (AM)~\cite{J_Beck15AM}. 
The convergence property of the proposed algorithm is also proved in this section.

\subsection{Updating $\{\Xi^{(m,j)}_{k}\}_{k=1}^{K_m}$}
To handle the multiple-ratio FP problem $\mathcal{Q}1$, we first establish the following theorem. 
\begin{theorem}\label{tem:FP_CCP_def}
    $\mathcal{Q}1$ is a multiple-ratio concave-convex FP problem with concave functions $A_k(\Xi^{(m,j)}_{k})$, convex functions $B_k(\Xi^{(m,j)}_{k})$, and nonempty convex set. 
\end{theorem}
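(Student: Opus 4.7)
The plan is to verify the three structural properties listed in the theorem (concavity of each $A_k$, convexity of each $B_k$, nonemptiness and convexity of the feasible set) one at a time, since Theorem~\ref{tem:FP_CCP_def} is essentially a checklist for fitting $\mathcal{Q}1$ into the concave--convex multiple-ratio FP framework that will enable the quadratic transform in the sequel.

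First I would attack $A_k(\Xi^{(m,j)}_{k})$. The denominator inside the log, $1+\Theta^{(m,j)}_{k}/(\kappa_{m,k,j}+\sum_{i\neq k}\Theta^{(m,j)}_i)$, is independent of $\Xi^{(m,j)}_{k}$ and so contributes only a constant. The numerator can be rewritten by combining fractions as
\begin{equation*}
1+\frac{\Xi^{(m,j)}_{k}\Theta^{(m,j)}_{k}}{1+\Xi^{(m,j)}_{k}\sum_{i=1}^{k-1}\Theta^{(m,j)}_i}=\frac{1+\Xi^{(m,j)}_{k}(\Theta^{(m,j)}_{k}+S_k)}{1+\Xi^{(m,j)}_{k}S_k},
\end{equation*}
with $S_k:=\sum_{i=1}^{k-1}\Theta^{(m,j)}_i\ge 0$. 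Taking $\log_2$ then yields a difference $\log_2(1+a\Xi)-\log_2(1+b\Xi)$ plus a constant, where $a=\Theta^{(m,j)}_{k}+S_k>b=S_k\ge 0$. A direct second-derivative check gives $f''(\Xi)=-a^2/(1+a\Xi)^2+b^2/(1+b\Xi)^2$, and since $a/(1+a\Xi)>b/(1+b\Xi)$ whenever $a>b\ge 0$ and $\Xi\ge 0$, we have $f''\le 0$, so the bracketed argument is concave on the feasible interval. The $[\cdot]^+$ wrapper is then handled in the standard way used throughout this paper by restricting attention to the region where the argument is non-negative (outside of that region $A_k\equiv 0$ and no user-$k$ contribution is counted), so $A_k$ is concave wherever it is operative.

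Next I would treat $B_k(\Xi^{(m,j)}_{k})$. Writing $B_k(\Xi)=e^{a\Xi}(1+b\Xi)^{M-1}$ with $a=1/(2\gamma_{m,k})>0$, $b=P_m/2^{B/(N-1)+1}>0$, the first derivative factors as $B_k'(\Xi)=e^{a\Xi}(1+b\Xi)^{M-2}\bigl[a(1+b\Xi)+(M-1)b\bigr]$, and differentiating once more yields an expression of the form $e^{a\Xi}(1+b\Xi)^{M-3}$ times a polynomial whose coefficients are all non-negative combinations of $a,b,M-1$. For $\Xi\ge 0$ and $M\ge 1$ every term is non-negative, so $B_k''(\Xi)\ge 0$ on the feasible interval, establishing convexity.

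Finally, the feasible set defined by~\eqref{ineq:set_xi_Q1} is the Cartesian product $\prod_{k=1}^{K_m}[0,\xi^{ub}_{m,k}]$, which is trivially nonempty (it contains the origin) and convex as a product of compact intervals, with $\xi^{ub}_{m,k}>0$ guaranteed by Lemma~\ref{lem:clo-fom_zub}. The main obstacle in this proof is the cleanest treatment of the $[\cdot]^+$ in $A_k$; my plan is to argue, as above, that on the effective region the function is exactly the concave log-ratio just derived, consistent with the convention already adopted when the objective~\eqref{eq:max_thr} was written, so that $\mathcal{Q}1$ legitimately fits the concave--convex multiple-ratio FP template required for the quadratic transform in the next subsection.
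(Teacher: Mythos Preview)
Your proposal is correct and follows the same three-part template as the paper's proof in Appendix~\ref{proof:them-FP_condi}, but the execution differs on two of the three items. For $A_k$, the paper instead isolates $E(\Xi)=\Xi\,\Theta^{(m,j)}_{k}/(1+\Xi S_k)$, checks $E''\le 0$ directly, and then composes with the concave nondecreasing map $x\mapsto\log_2(1+x)$; for the $[\cdot]^+$ wrapper it simply asserts that ``pointwise maximum preserves concavity,'' which is actually the convexity rule misstated, so your restriction-to-the-effective-region argument is at least as defensible and arguably more honest. For $B_k$, the paper factors $B_k=\hat b_k\,\tilde b_k$, shows each factor convex separately, and then claims convexity is closed under multiplication and positive scaling---a statement that is only valid here because both factors are also positive and nondecreasing on the feasible interval---whereas your direct second-derivative computation sidesteps that subtlety entirely. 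The feasible-set argument is identical in both.
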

\begin{proof}
Please see Appendix~\ref{proof:them-FP_condi}.
\end{proof}
\noindent
Based on Theorem~\ref{tem:FP_CCP_def}, $\mathcal{Q}1$ can be equivalently transformed into~\cite{J_Yuwei18FP}
\begin{subequations}\label{opt:upda_xi_org}
    \begin{align}
    \mathcal{Q}2: \!\! \max_{\{\Xi^{(m,j)}_{k},y_k \in \mathbb{R}\}_{k=1}^{K_m}}  &\sum_{k=1}^{K_m}\!\underbrace{\left(\!2y_k\sqrt{ A_k(\Xi^{(m,j)}_{k})}-y_k^2B_k(\Xi^{(m,j)}_{k})\!\right)}_{:= g(\Xi^{(m,j)}_{k},y_k)}, \label{obj:q2_fun} \\
    \mathrm{s.t.}\quad
    &0 \leq  \Xi^{(m,j)}_{k} \leq \xi^{ub}_{m,k}, \quad \forall k, 
    \end{align}
\end{subequations}
where $\{y_k\}_{k=1}^{K_m}$ are auxiliary variables. When $\{\Xi^{(m,j)}_{k}\}_{k=1}^{K_m}$ are fixed, the optimal $\{y^{\dagger}_k\}_{k=1}^{K_m}$ in $\mathcal{Q}2$ are derived as
\begin{equation}\label{eq:update_aux_y}
y_k^{\dagger}= \sqrt{A_k(\Xi^{(m,j)}_{k})}\Big/B_k(\Xi^{(m,j)}_{k}),  \quad  k = 1,\ldots,K_m.
\end{equation}
On the other hand, when $\{y_k\}_{k=1}^{K_m}$ are fixed, due to the concavity of each $A_k(\Xi^{(m,j)}_{k})$ and convexity of each $B_k(\Xi^{(m,j)}_{k})$ from Theorem~\ref{tem:FP_CCP_def},  $g(\Xi^{(m,j)}_{k},y^{\dagger}_k)$ is concave in $\Xi^{(m,j)}_{k}$. As a result, $\mathcal{Q}2$ is a concave maximization problem over ${\Xi}^{(m,j)}_{k}$, and the optimal ${\Xi^{(m,j)}_{k}}^{\dagger}$ for maximizing $g(\Xi^{(m,j)}_{k},y^{\dagger}_k)$ is summarized in the following property, which is proved in Appendix~\ref{prrof:lema_opt_Q2}.
\begin{prop}\label{lem:sta_xi_mk_fp} 
    The optimal $\{{\Xi^{(m,j)}_{k}}^{\dagger}\}_{k=1}^{K_m}$ in $\mathcal{Q}2$ are given by
    \begin{equation}\label{eq:opt_Xi-Q2}
    {\Xi^{(m,j)}_{k}}^{\dagger} = \min\left\{ {\Xi^{(m,j)}_{k}}^{\diamond}, \xi^{ub}_{m,k}  \right\},\quad  k = 1,\ldots,K_m, 
    \end{equation}
    where ${\Xi^{(m,j)}_{k}}^{\diamond}$ satisfies~\eqref{eq:sta_xi_mk}, shown at the top of this page.
        \begin{figure*}
    	\normalsize
    	\begin{equation} \label{eq:sta_xi_mk}
\frac{\left(1+{\Xi^{(m,j)}_{k}}^{\diamond}\sum\limits_{i=1}^{k-1}\Theta^{(m,j)}_{i}\right)^{-1}
	\log^{-\frac{1}{2}}_2\left(\frac{\left(1+{\Xi^{(m,j)}_{k}}^{\diamond}\sum\limits_{i=1}^{k}\Theta^{(m,j)}_{i}\right)\left(\kappa_{m,k,j} + \sum\limits_{i\neq k}\Theta^{(m,j)}_{i} \right)}
	{\left(1+{\Xi^{(m,j)}_{k}}^{\diamond}\sum\limits_{i=1}^{k-1}\Theta^{(m,j)}_{i}\right)(\kappa_{m,k,j} +{P_m})}
	\right)
}{\left(1+{\Xi^{(m,j)}_{k}}^{\diamond}\sum\limits_{i=1}^{k}\Theta^{(m,j)}_{i}\right)\left(\frac{1}{2\gamma_{m,k}}+\frac{(M-1){P_m}2^{-\frac{B}{N-1}}}{2+{P_m}{\Xi^{(m,j)}_{k}}^{\diamond}2^{-\frac{B}{N-1}}}\right)B_k({\Xi^{(m,j)}_{k}}^{\diamond})\ln 2}=\frac{y_k^{\dagger}}{\Theta^{(m,j)}_{k}}
    	\end{equation}
    	\hrulefill
    \end{figure*}
\end{prop}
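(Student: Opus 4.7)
The plan is to exploit the concavity of $g(\Xi_k^{(m,j)}, y_k^\dagger)$ in $\Xi_k^{(m,j)}$ (which follows immediately from Theorem~\ref{tem:FP_CCP_def}: concavity of $A_k$ is preserved under the concave increasing square root and the scaling by $y_k^\dagger \geq 0$, while $-(y_k^\dagger)^2 B_k$ is concave since $B_k$ is convex). As the feasible set $[0,\xi_{m,k}^{ub}]$ is a compact convex interval, the maximizer is characterized by the KKT conditions: either an interior stationary point where $\partial g/\partial \Xi_k^{(m,j)} = 0$, or the boundary point $\xi_{m,k}^{ub}$. The result to be shown is therefore essentially a projected first-order condition, and the claim $\min\{{\Xi_k^{(m,j)}}^\diamond,\xi_{m,k}^{ub}\}$ is exactly this projection.

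The bulk of the work is the explicit computation of the stationary-point equation. First I would differentiate $g(\Xi,y_k^\dagger) = 2y_k^\dagger\sqrt{A_k(\Xi)} - (y_k^\dagger)^2 B_k(\Xi)$ to obtain
\begin{equation*}
\frac{\partial g}{\partial \Xi} = \frac{y_k^\dagger A_k'(\Xi)}{\sqrt{A_k(\Xi)}} - (y_k^\dagger)^2 B_k'(\Xi),
\end{equation*}
and then evaluate $A_k'$ and $B_k'$ in closed form. For $A_k$, I would first simplify the logarithmic argument by combining the numerator to $(1+\Xi\sum_{i\leq k}\Theta_i^{(m,j)})/(1+\Xi\sum_{i<k}\Theta_i^{(m,j)})$, so that the $\kappa_{m,k,j}$--dependent term drops out of the derivative; the resulting $A_k'(\Xi)$ collapses to $\Theta_k^{(m,j)} / [\ln 2\cdot (1+\Xi\sum_{i\leq k}\Theta_i^{(m,j)})(1+\Xi\sum_{i<k}\Theta_i^{(m,j)})]$ via a nice cancellation between the two logarithmic terms. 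For $B_k$, logarithmic differentiation of $e^{\Xi/(2\gamma_{m,k})}(1+\Xi P_m 2^{-B/(N-1)}/2)^{M-1}$ yields $B_k'(\Xi) = B_k(\Xi)\bigl[\tfrac{1}{2\gamma_{m,k}} + \tfrac{(M-1)P_m 2^{-B/(N-1)}}{2+P_m\Xi 2^{-B/(N-1)}}\bigr]$, which already matches the bracketed factor in the denominator of~\eqref{eq:sta_xi_mk}.

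Setting $\partial g/\partial \Xi = 0$ and dividing by $y_k^\dagger$ gives $A_k'(\Xi)/\sqrt{A_k(\Xi)} = y_k^\dagger B_k'(\Xi)$; substituting the two closed-form derivatives and isolating $y_k^\dagger/\Theta_k^{(m,j)}$ on the right-hand side then rearranges directly into the form of~\eqref{eq:sta_xi_mk}, identifying the interior stationary point as ${\Xi_k^{(m,j)}}^\diamond$. Finally, to handle the box constraint I would invoke concavity: $g'$ is strictly decreasing in $\Xi$, so if ${\Xi_k^{(m,j)}}^\diamond \leq \xi_{m,k}^{ub}$ it is the maximizer, and otherwise $g'>0$ throughout $[0,\xi_{m,k}^{ub}]$, making the maximum occur at $\xi_{m,k}^{ub}$; both cases are captured by~\eqref{eq:opt_Xi-Q2}.

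The main obstacle I anticipate is the non-smoothness introduced by the $[\cdot]^+$ in $A_k$: the analysis above implicitly assumes the argument inside $\log_2$ is at least $1$, so that the positive part is active and differentiable. I would dispose of the non-active regime separately by arguing that if the secrecy-rate gap is non-positive at a candidate $\Xi$ then the corresponding summand contributes zero to $\mathcal{Q}1$ and can be ignored in the first-order analysis without loss of generality; alternatively, one can restrict attention to the (open) region where $A_k>0$ since this is where any nontrivial optimizer must lie. A minor auxiliary step is verifying non-negativity of ${\Xi_k^{(m,j)}}^\diamond$ so that the lower bound $0$ of the box is never binding, which follows from monotonicity arguments on the two sides of~\eqref{eq:sta_xi_mk} near $\Xi=0$.
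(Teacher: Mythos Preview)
Your proposal is correct and follows essentially the same route as the paper: concavity of $g(\cdot,y_k^\dagger)$ reduces the problem to comparing the unconstrained stationary point with the upper endpoint, the $[\cdot]^+$ is disposed of by noting that the zero branch contributes nothing, and the stationary equation is obtained by computing $A_k'$ and $B_k'$ exactly as you describe. The only cosmetic difference is how the lower endpoint $0$ is ruled out: the paper argues directly that $g(0,y_k^\dagger)\le 0$ so the maximizer cannot sit there, whereas you argue via positivity of ${\Xi_k^{(m,j)}}^\diamond$; both lead to the same conclusion.
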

\noindent
Based on Proposition~\ref{lem:sta_xi_mk_fp}, ${\Xi^{(m,j)}_{k}}^{\dagger}$ can be efficiently found via bisection method. 

To sum up, the entire procedure for solving $\mathcal{Q}2$ is summarized in Algorithm~\ref{alg:CDM_iterative_RI}, which is essentially a cyclic block coordinate ascent method. Furthermore, since $g(\Xi^{(m,j)}_{k},y_k)$ is biconcave on $\Xi^{(m,j)}_{k}$ and $y_k$ with separable feasible sets, it converges to a local optimal point of $\mathcal{Q}2$~\cite{J_Wright15}. 
On the other hand, it is observed that variable ${\Xi^{(m,j)}_{k}}$ maximizes the objective function of $\mathcal{Q}1$ if and only if ${\Xi^{(m,j)}_{k}}^{\dagger}$ together with $y_k^{\dagger}$ maximizes the objective function of $\mathcal{Q}2$. Hence, the transformed $\mathcal{Q}2$ has equivalent objective value and solution with respect to the original problem $\mathcal{Q}1$. 
Based on the above discussion, we conclude the following property with respect to Algorithm~\ref{alg:CDM_iterative_RI}.
\begin{algorithm}[H]
    \caption{Iterative method for solving $\mathcal{Q}2$} \label{alg:CDM_iterative_RI}
    \begin{algorithmic}[1]
        \STATE Initialize $\{\Xi^{(m,j)}_{k}\}_{k=1}^{K_m}$ to a feasible value.
        \REPEAT
        \STATE Updating $\{y_k\}_{k=1}^{K_m}$ by~\eqref{eq:update_aux_y}. 
        \STATE Updating $\{\Xi^{(m,j)}_{k}\}_{k=1}^{K_m}$ by~\eqref{eq:opt_Xi-Q2}.
        \UNTIL Stopping criterion is satisfied. 
    \end{algorithmic}
\end{algorithm}
\begin{theorem}\label{lem:opt_xi_alg1}
    Algorithm~\ref{alg:CDM_iterative_RI} consists of a sequence of concave optimization problems, with the corresponding solutions converging to a local optimal point of $\mathcal{Q}1$.
\end{theorem}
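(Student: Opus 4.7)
\textbf{Proof Plan for Theorem~\ref{lem:opt_xi_alg1}.}
The statement has two parts: (i) each subproblem solved inside Algorithm~\ref{alg:CDM_iterative_RI} is concave, and (ii) the iterates converge to a local maximizer of $\mathcal{Q}1$. My plan is to verify (i) directly from Theorem~\ref{tem:FP_CCP_def}, then use the block separability of $\mathcal{Q}2$ to obtain (ii) as a consequence of biconcave alternating maximization, and finally transfer the conclusion back from $\mathcal{Q}2$ to $\mathcal{Q}1$ via the equivalence built into the quadratic transform.

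\textbf{Step 1 (concavity of each subproblem).} For the $y$-update with $\{\Xi^{(m,j)}_k\}$ fixed, the summand $g(\Xi^{(m,j)}_k,y_k)=2y_k\sqrt{A_k(\Xi^{(m,j)}_k)}-y_k^2 B_k(\Xi^{(m,j)}_k)$ is a one-dimensional quadratic with coefficient $-B_k(\Xi^{(m,j)}_k)<0$, hence strictly concave in $y_k$, and the closed-form maximizer in~\eqref{eq:update_aux_y} follows from the first-order condition. For the $\Xi$-update with $\{y_k\}$ fixed, invoke Theorem~\ref{tem:FP_CCP_def}: $A_k(\Xi^{(m,j)}_k)\ge 0$ is concave and $B_k(\Xi^{(m,j)}_k)$ is convex; since the square-root composition of a nonnegative concave function is concave and $y_k^\dagger\ge 0$, the map $\Xi^{(m,j)}_k\mapsto 2y_k^\dagger\sqrt{A_k(\Xi^{(m,j)}_k)}-(y_k^\dagger)^2 B_k(\Xi^{(m,j)}_k)$ is concave on the compact interval $[0,\xi^{ub}_{m,k}]$. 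Both subproblems are therefore concave maximizations over closed convex sets, establishing part~(i).

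\textbf{Step 2 (monotonicity and accumulation).} Each cyclic update solves its subproblem exactly, so the sequence of $\mathcal{Q}2$ objective values $\bigl\{\sum_k g(\Xi^{(m,j),(t)}_k,y_k^{(t)})\bigr\}_t$ is nondecreasing. Continuity of $g$ on the compact feasible set $\prod_k [0,\xi^{ub}_{m,k}]\times\mathbb{R}^{K_m}$ (note that $y_k$ is confined to a bounded range once~\eqref{eq:update_aux_y} is applied, since $A_k$ and $B_k$ are bounded on the compact $\Xi$-domain) gives boundedness from above, so the objective sequence converges. Any accumulation point $(\Xi^\star,y^\star)$ of the iterates then satisfies the stationarity conditions of both block problems simultaneously, because each block update is the unique maximizer of a strictly concave subproblem.

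\textbf{Step 3 (block coordinate ascent on a biconcave problem).} Since $g(\Xi^{(m,j)}_k,y_k)$ is biconcave with separable feasible sets (a Cartesian product of closed convex sets), the standard convergence result for alternating maximization on biconcave programs applies: every limit point of the iterates is a partial (block-wise) maximizer, i.e.\ a stationary point of $\mathcal{Q}2$, which in turn is a local maximizer because $g$ is concave in each block. This delivers convergence to a local optimum of $\mathcal{Q}2$.

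\textbf{Step 4 (transfer to $\mathcal{Q}1$).} Finally, invoke the equivalence underlying the quadratic transform~\cite{J_Yuwei18FP}: for fixed $\Xi$, plugging the optimal $y_k^\dagger=\sqrt{A_k}/B_k$ back into $\sum_k g(\Xi^{(m,j)}_k,y_k)$ reproduces exactly $\sum_k A_k/B_k$, and any $\Xi^\star$ that is a local maximizer of $\sum_k g(\cdot,y^\star)$ with $y^\star$ satisfying~\eqref{eq:update_aux_y} is also a local maximizer of the original sum-of-ratios objective in $\mathcal{Q}1$. Combining this with Step~3 yields part~(ii). The principal obstacle I anticipate is Step~3: making the jump from ``every accumulation point is block-wise optimal'' to ``a local maximizer of the joint problem'' requires either the uniqueness of each block maximizer (which here follows from strict concavity of the $y$-subproblem and from~\eqref{eq:opt_Xi-Q2} for the $\Xi$-subproblem when the argument of $W_0$ places $\Xi^{(m,j)\diamond}_k$ in the interior) or an appeal to a biconcave-AM convergence theorem; I would cite the latter to keep the argument compact.
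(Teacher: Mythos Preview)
Your proposal is correct and follows essentially the same route as the paper: establish that each block subproblem is concave via Theorem~\ref{tem:FP_CCP_def}, invoke a biconcave alternating-maximization convergence result (the paper cites~\cite{J_Wright15} where you appeal to a ``standard'' theorem), and then transfer the conclusion from $\mathcal{Q}2$ back to $\mathcal{Q}1$ through the quadratic-transform equivalence. Your Steps~2--3 spell out the monotonicity and accumulation-point argument more explicitly than the paper, which compresses the entire proof into the paragraph preceding the theorem statement, but the logical skeleton is identical; your candid remark about the Step~3 obstacle is well taken and is precisely why the paper outsources that step to a citation.
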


\subsection{Updating $\{\Theta^{(m,j)}_{k}\}_{k=1}^{K_m}$}

Notice that the objective function of $\mathcal{D}1$ must be non-negative at optimality. Hence, the pointwise maximum $[\cdot]^+$ can be dropped, and $\mathcal{D}1$ is equivalent to 
\begin{equation}\label{D2: obj_theta}
\begin{split}
	\mathcal{D}2: &\!\max_{ \{\Theta^{(m,j)}_{k} \geq 0\}_{k=1}^{K_m}}
\sum_{k=1}^{K_m}\underbrace{\!
	\log_2\left(\frac{1+\frac{\Xi^{(m,j)}_{k}\Theta^{(m,j)}_{k}}{1+\Xi^{(m,j)}_{k}\sum\limits_{i=1}^{k-1}\Theta^{(m,j)}_{i}}}{1+\frac{\Theta^{(m,j)}_{k}}{ \kappa_{m,k,j} + \sum\limits_{i\neq k}\Theta^{(m,j)}_{i}}}\right)\!}_{:= f\left(\{\Theta^{(m,j)}_{k}\}_{k=1}^{K_m}\right)}, \\
&\mathrm{s.t.}~~
\sum_{k=1}^{K_m}\Theta^{(m,j)}_{k}={P_m}.
\end{split}
\end{equation}
Since $\mathcal{D}2$ has a continuously differentiable objective function and linear feasible set, it can be solved by the projected-gradient (PG) method~\cite{B_Bertseka97NP}, which alternatively performs an unconstrained gradient descent step and computes the projection of the unconstrained update onto the feasible set of the optimization problem. To be specific, the update of $\{\Theta^{(m,j)}_{k}\}_{k=1}^{K_m}$ at the $l^{th}$ iteration is given by
\begin{align}\label{eq:proj_grad}
&{\Theta^{(m,j)}_{k}}^{\diamond}\left(l+\frac{1}{2}\right)\nonumber\\
&={\Theta^{(m,j)}_{k}}^{\diamond}(l)+\mathcal{I} \sum_{i=1}^{K_m}\nabla_{\Theta^{(m,j)}_{i}} f, ~\forall k =1,\ldots,K_m,
\end{align}
where $\mathcal{I}$ is a constant step size chosen by Armijo rule to guarantee convergence~\cite[Prop. 2.3.3]{B_Bertseka97NP}, and $\nabla_{\Theta^{(m,j)}_{i}} f$ is the gradient of $f(\{\Theta^{(m,j)}_{k}\}_{k=1}^{K_m})$ at $\Theta^{(m,j)}_{i}, i\in \{1,\ldots,K_m\}$, with its explicit expression shown in~\eqref{eq:clo_explicit_dif} of Appendix~\ref{lem:CF_theta_AM}.
On the other hand, to project ${\Theta^{(m,j)}_{k}}^{\diamond}(l+1/2)$ onto the feasible set of $\mathcal{D}2$ to find its nearest feasible point ${\Theta^{(m,j)}_{k}}^{\diamond}(l+1)$, we have an equivalent optimization problem expressed as
\begin{align}\label{opt:proj_D2}
&{\Theta^{(m,j)}_{k}}^{\diamond}(l+1)\nonumber \\
&=\mathop{\arg\min}_{\{\Theta^{(m,j)}_{k}\}_{k=1}^{K_m}\in \mathcal{P}_{\mathcal{D}_2}}\! \sum_{k=1}^{K_m}\left\|\Theta^{(m,j)}_{k}\!-\!{\Theta^{(m,j)}_{k}}^{\diamond}\!\left(l+\frac{1}{2}\right)\right\|^2,
\end{align}
where $\mathcal{P}_{\mathcal{D}_2}=\{\{\Theta^{(m,j)}_{k}\}_{k=1}^{K_m}| \sum_{k=1}^{K_m}\Theta^{(m,j)}_{k}={P_m},{\Theta}^{(m,j)}_{k}\geq 0\}$ is the domain of $\mathcal{D}2$.
Since~\eqref{opt:proj_D2} is a convex optimization problem, a closed-form solution  can be derived based on Karush-Kuhn-Tucker (KKT) condition and is given by the following property, which is proved in Appendix~\ref{proof:lemma_proj_gra_theta}.
\begin{prop}\label{lem:Proj_gra_theta_k}
The optimal solution of~\eqref{opt:proj_D2} is given by
\begin{equation}\label{eq:pro_gra_optTheta}
\begin{aligned}
&{\Theta^{(m,j)}_{k}}^{\diamond}(l+1)\!=\!\Bigg[{\Theta^{(m,j)}_{k}}^{\diamond}\left(l+\frac{1}{2}\right)\\
&-\frac{1}{K_m}\left(\sum_{k=1}^{K_m}{\Theta^{(m,j)}_{k}}^{\diamond}\!\left(l+\frac{1}{2}\right)-{P_m}\right)\Bigg]^+\!,\forall k =1,\ldots,K_m.
\end{aligned}
\end{equation}
\end{prop}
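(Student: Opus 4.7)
The plan is to derive the closed form via the Karush–Kuhn–Tucker (KKT) conditions, since the projection problem~\eqref{opt:proj_D2} is a strictly convex quadratic program with linear equality and non-negativity constraints, so strong duality holds and the KKT system is both necessary and sufficient for optimality. For notational convenience I drop the cluster/Eve superscripts and the iteration index, writing $\Theta_k$ for the optimization variable and $\bar\Theta_k$ for the unconstrained iterate ${\Theta^{(m,j)}_{k}}^{\diamond}(l+\tfrac{1}{2})$.

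First I would form the Lagrangian
\begin{equation*}
\mathcal{L}=\sum_{k=1}^{K_m}\left(\Theta_k-\bar\Theta_k\right)^2+\lambda\!\left(\sum_{k=1}^{K_m}\Theta_k-P_m\right)-\sum_{k=1}^{K_m}\mu_k\Theta_k,
\end{equation*}
with $\lambda\in\mathbb{R}$ associated with the equality constraint in $\mathcal{P}_{\mathcal{D}_2}$ and $\mu_k\ge 0$ associated with $\Theta_k\ge 0$. Stationarity $\partial\mathcal{L}/\partial\Theta_k=0$ yields $\Theta_k=\bar\Theta_k-\lambda/2+\mu_k/2$. Combining this with dual feasibility $\mu_k\ge 0$ and complementary slackness $\mu_k\Theta_k=0$ gives the standard reduction
\begin{equation*}
\Theta_k=\left[\bar\Theta_k-\lambda/2\right]^+,\qquad k=1,\ldots,K_m,
\end{equation*}
since whenever $\bar\Theta_k-\lambda/2\ge 0$ we must have $\mu_k=0$, and whenever $\bar\Theta_k-\lambda/2<0$ the non-negativity constraint is active with $\mu_k=\lambda-2\bar\Theta_k>0$.

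Next I would pin down the multiplier $\lambda$ from primal feasibility $\sum_{k=1}^{K_m}\Theta_k=P_m$. Summing the stationarity identity $\Theta_k=\bar\Theta_k-\lambda/2+\mu_k/2$ across $k$ and substituting the equality constraint gives $\lambda/2=\frac{1}{K_m}\!\left(\sum_{k=1}^{K_m}\bar\Theta_k-P_m\right)+\frac{1}{K_m}\sum_{k=1}^{K_m}\mu_k/2$. Plugging this back into the $[\,\cdot\,]^+$ expression and restoring the full notation reproduces the stated formula~\eqref{eq:pro_gra_optTheta}, with the $[\,\cdot\,]^+$ operation automatically enforcing the non-negativity of the projected iterate.

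The main obstacle is the coupling between the support of the projected vector and the value of $\lambda$: in full generality the exact simplex projection requires finding the active index set (i.e.\ the largest $\rho$ such that sorting $\bar\Theta_{(1)}\ge\cdots\ge\bar\Theta_{(K_m)}$ makes $\bar\Theta_{(\rho)}-\lambda/2>0$), so the clean averaging identity used above tacitly assumes that no non-negativity constraint is active, or equivalently that $\sum_k\mu_k=0$. I would therefore argue that since the PG scheme is called inside an alternating maximization whose iterates already lie in the relative interior of the simplex and $\mathcal{I}$ is chosen by the Armijo rule to keep $\bar\Theta_k$ close to the current feasible iterate, the active set is empty in practice; the outer $[\,\cdot\,]^+$ is retained as a safeguard so that~\eqref{eq:pro_gra_optTheta} remains in $\mathcal{P}_{\mathcal{D}_2}$ in all cases. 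With this caveat made explicit, the KKT derivation above delivers the claimed closed form.
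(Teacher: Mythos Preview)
Your approach is essentially the same as the paper's: both set up the Lagrangian for~\eqref{opt:proj_D2}, invoke the KKT stationarity condition to obtain $\Theta_k=\bar\Theta_k-\zeta/2$ on the positive support, determine the multiplier from the equality constraint $\sum_k\Theta_k=P_m$, and then append the $[\,\cdot\,]^+$ to account for the non-negativity. The paper's proof is in fact terser than yours---it writes the Lagrangian with only the equality multiplier, solves for $\zeta^\dagger=\tfrac{2}{K_m}\bigl(\sum_k\bar\Theta_k-P_m\bigr)$ under the assumption that all $\Theta_k>0$, and then simply says ``together with the case of $\Theta_k=0$'' to justify the clipping.

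Your caveat about the active set is well taken and is actually a more honest treatment than the paper's own proof: the averaging formula for the multiplier is derived under the tacit assumption that no non-negativity constraint binds, and the paper does not spell this out. So you have not missed anything relative to the paper; if anything, you have made explicit a gap that the paper leaves implicit.
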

\noindent
Based on~\eqref{eq:proj_grad} and Proposition~\ref{lem:Proj_gra_theta_k}, we can iteratively update $\{\Theta^{(m,j)}_{k}\}_{k=1}^{K_m}$, where the convergent point is guaranteed to be a stationary point of $\mathcal{D}2$~\cite[Prop. 2.3]{B_Bertseka97NP}. 
To sum up, the above PG method for solving $\mathcal{D}2$ is summarized in Algorithm~\ref{alg:dual_update_theta}.
\begin{algorithm}[H]
    \caption{PG method for solving $\mathcal{D}2$} 
    \begin{algorithmic}[1]\label{alg:dual_update_theta}
        \STATE Initialize with a feasible point $\{{\Theta^{(m,j)}_{k}}^{\diamond}(0)\}_{k=1}^{K_m}$ and set $l:=0$.
        \REPEAT
        \STATE Update $\{\Theta^{(m,j)}_{k}\}_{k=1}^{K_m}$ with the gradient iterate~\eqref{eq:proj_grad} and projection iterate~\eqref{eq:pro_gra_optTheta}.
        \STATE Update iteration: $l:=l+1$.
        \UNTIL Stopping criterion is satisfied.
    \end{algorithmic}
\end{algorithm}

\subsection{Tightness Refinement and Overall Algorithm}
With the update of $\{{\Xi}^{(m,j)}_{k}\}_{k=1}^{K_m}$ and $\{\Theta^{(m,j)}_{k}\}_{k=1}^{K_m}$ given by Algorithm~\ref{alg:CDM_iterative_RI} and~\ref{alg:dual_update_theta} respectively, subproblem $ \mathcal{P}2^{[m,j]}$ can be solved under the AM framework, with the convergence revealed by the following Theorem (proved in Appendix~\ref{proof_tem_convergence}).
\begin{theorem}\label{tem:conver_AM}
For a given $\{\varepsilon_k\in(0,1]\}_{k=1}^{K_m}$, and starting from a feasible solution of $ \mathcal{P}2^{[m,j]}$, the sequence of solutions generated by
alternatively executing Algorithms~\ref{alg:CDM_iterative_RI} and~\ref{alg:dual_update_theta} converges to a stationary point of problem $ \mathcal{P}2^{[m,j]}$.
\end{theorem}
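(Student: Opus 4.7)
The plan is to invoke the standard convergence theory for two-block alternating maximization on problems whose feasible set is a Cartesian product, and verify the three preconditions: monotonicity of the objective sequence, boundedness of the iterates, and the property that each block update produces a stationary point of the corresponding block subproblem. Letting $U(\bm{\Xi},\bm{\Theta})$ denote the objective of $\mathcal{P}2^{[m,j]}$, I would denote by $(\bm{\Xi}^{(t)},\bm{\Theta}^{(t)})$ the iterates produced by the outer AM loop, where $\bm{\Xi}^{(t+1)}$ is the output of Algorithm~\ref{alg:CDM_iterative_RI} with $\bm{\Theta}=\bm{\Theta}^{(t)}$, and $\bm{\Theta}^{(t+1)}$ is the output of Algorithm~\ref{alg:dual_update_theta} with $\bm{\Xi}=\bm{\Xi}^{(t+1)}$. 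The key structural observation I will lean on is that the constraint set \eqref{ineq:cons_xi_orthant}--\eqref{eq:simplex_theta} splits as $\prod_k[0,\xi^{ub}_{m,k}]\times \mathrm{simplex}(P_m)$, so blockwise stationarity implies joint stationarity for the KKT system of $\mathcal{P}2^{[m,j]}$.

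Monotonicity is established in two steps. For the $\bm{\Xi}$-update, Theorem~\ref{lem:opt_xi_alg1} shows that Algorithm~\ref{alg:CDM_iterative_RI} converges to a local maximum of $\mathcal{Q}1$, and since initializing it at $\bm{\Xi}^{(t)}$ produces a nondecreasing sequence of $\mathcal{Q}1$-values (by the biconcave block ascent property established for $g(\Xi,y)$), one obtains $U(\bm{\Xi}^{(t+1)},\bm{\Theta}^{(t)})\geq U(\bm{\Xi}^{(t)},\bm{\Theta}^{(t)})$. For the $\bm{\Theta}$-update, the Armijo-rule step size in \eqref{eq:proj_grad} combined with the projection \eqref{eq:pro_gra_optTheta} ensures that the PG iterations generate a nondecreasing sequence of $f$-values within $\mathcal{D}2$; together with the equivalence of $\mathcal{D}1$ and $\mathcal{D}2$ at optimality (the $[\cdot]^+$ is inactive at nonnegative $f$), this yields $U(\bm{\Xi}^{(t+1)},\bm{\Theta}^{(t+1)})\geq U(\bm{\Xi}^{(t+1)},\bm{\Theta}^{(t)})$. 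Boundedness is immediate since the outer feasible set is compact, hence the monotone, bounded sequence $\{U(\bm{\Xi}^{(t)},\bm{\Theta}^{(t)})\}$ converges, and the iterates admit convergent subsequences.

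It remains to show that every limit point $(\bm{\Xi}^{\star},\bm{\Theta}^{\star})$ of such a subsequence satisfies the KKT conditions of $\mathcal{P}2^{[m,j]}$. By continuity of $U$ and of the update maps, passing to the limit in the fixed-point characterization of Algorithm~\ref{alg:CDM_iterative_RI} yields that $\bm{\Xi}^{\star}$ is a stationary point of $U(\cdot,\bm{\Theta}^{\star})$ over $\prod_k[0,\xi^{ub}_{m,k}]$, and analogously Proposition~\ref{lem:Proj_gra_theta_k} together with \cite[Prop.~2.3]{B_Bertseka97NP} gives that $\bm{\Theta}^{\star}$ is a stationary point of $U(\bm{\Xi}^{\star},\cdot)$ over the simplex. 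Because the constraints do not couple $\bm{\Xi}$ and $\bm{\Theta}$, concatenating the two blockwise KKT systems produces the joint KKT system for $\mathcal{P}2^{[m,j]}$, establishing stationarity.

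The main obstacle I anticipate is rigorously justifying the limit-passage in the $\bm{\Xi}$-block, because Algorithm~\ref{alg:CDM_iterative_RI} is itself an iterative inner loop whose output depends on $\bm{\Theta}^{(t)}$, and one must argue that as $\bm{\Theta}^{(t)}\to\bm{\Theta}^{\star}$ the associated local maximizers $\bm{\Xi}^{(t+1)}$ converge to a stationary point of $U(\cdot,\bm{\Theta}^{\star})$ and not merely to some limit satisfying a weaker optimality condition. I would address this by running the inner loop to a prescribed tolerance $\eta_t\downarrow 0$, applying the continuity of the block gradient and of the projection map \eqref{eq:opt_Xi-Q2}, and invoking the closedness of the point-to-set stationarity map on a compact domain to conclude that the limit indeed satisfies the first-order condition \eqref{eq:sta_xi_mk} (or boundary activity) at $\bm{\Theta}^{\star}$. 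A subsidiary technical point is the non-smoothness introduced by $[\cdot]^+$ in \eqref{eq:max_thr}, which I would dispatch by the same argument used to pass from $\mathcal{D}1$ to $\mathcal{D}2$: at any limit point the objective must be nonnegative, so the $[\cdot]^+$ is differentiable there and does not obstruct the KKT characterization.
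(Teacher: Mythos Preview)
Your proposal follows the same high-level skeleton as the paper's proof---monotonicity of the objective along the AM sequence, compactness of the feasible set to extract a limit point, and then an argument that the limit point is stationary---so on that level the two are aligned. The substantive difference is in the last step. The paper does not pass to the limit in a fixed-point or KKT characterization; instead it introduces partial \emph{gradient mappings} $G^{\mathbf\Xi}_{L_1}$ and $G^{\boldsymbol\Theta}_{L_2}$ and proves a sufficient-increase lemma (their Lemma~5), namely that each block update increases the objective by at least $\tfrac{1}{2L_i}\|G_{L_i}(\cdot)\|^2$. Since the objective increments vanish, the gradient mappings vanish along the sequence, and continuity of the gradient mapping then forces $G^{\mathbf\Xi}_{L_1}(\mathbf Z^*)=G^{\boldsymbol\Theta}_{L_2}(\mathbf Z^*)=0$ at any limit point, which is stationarity. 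This sidesteps exactly the obstacle you flag: you need the stationarity condition satisfied by $\mathbf\Xi^{(t+1)}$ (which was computed against $\boldsymbol\Theta^{(t)}$) to transfer to the limit $\boldsymbol\Theta^\star$ of a subsequence indexed by $t_k$, but there is no a~priori reason $\boldsymbol\Theta^{(t_k-1)}\to\boldsymbol\Theta^\star$ along that same subsequence. Your closedness-of-the-stationarity-map fix can be pushed through with additional bookkeeping (e.g., passing to a further subsequence so that both $\mathbf Z^{(t_k)}$ and $\mathbf Z^{(t_k+1/2)}$ converge, then using monotonicity to force the limits to coincide), but the gradient-mapping route is cleaner because the sufficient-increase inequality is stated at a single iterate and directly ties the vanishing of objective increments to vanishing of the stationarity residual.
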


However, as pointed out by Theorem~\ref{tem:conver_AM}, the solution of $ \mathcal{P}2^{[m,j]}$ would depend on the parameter $\varepsilon_k$ (via $\kappa_{m,k,j}$), which controls the tightness of the approximation in Lemma~\ref{lem:Bertin-Type}. In particular, if we solve  $\mathcal{P}2^{[m,j]}$ with the tunable parameter $\varepsilon_k$ varying from $0$ to $1$ (other simulation setting detailed in Section VI) and calculate $p^{m,k,j}_{so}$ by putting the solution of $ \mathcal{P}2^{[m,j]}$ into~\eqref{eq:sop_inter_U-k}, we could obtain $p^{m,k,j}_{so}$ as a function of $\varepsilon_k$, denoted by $p^{m,k,j}_{so}(\varepsilon_k)$, and the results are shown in Fig.~\ref{fig:refine_Solution} for a few selected $\{m,k,j\}$.
It is observed that $p^{m,k,j}_{so}$ after a safe approximation (named as Approximation) is usually far less than the tunable parameter $\varepsilon_k$, as shown in the gap between the diagonal black dotted line and colored lines.
However, since $p^{m,k,j}_{so}(\varepsilon_k)$ is nondecreasing in $\varepsilon_k$ as shown in Fig.~\ref{fig:refine_Solution}, we could use the bisection method~\cite{C_BoShen08Proba} to relieve the performance loss and locate a proper $\varepsilon_k\in [\varepsilon,1]$ such that $p^{m,k,j}_{so}(\varepsilon_k)$ is close to the required $\varepsilon$. 
\begin{figure}
	\centering
	\includegraphics[width=3.7in]{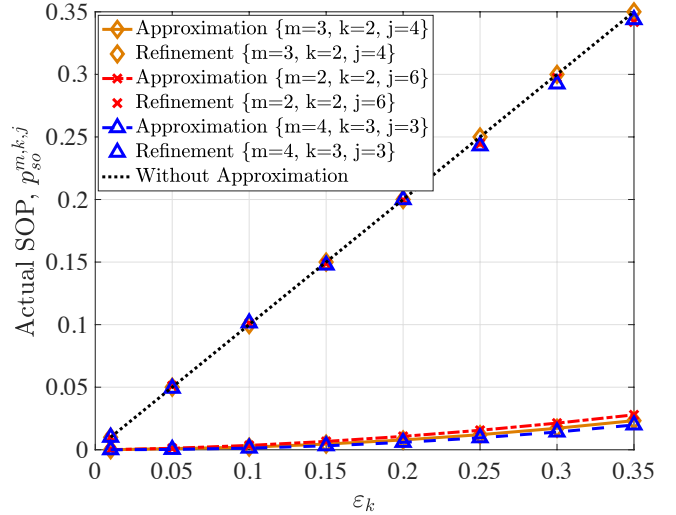}
	\caption{$p^{m,k,j}_{so}$ versus $\varepsilon_k$ with the basic simulation setting detailed in Section~\ref{Sec:VI}.}\label{fig:refine_Solution}
\end{figure}

Based on the above discussion, the proposed first-order algorithm for solving $ \mathcal{P}2^{[m,j]}$ with tightness parameter refinement is summarized in Algorithm~\ref{refinement method to P2} and the refined results (named as Refinement) are shown in Fig.~\ref{fig:refine_Solution}. It is observed that $p^{m,k,j}_{so}$ after refinement is very close to $\varepsilon_k$ without approximation. This indicates that although the solution set after approximation in Lemma~\ref{lem:Bertin-Type} is smaller, the corresponding solution set after refinement approximates the original one very well. 

Notice that Algorithm~\ref{refinement method to P2} consists of an outer bisection iteration and an inner AM iteration.
Since the outer bisection iteration must converge, together with Theorem~\ref{tem:conver_AM}, the overall Algorithm~\ref{refinement method to P2} is guaranteed to converge. For the computational complexity of Algorithm~\ref{refinement method to P2}, it is dominated by the inner AM iteration (i.e., from step 6 to step 9 in Algorithm~\ref{refinement method to P2}). To be specific, the computational complexity of Algorithm~\ref{alg:CDM_iterative_RI} is dominated by step 4 with the bisection search to update ${\Xi^{(m,j)}_{k}}$ at each iteration. Hence, the complexity order for updating $\{{\Xi}^{(m,j)}_{k}\}_{k=1}^{K_m}$ is $\mathcal{O}\left(K_m\ln(1/\varsigma)\right)$~\cite{Ben-TalA01}, where $\varsigma>0$ denotes the predefined searching resolution of the bisection method. On the other hand, the computational complexity of Algorithm~\ref{alg:dual_update_theta} is dominated by step 3 with the gradient iteration, which only involves the first-order differentiation. Therefore, the complexity order for updating $\{\Theta^{(m,j)}_{k}\}_{k=1}^{K_m}$ is $\mathcal{O}\left(K_m/\tau \right)$ with an accuracy of $\tau $~\cite{B_Bertseka97NP}. Since the complexity of Algorithm~\ref{refinement method to P2} is linear in $K_m$, it is suitable for massive access.

Noticing that $\mathcal{P}2$ consists of $MJ$ parallel subproblems in the form of $ \mathcal{P}2^{[m,j]}$, the overall algorithm for solving $\mathcal{P}2$ can be implemented in a parallel manner and is summarized in Algorithm~\ref{alg:overall_am_P3}, where the modern multi-core computing architecture can be leveraged for speeding up the computation. 
  
  \begin{algorithm}[H]
	\caption{Solution of $\mathcal{P}2^{[m,j]}$ with optimized tunable parameter $\{\varepsilon_k\}_{k=1}^{K_m}$} 
	\begin{algorithmic}[1]\label{refinement method to P2}
		\STATE \textbf{input}: the required $\varepsilon$ and a predefined searching resolution $z$.
		\STATE Initialize $\varepsilon^k_{\min}=\varepsilon$, $\varepsilon^k_{\max}=1$.
		\REPEAT
		\STATE Update tunable parameter $\varepsilon_k = (\varepsilon^k_{\min}+\varepsilon^k_{\max})/2 $.
		\STATE Initialize the feasible point of $\{{\Xi}^{(m,j)}_{k},\Theta^{(m,j)}_{k}\}_{k=1}^{K_m}$ based on~\eqref{ineq:cons_xi_orthant} and~\eqref{eq:simplex_theta}.
		\REPEAT
		\STATE Update $\{{\Xi}^{(m,j)}_{k}\}_{k=1}^{K_m}$ by using Algorithm~\ref{alg:CDM_iterative_RI}.
		\STATE Update $\{\Theta^{(m,j)}_{k}\}_{k=1}^{K_m}$ by using  Algorithm~\ref{alg:dual_update_theta}.
		\UNTIL Stopping criterion is satisfied.
		\STATE Calculate $p^{m,k,j}_{so}(\varepsilon_k)$ by putting the solution of $\mathcal{P}2^{[m,j]}$ into~\eqref{eq:sop_inter_U-k} for all $k$.
		\STATE\textbf{if} ~{$p^{m,k,j}_{so}(\varepsilon_k)<\varepsilon$}, \textbf{then} ~{$\varepsilon^k_{\min}=\varepsilon_k$},
		\STATE\textbf{else} ~{$\varepsilon^k_{\max}=\varepsilon_k$}.
		\STATE\textbf{end if} 
		\UNTIL $|p^{m,k,j}_{so}(\varepsilon_k)-\varepsilon|\leq z$.
		\STATE\textbf{output}: The maximizer $\{{{\Xi}^{(m,j)}_{k}}^*,{\Theta^{(m,j)}_{k}}^*\}_{k=1}^{K_m}$ with optimized $\{\varepsilon_k\}_{k=1}^{K_m}$.
	\end{algorithmic}
\end{algorithm}

\begin{algorithm}[H]
    \caption{The overall algorithm for solving $\mathcal{P}2$ } 
    \begin{algorithmic}[1]\label{alg:overall_am_P3}
        \STATE Solve $\mathcal{P}2^{[m,j]}$ in a parallel manner for all $m,j$ by using Algorithm~\ref{refinement method to P2}.
        \STATE Put $\{{{\Xi}^{(m,j)}_{k}}^*,{\Theta^{(m,j)}_{k}}^*\}_{k=1}^{K_m}$ for all $m,j$ into the objective function of~\eqref{eq:obj_thr}.
        \STATE Select $\hat{j}\in\{1,\ldots,J\}$ such that the objective function value of $\mathcal{P}2$ is the minimum.
        \STATE The maximizer $\xi^*_{m,k}={{\Xi}^{(m,\hat{j})}_{k}}^*$ and $\theta^*_{m,k}={\Theta^{(m,\hat{j})}_{k}}^*$ for all $m,k$.
    \end{algorithmic}
\end{algorithm}

\section{Numerical Results and Discussions}\label{Sec:VI}
In this section, we evaluate the secure transmission performance of the proposed algorithm through simulations. All simulations are performed on MATLAB R2017a on a Windows x64 desktop with 3.2 GHz CPU and 16 GB RAM. Each point in the figures is obtained by averaging over 100 simulation trials. Unless otherwise specified, the simulation set-up is as follows and kept throughout this section. 
We adopt the carrier frequency of 915 MHz and a carrier spacing of 200 kHz according to the 3GPP specification~\cite{LTE_carrierFQ13}. The path loss exponent is $\alpha = 2.5$ in the free space environment~\cite{B_Rappaport}. There are 100 users and 10 Eves in the whole system. All users are randomly distributed between 1 m and 100 m, while the location for Eve $j$ is fixed at $d_{e,j}=10/j$ m.
Once the large-scale fading parameters are generated, they are assumed to be known and fixed throughout the simulations. The small-scale fading vectors of all users and Eves are independently generated according to $ \mathcal{CN}(\mathbf{0},\mathbf{I}_N)$, i.e., $\mu_{m,k}=\mu_{e,j}=1$. 
The power allocation to cluster $m$ is set to $P_m=1/M$. The noise power at each user is set to $\sigma^2_b=0$ dB, and the noise power at each eavesdropper is set to $\sigma^2_e=5$ dB. To avoid repeating figure descriptions, the settings for $(\delta,\varepsilon,M,N,J,P)$ are provided in the caption of each figure.

\subsection{Performance of the Proposed First-Order Algorithm}
Firstly, we demonstrate the convergence of Algorithm~\ref{refinement method to P2} for solving $\mathcal{P}2^{[m,j]}$ with fixed $\{\varepsilon_k=0.1\}_{k=1}^{K_m}$. Since Algorithm~\ref{refinement method to P2} consists of multiple layers of iterations, stopping criterion for each layer depends on the relative change of the two consecutive objective function values (e.g., less than $10^{-4}$). The convergence of the inner loop of Algorithm~\ref{alg:CDM_iterative_RI} in terms of updating $\{\Xi^{(m,j)}_{k}\}_{k=1}^{K_m}$ is shown in Fig.~\ref{fig:Conver_alg1}. It is observed that the inner loop converges within 10 iterations under different numbers of $P$, which corroborates the results in Theorem~\ref{lem:opt_xi_alg1}.
When fixing the number of cluster $M$, the power allocated to each cluster increases with increasing $P$. As a result, the objective function value of $g(\Xi^{(m,j)}_{k},y_k)$ in $\mathcal{Q}2$ increases as $P$ increases. On the other hand, the convergence property of Algorithm~\ref{alg:dual_update_theta} is shown in Fig.~\ref{fig:Conver_alg2}. It is shown that the PG method converges after 300 iterations under different values of $P$. To verify the convergence of Algorithm~\ref{refinement method to P2}, Fig.~\ref{fig:Conver_alg3} shows the objective function value of $\mathcal{P}2^{[m,j]}$ versus the inner AM iteration. It is observed that AM converges rapidly within 25 iterations under different values of $P$, which corroborates the convergence result of Theorem~\ref{tem:conver_AM}.
\begin{figure*} 
	\centering
	\subfigure[]{ 
		\label{fig:Conver_alg1}
		\includegraphics[width=2.2in]{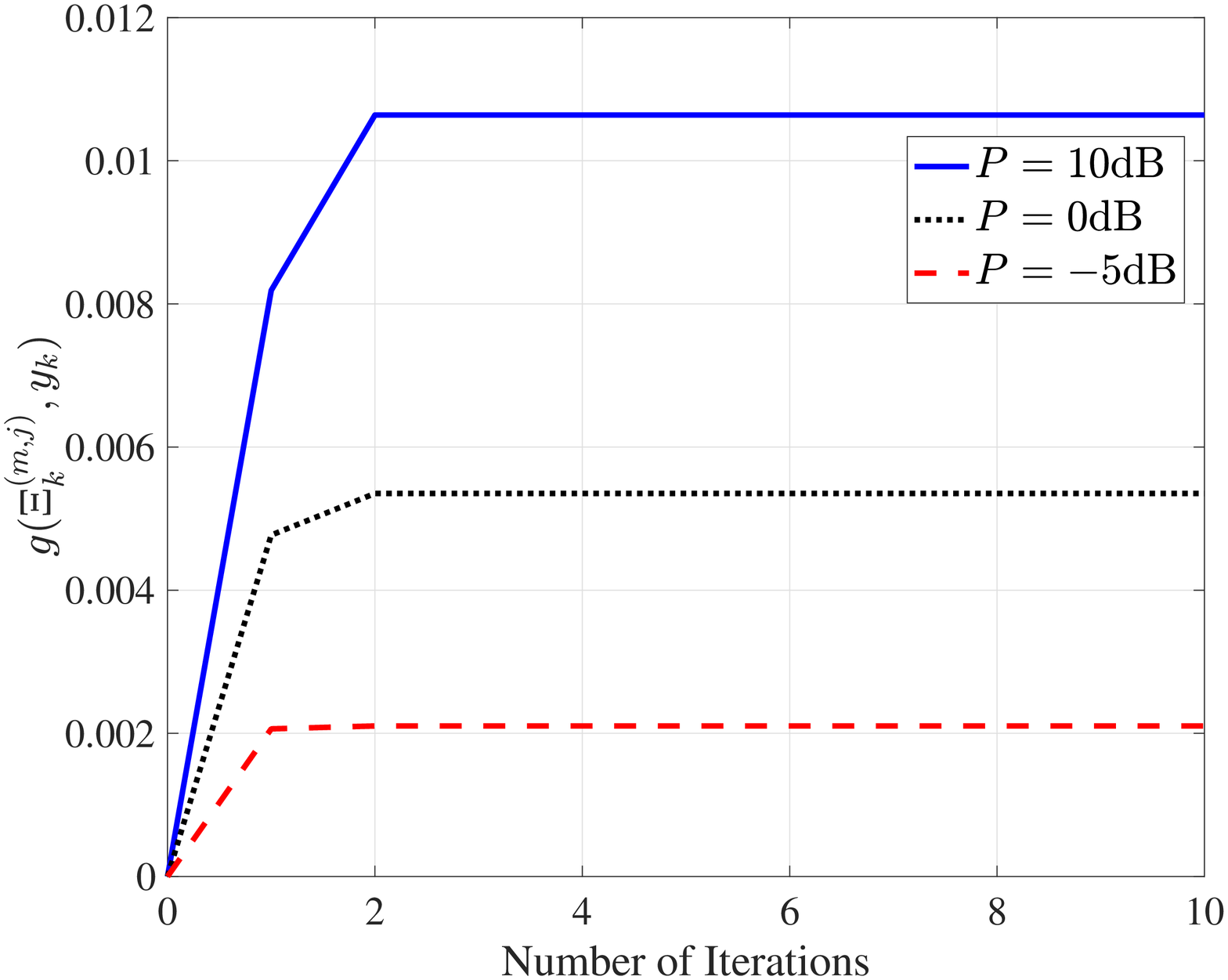}} \hspace{0in} 
	\subfigure[]{
		\label{fig:Conver_alg2} 
		\includegraphics[width=2.2in]{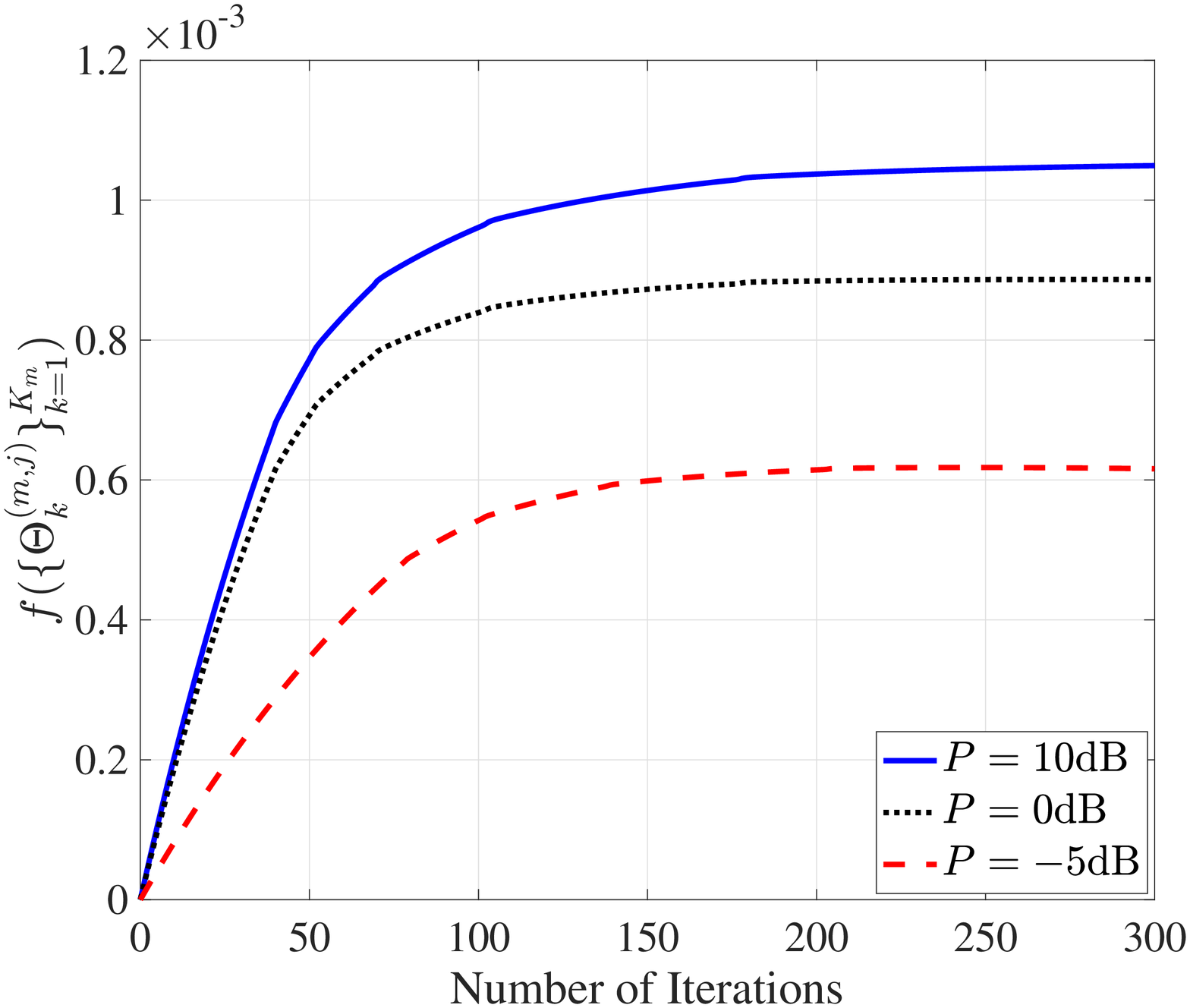}}  \hspace{0in} 
	\subfigure[]{
		\label{fig:Conver_alg3} 
		\includegraphics[width=2.2in]{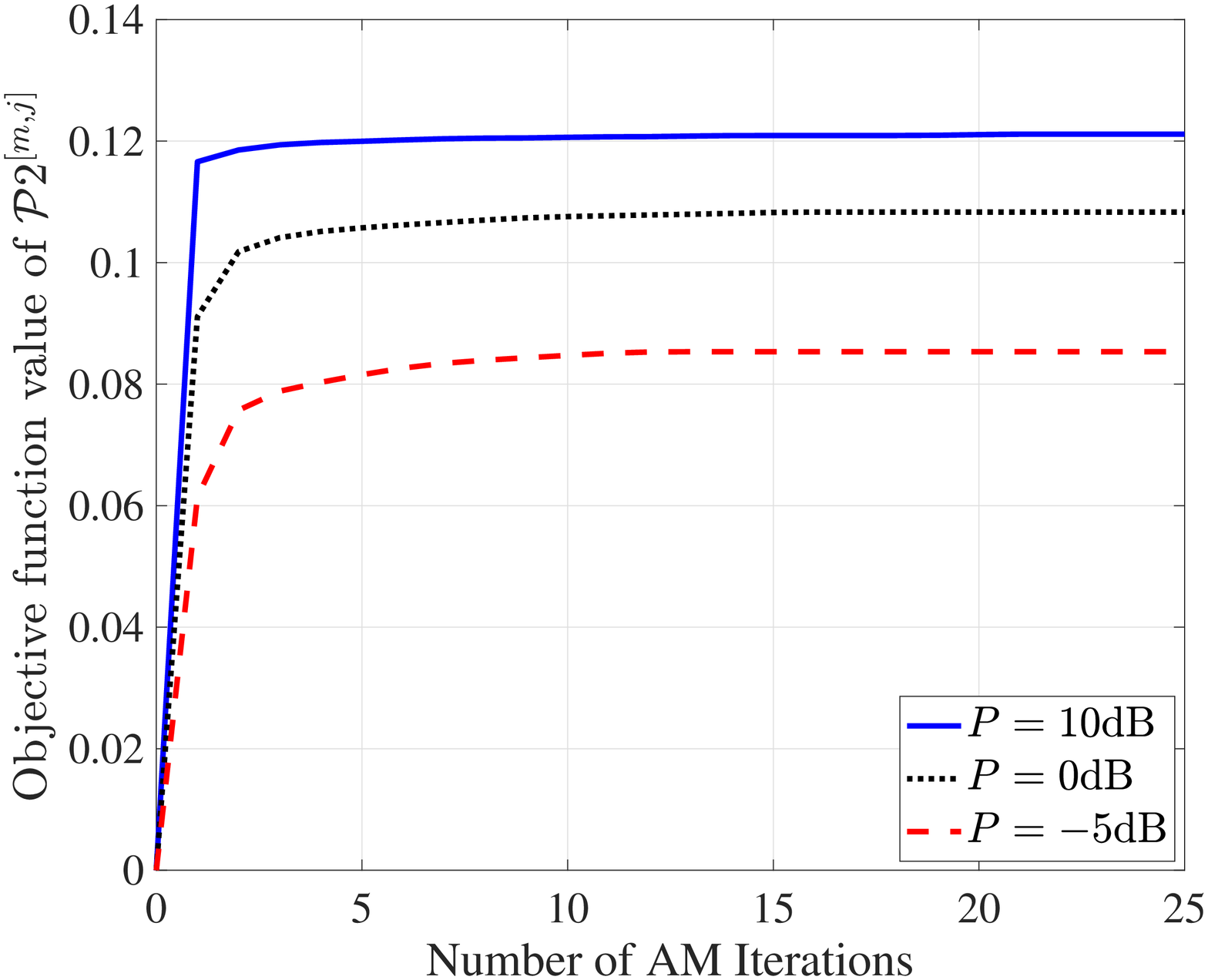}} 
	\caption{For a given $\{\varepsilon_k=0.1\}_{k=1}^{K_m}$ with $M=8$, $N=100$, $J=5$, $\delta=0.5$. (a)~The iterations of Algorithm 1.~(b)~The  iterations of Algorithm 2.~(c)~The AM iterations in Algorithm 3.}  
\end{figure*}

Next, to show the computational complexity advantage of the proposed algorithm for solving $\mathcal{P}2$, we compare Algorithm~\ref{alg:overall_am_P3} with the conventional method, which is the combination of the branch-and-bound algorithm and CCP. 
To be specific, $\{\Xi^{(m,j)}_{k}\}_{k=1}^{K_m}$ is updated based on branch-and-bound algorithm and the initial value is chosen according to box constraint~\eqref{ineq:set_xi_Q1}; $\{\Theta^{(m,j)}_{k}\}_{k=1}^{K_m}$ is updated based on CCP with the interior-point method and the initial value is chosen as $\{\Theta^{(m,j)}_{k}=P_m/K_m\}_{k=1}^{K_m}$. Besides, the convergence tolerance and maximum number of iterations for the conventional method are set to $10^{-4}$ and 100, respectively. As shown in Fig.~\ref{fig:time_compare}, compared with the conventional method, the proposed Algorithm~\ref{alg:overall_am_P3} reduces the computation time by at least two orders of magnitude.\footnote{The proposed algorithm has the potential of leveraging the modern multi-core computing architecture, and more efficient programming languages (C or Assembler) to further speed up the computation in practical implementation.}  
On the other hand, Fig.~\ref{fig:Perf_alg_compare} shows that the proposed Algorithm~\ref{alg:overall_am_P3} achieves almost the same security guaranteed sum-rate as the conventional method under different values of $P$. 
Due to the complexity advantage, we only provide the solution to $\mathcal{P}2$ obtained via Algorithm~\ref{alg:overall_am_P3} in the following discussion.

\subsection{Performance Comparisons with Other Multiplexing Schemes}

To show the performance advantages of the proposed scheme by employing the power domain multiplexing, we make a comparison with the orthogonal multiple access scheme~\cite{J_19XuTDMACompare}, where we employ zero-forcing beamforming among clusters and time division multiple access (TDMA) within the cluster. To begin with, we illustrate the impact of COP constraint on system performance, where all security guaranteed sum-rates are increasing in $\delta$ as shown in Fig.~\ref{fig:Thrpt_delta}. Moreover, the sum-rate advantage of the proposed scheme with respect to that of TDMA becomes more prominent as $\delta$ increases and then maintains at a large margin.
A heuristic explanation of this phenomenon is that $\Xi^{(m,j)}_{k}$ is no longer constrained by its upper bound $\xi^{ub}_{m,k}$ when $\delta$ is large according to~\eqref{eq:up_xi_CF}. 
When comparing two schemes in terms of the number of clusters, Fig.~\ref{fig:Thr_feedbackBits} shows that the proposed scheme significantly improves the security guaranteed sum-rate of the TDMA scheme under different values of $M$. When fixing $P$, the power allocated to each cluster decreases as $M$ increases. As a result, the sum-rate decreases as $M$ increases. 
Furthermore, with an increase of $J$, the performance degrades due to more Eves in the system.

To show the importance of considering imperfect CSI, we compare the proposed scheme with NOMA that does not consider the imperfect CSI~\cite{J_Wang19SecureNOMI}. 
To make a fair comparison with~\cite{J_Wang19SecureNOMI}, we simulate both schemes under the same security requirement and select $J=1$.
The security guaranteed sum-rate versus $\varepsilon$ and $P$ are provided in Fig.~\ref{fig:Thr_epsilon} and Fig.~\ref{fig:Througput_SNR}, respectively. It can be seen that the proposed scheme always achieves significantly higher sum-rates than NOMA ignoring CSI uncertainty.
Finally, Fig.~\ref{fig:Througput_SNR} shows that the security guaranteed sum-rate decreases as $N$ increases, which might seem counterintuitive. However, this is due to the coarser CDI when $N$ increases under a fixed $B$. This phenomenon can be seen in~\eqref{eq:obj_trans_twoVars}, where $B$ and $N$ appear in a ratio. This also suggests using more feedback bits would remedy the performance loss.

\begin{figure*} 
    \centering
    \subfigure[]{ 
        \label{fig:time_compare}
        \includegraphics[width=3in]{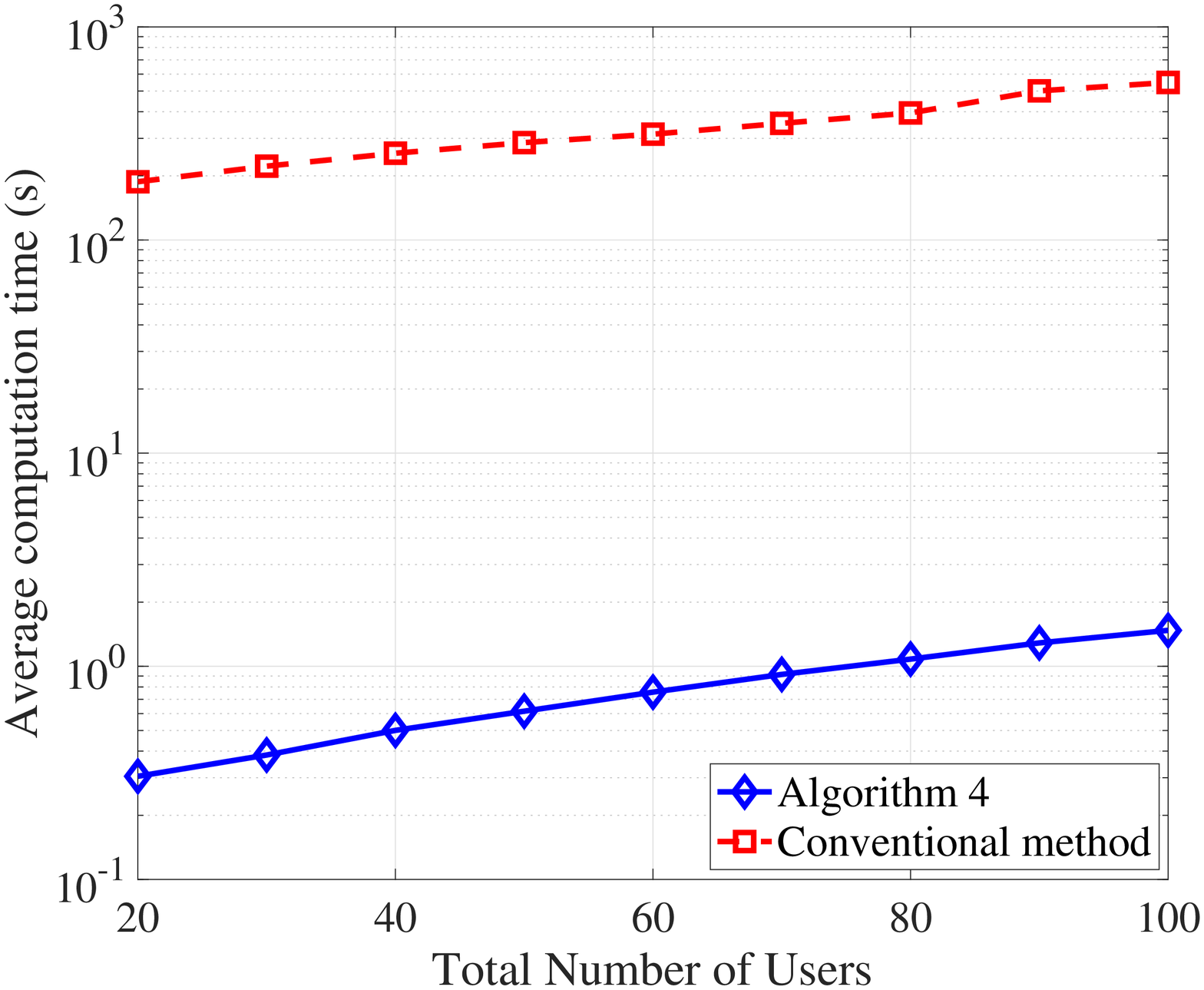}} \hspace{0.0in} 
    \subfigure[]{
        \label{fig:Perf_alg_compare} 
        \includegraphics[width=3in]{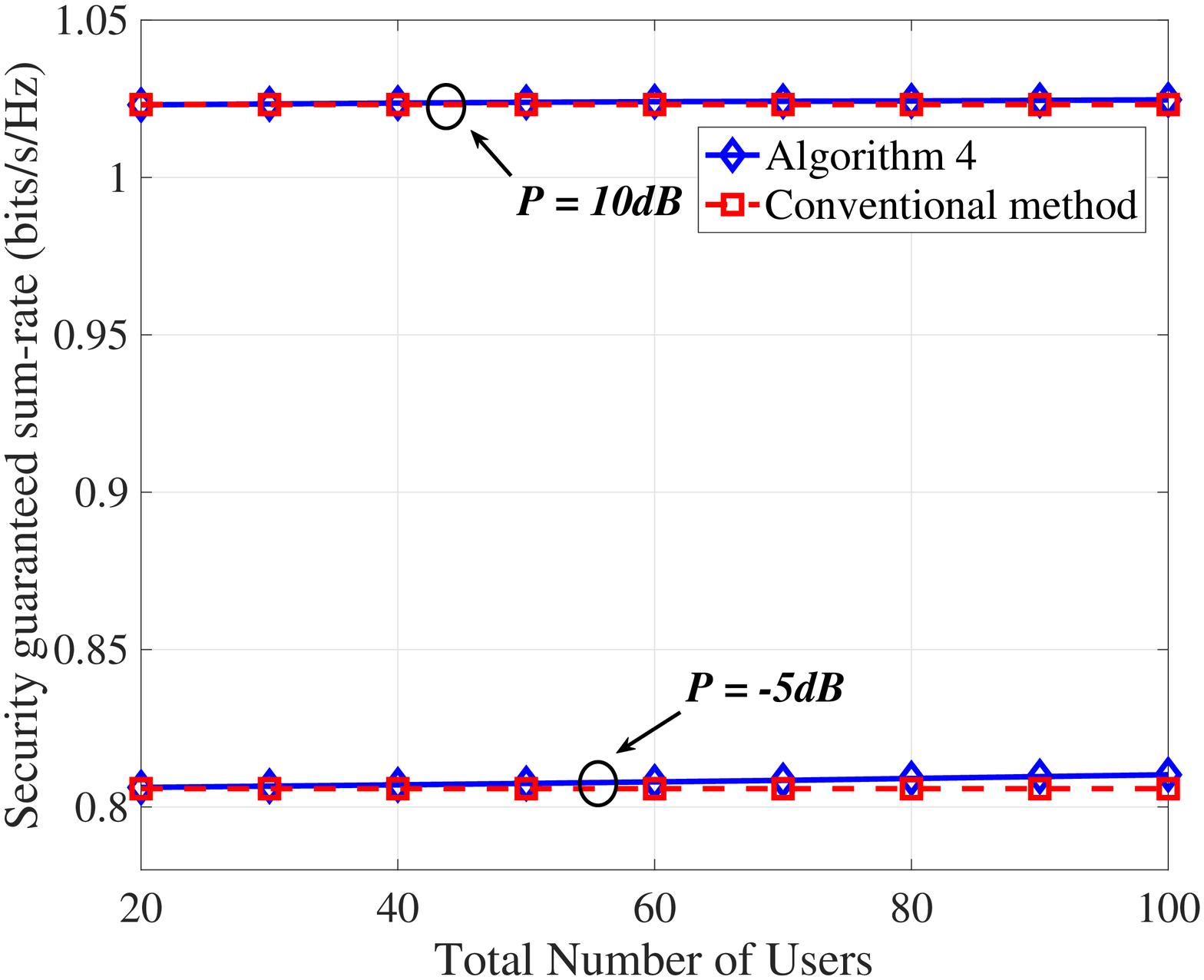}} 
    \caption{Performance comparison with the conventional method with $M=8$, $N=100$, $J=5$, $P=10\mathrm{dB}$, $\varepsilon=0.1$, $\delta=0.5$. (a)~Average computation time versus total number of users.~(b)~Security guaranteed sum-rate versus total number of users.}  
\end{figure*}

\begin{figure*} 
    \centering
    \subfigure[]{ 
        \label{fig:Thrpt_delta}
        \includegraphics[width=3in]{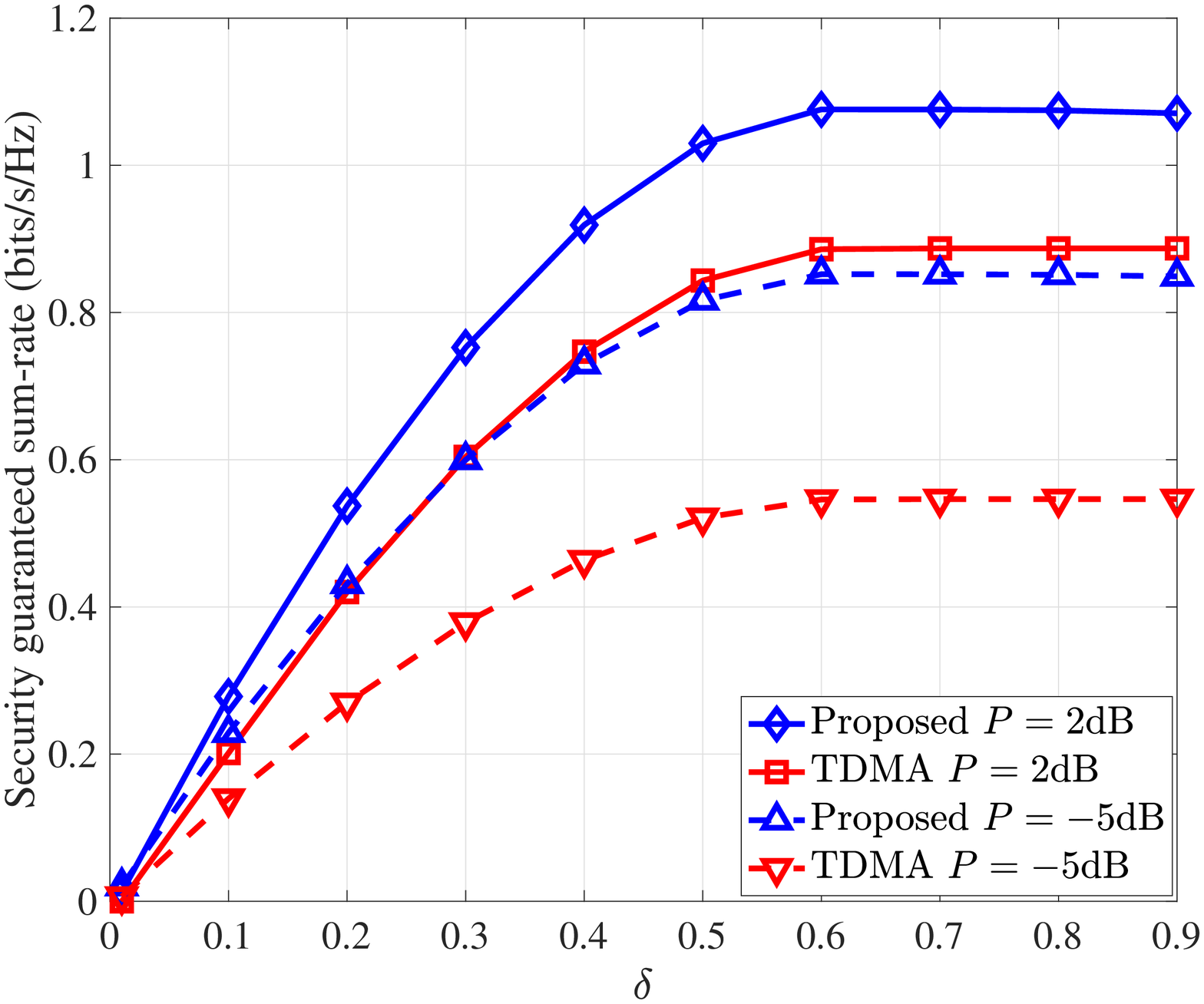}} \hspace{0.0in} 
    \subfigure[]{
        \label{fig:Thr_feedbackBits} 
        \includegraphics[width=3in]{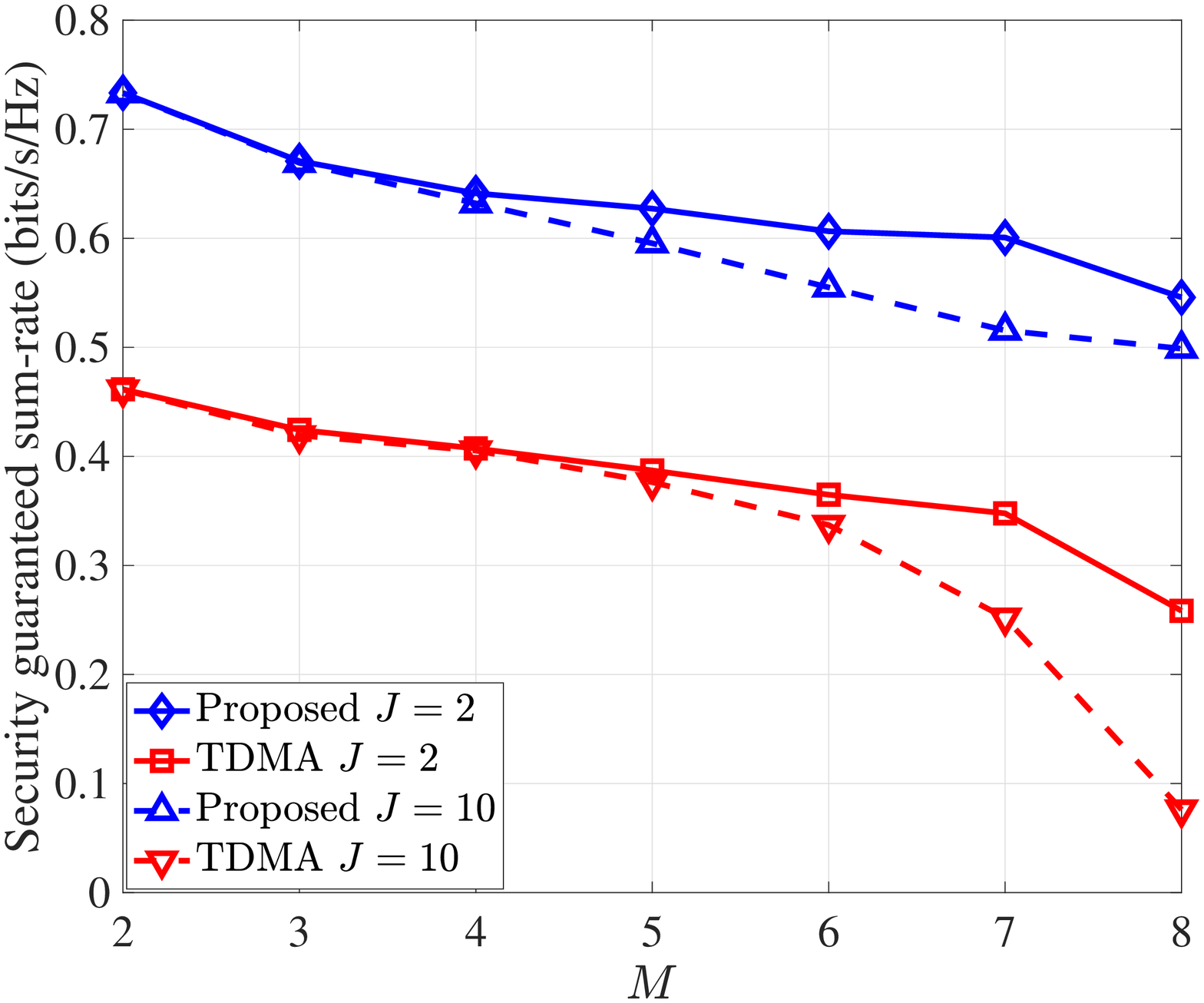}} \hspace{0.0in} 
    \caption{Comparison with TDMA scheme with $N = 100$: (a)~Security guaranteed sum-rates versus $\delta$: $M=8$, $J=5$, $\varepsilon=0.3$.~(b)~Security guaranteed sum-rates versus $M$:  $P=-5\mathrm{dB}$, $\delta=0.3$, $\varepsilon=0.1$.}
\end{figure*} 

\begin{figure*} 
    \centering
    \subfigure[]{ 
        \label{fig:Thr_epsilon}
        \includegraphics[width=3in]{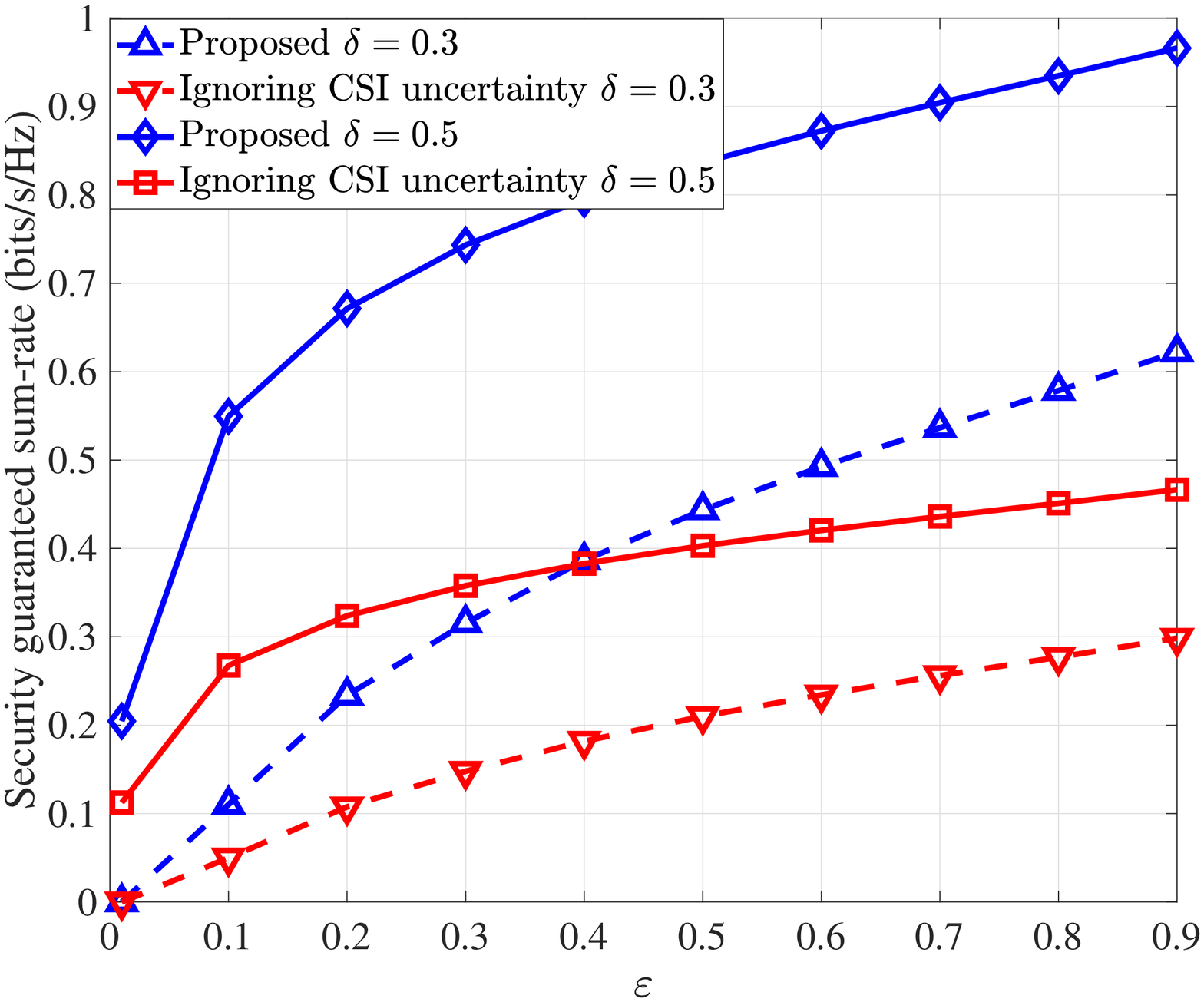}}  \hspace{0.0in} 
    \subfigure[]{ 
        \label{fig:Througput_SNR}
        \includegraphics[width=3in]{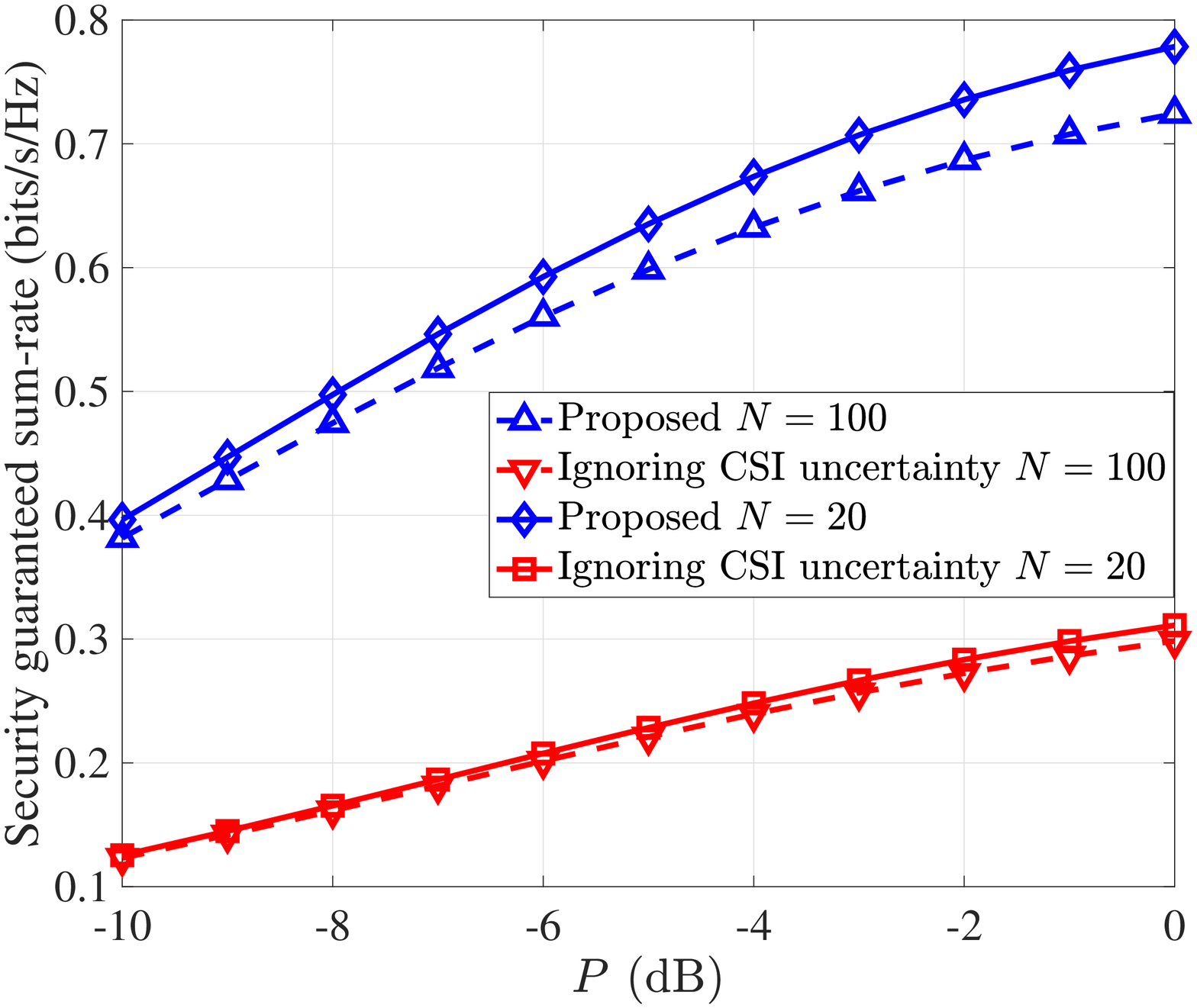}} 
    \caption{Comparison with NOMA ignoring CSI uncertainty~\cite{J_Wang19SecureNOMI} with $M=8$:~(a)~Security guaranteed sum-rates versus $\varepsilon$: $N = 100$, $P=10\mathrm{dB}$.~(b)~Security guaranteed sum-rates versus $P$: $\varepsilon=0.1$, $\delta=0.3$.}  
\end{figure*}

\section{Conclusion}\label{Sec:VII}
This paper studied the secure downlink NOMA transmission under imperfect CSI.
To characterize the performance of this system, an efficient first-order algorithm was proposed to maximize the security guaranteed sum-rate under the constraints of outage probability and transmit power budget. Since the proposed first-order algorithm is Hessian-free, it has a linear complexity order with respect to the number of users in the system, making it ideal for massive access scenarios. 
Numerical results demonstrated that the proposed first-order algorithm achieves identical performance to the conventional method but saves at least two orders of magnitude in computation time, and it significantly improves the security guaranteed sum-rate compared to orthogonal multiple access transmission, and NOMA transmission ignoring CSI uncertainty.

\appendices

\section{Derivation of~\eqref{eq:COP_cf_final}
}\label{ref:der_cop_mk}
Based on~\eqref{eq:COP_cf_inter}, $p^{m,k}_{co}$ can be expressed as
\begin{align}
&p^{m,k}_{co}\nonumber\\
=&\mathrm{Pr}\left\{2^{R_{m,k}}-1\!>\!\frac{\mathcal{X}\theta_{m,k}}{\mathcal{X}\sum\limits_{i=1}^{k-1}\theta_{m,i}+{P_m}\mathcal{Y}+\frac{1}{\gamma_{m,k}}}\right\}  \nonumber\\
=&\mathrm{Pr}\left\{\!\frac{2^{R_{m,k}}-1}{\theta_{m,k}\!-\!\left(2^{R_{m,k}}-1\right)\sum\limits_{i=1}^{k-1}\theta_{m,i}}\!>\!\frac{\mathcal{X}}{{P_m}\mathcal{Y}+\frac{1}{\gamma_{m,k}}}\!\right\} \label{eq:COP_CF_tempt},
\end{align}
where  $\mathcal{X}=|{\mathbf{g}}^H_{m,k}\mathbf{w}_m/\mu_{m,k}|^2\geq 0
$, $\mathcal{Y}=\|{\mathbf{g}}_{m,k}/\mu_{m,k}\|^2\sin^2 \beta_{m,k} \sum_{v\neq m}| \mathbf{e}^H_{m,k} \mathbf{w}_v|^2\geq 0$, and~\eqref{eq:COP_CF_tempt} is due to $\theta_{m,k}-\left(2^{R_{m,k}}-1\right)\sum_{i=1}^{k-1}\theta_{m,i}>0$~\cite[eq. 8]{J_Ding14NOMA}, otherwise, $p^{m,k}_{co}$ is always one.
Furthermore, based on the independent property of the interference terms~\cite[eq. 24]{J_Kout12DownSDMA}, variables of $\mathcal{X}$ and $\mathcal{Y}$ are independent.

To obtain a closed-form expression of $p^{m,k}_{co}$, we first provide the probability density function (PDF) of $\mathcal{X}$. 
Denote $\mathbf{\Psi}_{m,k}=\frac{{\mathbf{g}}_{m,k}}{\mu_{m,k}}\sim \mathcal{CN}(\mathbf{0},\mathbf{I}_N)$ and $\tilde{\mathbf{\Psi}}_{m,k}=\mathbf{\Psi}_{m,k}/\|\mathbf{\Psi}_{m,k}\|$, $\mathcal{X}$ can be rewritten as $\|{\mathbf{\Psi}}_{m,k}\|^2|\tilde{\mathbf{\Psi}}_{m,k}^H\mathbf{w}_m|^2$. 
Since the normalized beamformer $\mathbf{w}_m$ is determined by $\{\hat{\mathbf{g}}_n\}_{n\neq m}$ according to~\eqref{eq:zf_beamformer} and $\{\hat{\mathbf{g}}_n\}_{n\neq m}$ is independent of ${\mathbf{g}}_{m,k}$, ${\mathbf{g}}_{m,k}$ and $\mathbf{w}_m$ are independent. As a result, $\tilde{\mathbf{\Psi}}_{m,k}$ and $\mathbf{w}_m$ are independent and unit vectors in $N$ dimensional space. Based on~\cite[Lemma 1]{J_Roh06Se_Beamform}, the square inner product between two independent unit-norm random vectors $X_1:=|\tilde{\mathbf{\Psi}}^H_{m,k}\mathbf{w}_m|^2$ follows $\mathrm{Beta}(1,N-1)$ and its PDF is 
$f_{X_1}(x_1)=(1-x_1)^{N-2}/Be(1,N-1), x_1\in [0,1]$,
where $Be(x,y)$ is the beta function~\cite[eq. 8.380]{Table_Mathe00}.
On the other hand, since $\mathbf{\Psi}_{m,k}\sim \mathcal{CN}(\mathbf{0},\mathbf{I}_N)$, $X_2:=\|\mathbf{\Psi}_{m,k}\|^2$ follows a $\chi^2$ distribution with $2N$ degrees of freedom, and its PDF is 
$f_{X_2}(x_2)=\frac{x_2^{N-1}e^{-x_2/2}}{2^N\Gamma(N)}, x_2\geq 0$,
where $\Gamma(x)$ is the Gamma function~\cite[eq. 8.310]{Table_Mathe00}. 
Since $\mathcal{X}=X_1X_2$ and $X_1$ and $X_2$ are independent,  
the PDF of $\mathcal{X}$ is given by
\begin{align}\label{eq:pdf-X}
f_{\mathcal{X}}(x)&=\int_{x_2}\frac{1}{|x_2|}f_{X_2}(x_2)f_{X_1}\left(\frac{x}{x_2}\right)\mathrm{d}x_2 \nonumber\\
&=\frac{\int_{x}^{+\infty}(x_2-x)^{N-2}e^{-\frac{x_2}{2}}\mathrm{d}x_2}{Be(1,N-1)2^N\Gamma(N)}\nonumber \\
&=\frac{1}{2}e^{-\frac{x}{2}}, \quad x\geq 0.
\end{align}

Now, we derive the PDF of $\mathcal{Y}$. The cumulative distribution function of $\sin^2 \beta_{m,k}$ is given by~\cite{J_Yoo_pdf07}
\begin{equation}\label{eq:pdf_beta_mk}
\begin{split}
&F\left(\sin^2\! \beta_{m,k}\right)\\
=&\left\{\begin{array}{ll}
\!2^B \left(\sin^2 \!\beta_{m,k}\right)^{N-1}
, & \!\!\mathrm{if} ~0\leq \sin^2 \beta_{m,k} \leq 2^{-\frac{B}{N-1}}, \\
\!1, &\!\!\mathrm{if} ~\sin^2 \beta_{m,k}\geq 2^{-\frac{B}{N-1}}.
\end{array}\right.
\end{split}
\end{equation}
Hence, we have $\|\mathbf{\Psi}_{m,k}\|^2\sin^2 \beta_{m,k} \sim \mathcal{G}\left(N-1,2^{-\frac{B}{N-1}}\right)$ (gamma distribution with shape $N-1$ and scale $2^{-\frac{B}{N-1}}$)~\cite[Lemma 1]{J_Yoo_pdf07}. 
On the other hand, it is known that $ \mathbf{e}_{m,k}$ is a unit vector that has the same distribution as $\tilde{\mathbf{\Psi}}_{m,k}$. Moreover, the unit vector $\mathbf{w}_v$ is isotropic within the $N-1$ dimensional hyperplane and independent of $\mathbf{e}_{m,k}$. Based on~\cite[Lemma 2]{J_Jindal06Finite}, we have $| \mathbf{e}^H_{m,k} \mathbf{w}_v|^2\sim \mathrm{Beta}(1,N-2)$. 
Therefore, by applying~\cite[Lemma 1]{EJ_Zhang09Pro}, $\|\mathbf{\Psi}_{m,k}\|^2\sin^2 \beta_{m,k} |\mathbf{e}^H_{m,k} \mathbf{w}_v|^2 \sim \mathrm{Exp}\left(2^{\frac{B}{N-1}}\right)$.
Since all terms $\{\mathbf{e}^H_{m,k} \mathbf{w}_v\}_{v\neq m}$ are independent of one another, $\mathcal{Y}$ is the sum of $(M-1)$ independent and identically exponentially distributed random variables. Therefore, $\mathcal{Y}\sim \mathcal{G}\left(M-1,2^{-\frac{B}{N-1}}\right)$ and its PDF is expressed as
\begin{equation}\label{eq:pdf-Y}
\begin{split}
f_{\mathcal{Y}}(y)&= \frac{y^{M-2}\exp\left(-y2^{\frac{B}{N-1}}\right)}{2^{-\frac{B(M-1)}{N-1}}\Gamma(M-1)}, \quad y>0.
\end{split}
\end{equation}

Finally, based on~\eqref{eq:pdf-X} and~\eqref{eq:pdf-Y},~\eqref{eq:COP_CF_tempt} is further derived as 
\begin{align}
p^{m,k}_{co}
&=\mathrm{Pr}\left\{\mathcal{X}-{P_m}I\mathcal{Y} <
\frac{I}{\gamma_{m,k}}\right\}\nonumber \\
&=1-e^{-\frac{I}{2\gamma_{m,k}}}
\int_{0}^{\infty}
\frac{y^{M-2}e^{-y\left(2^{\frac{B}{N-1}}+\frac{{P_m}I}{2}\right)}}{2^{-\frac{B(M-1)}{N-1}}\Gamma(M-1)}\mathrm{d}y \nonumber \\
&=1-e^{-\frac{I}{2\gamma_{m,k}}}\left(1+\frac{{P_m}I2^{-\frac{B}{N-1}}}{2}\right)^{-(M-1)}, \label{eq:COP_appendix_CF}
\end{align}
where $I=\frac{2^{R_{m,k}}-1}{\theta_{m,k}-\left(2^{R_{m,k}}-1\right)\sum_{i=1}^{k-1}\theta_{m,i}}>0$, and~\eqref{eq:COP_appendix_CF}  follows from~\cite[eq. 3.326]{Table_Mathe00}.
By putting $I$ into~\eqref{eq:COP_appendix_CF}, $p^{m,k}_{co}$ is obtained as shown in~\eqref{eq:COP_cf_final}.

\section{Proof of Lemma~\ref{lem:Bertin-Type}
}\label{Appd-proof:Lema1}
By putting~\eqref{eq:para_kmj} into~\eqref{ineq:slack_SOP_trans} and re-arranging the terms, we obtain~\eqref{ineq:prox_sop_temp}, shown at the top of the next page,
    \begin{figure*}
	\normalsize
	\begin{equation} \label{ineq:prox_sop_temp}
\begin{split}
&\frac{2^{D^j_{m,k}}-1}{\gamma_{e,j}}\geq 
\left(\theta_{m,k}-\left(2^{D^j_{m,k}}-1\right)\sum_{i\neq k}\theta_{m,i}\right)\mathrm{Tr}\left(\mathbf{W}_m\right)-{P_m}\left(2^{D^j_{m,k}}-1\right)\mathrm{Tr}\left(\mathbf{W}^{\bot}_m\right)\\
&+\sqrt{2\ln(\varepsilon_k^{-1})} \left(\left(\theta_{m,k}-\left(2^{D^j_{m,k}}-1\right)\sum_{i\neq k}\theta_{m,i}\right)\mathrm{Tr}\left(\mathbf{W}_m\right)+{P_m}\left(2^{D^j_{m,k}}-1\right)\|\mathbf{W}^{\bot}_m\|_F \right)\\
&+\ln(\varepsilon_k^{-1})\left(\theta_{m,k}-\left(2^{D^j_{m,k}}-1\right)\sum_{i\neq k}\theta_{m,i}\right)\mathrm{Tr}\left(\mathbf{W}_m\right)
\end{split}
	\end{equation}
	\hrulefill
\end{figure*}
where $\mathbf{W}_m=\mathbf{w}_m\mathbf{w}^H_m$, $\mathbf{W}^{\bot}_m=\sum_{v\neq m}\mathbf{w}_v\mathbf{w}^H_v$.
In order to proceed, we provide the following three facts based on~\eqref{eq:lamda_quadraticForm}. Denote the largest eigenvalue of $\mathbf{\mathbf{\Lambda}}$ by $\lambda_{\max}(\mathbf{\mathbf{\Lambda}})$, we have 
\begin{align}\label{ineq:lamda_sop_f}
&[\lambda_{\max}(\mathbf{\Lambda})]^+\nonumber\\
\leq&\lambda_{\max}\left(\gamma_{e,j}\left(\theta_{m,k}-\left(2^{D^j_{m,k}}-1\right)\sum_{i\neq k}\theta_{m,i}\right)\mathbf{W}_m\right)\nonumber\\
\leq& \gamma_{e,j}\left(\theta_{m,k}-\left(2^{D^j_{m,k}}-1\right)\sum_{i\neq k}\theta_{m,i}\right)\mathrm{Tr}\left(\mathbf{W}_m\right).
\end{align}
On the other hand, we have~\eqref{ineq:SOP_fea_f}, shown at the top of the next page,
    \begin{figure*}
	\normalsize
	\begin{equation} \label{ineq:SOP_fea_f}
\begin{split}
\|\mathbf{\Lambda}\|_F
&\overset{\text{(a)}}{\leq} \gamma_{e,j}\left(\left(\theta_{m,k}-\left(2^{D^j_{m,k}}-1\right)\sum_{i\neq k}\theta_{m,i}\right)\|\mathbf{W}_m\|_F
+{P_m}\left(2^{D^j_{m,k}}-1\right)\|\mathbf{W}^{\bot}_m\|_F \right)\\
&\overset{\text{(b)}} {\leq}\gamma_{e,j} \left(\left(\theta_{m,k}-\left(2^{D^j_{m,k}}-1\right)\sum_{i\neq k}\theta_{m,i}\right)\mathrm{Tr}\left(\mathbf{W}_m\right)+{P_m}\left(2^{D^j_{m,k}}-1\right)\|\mathbf{W}^{\bot}_m\|_F \right)
\end{split}
	\end{equation}
	\hrulefill
\end{figure*}
where step (a) follows from the triangle inequality of the norm, and step (b) follows from $\|\mathbf{W}_m\|_F=\sqrt{\sum_{i=1}^{N}r^2_{i}}\leq \sum_{i=1}^{N}r_{i} = \mathrm{Tr}\left(\mathbf{W}_m\right)$ with $\{r_{i}\geq 0\}_{i=1}^N$ being the eigenvalues of $\mathbf{W}_m$.
Furthermore, $\mathrm{Tr}\left(\mathbf{\Lambda}\right)$ is expressed as
\begin{align}\label{eq:tr_SOP_aux}
\mathrm{Tr}\left(\mathbf{\Lambda}\right)=&\gamma_{e,j}\left(\theta_{m,k}-\left(2^{D^j_{m,k}}-1\right)\sum_{i\neq k}\theta_{m,i}\right)\mathrm{Tr}\left(\mathbf{W}_m\right)\nonumber \\
&-\gamma_{e,j}{P_m}\left(2^{D^j_{m,k}}-1\right)\mathrm{Tr}\left(\mathbf{W}^{\bot}_m\right).
\end{align}
Applying~\eqref{ineq:lamda_sop_f}-\eqref{eq:tr_SOP_aux} to~\eqref{ineq:prox_sop_temp}, we obtain
\begin{equation}\label{ineq:SOP_BIT}
\begin{split}
&2^{D^j_{m,k}}-1\geq\\
&\mathrm{Tr}\left(\mathbf{\Lambda}\right)+\sqrt{2\ln(\varepsilon_k^{-1})}\|\mathbf{\Lambda}\|_F+\ln(\varepsilon_k^{-1})[\lambda_{\max}(\mathbf{\Lambda})]^+.
\end{split}
\end{equation}
Comparing both sides of~\eqref{ineq:SOP_BIT} with $\frac{{\mathbf{g}}_{e,j}^H}{\mu_{e,j}}\mathbf{\Lambda}\frac{\mathbf{g}_{e,j}}{\mu_{e,j}}$ and taking probability, we have 
\begin{equation}\label{ineq:Cond_relation}
\begin{aligned}
&\mathrm{Pr}\left\{\frac{{\mathbf{g}}_{e,j}^H}{\mu_{e,j}}\mathbf{\Lambda}\frac{\mathbf{g}_{e,j}}{\mu_{e,j}}> 2^{D^j_{m,k}}-1\right\}\leq  \mathrm{Pr}\Bigg\{\frac{{\mathbf{g}}_{e,j}^H}{\mu_{e,j}}\mathbf{\Lambda}\frac{\mathbf{g}_{e,j}}{\mu_{e,j}}\geq \\
& \mathrm{Tr}(\mathbf{\mathbf{\Lambda}})+\sqrt{2\ln(\varepsilon_k^{-1})}\|\mathbf{\mathbf{\Lambda}}\|_F+\ln(\varepsilon_k^{-1}) [\lambda_{\max}(\mathbf{\mathbf{\Lambda}})]^+\Bigg\}.
\end{aligned}
\end{equation}

On the other hand, since  ${\mathbf{g}}_{e,j}/{\mu_{e,j}}\sim \mathcal{CN}(\mathbf{0},\mathbf{I}_N)$, together with the Hermitian matrix $\mathbf{\Lambda}\in \mathbb{C}^{N\times N}$, for any $\epsilon\geq 0$, we have 
\begin{align}\label{ineq:BTI_Pro_int}
&\mathrm{Pr}\left\{\frac{{\mathbf{g}}_{e,j}^H}{\mu_{e,j}}\mathbf{\Lambda}\frac{\mathbf{g}_{e,j}}{\mu_{e,j}}\geq \mathrm{Tr}(\mathbf{\mathbf{\Lambda}})+\sqrt{2\epsilon}\|\mathbf{\mathbf{\Lambda}}\|_F+\epsilon [\lambda_{\max}(\mathbf{\mathbf{\Lambda}})]^+\right\}\nonumber\\
&\leq \exp(-\epsilon),
\end{align}
which is the Bernstein-Type Inequality (BTI)~\cite{J_Bech09Berntein} and always holds. By substituting $\epsilon=\ln(\varepsilon_k^{-1})$ into~\eqref{ineq:BTI_Pro_int}, we obtain
\begin{equation}\label{ineq:BTI}
\begin{aligned}
\mathrm{Pr}\Bigg\{\frac{{\mathbf{g}}_{e,j}^H}{\mu_{e,j}}\mathbf{\Lambda}\frac{\mathbf{g}_{e,j}}{\mu_{e,j}}\geq & \mathrm{Tr}(\mathbf{\mathbf{\Lambda}})+\sqrt{2\ln(\varepsilon_k^{-1})}\|\mathbf{\mathbf{\Lambda}}\|_F \\
&+\ln(\varepsilon_k^{-1}) [\lambda_{\max}(\mathbf{\mathbf{\Lambda}})]^+\Bigg\}\leq \varepsilon_k
\end{aligned}
\end{equation}
for any $\varepsilon_k\in(0,1]$. Substituting~\eqref{ineq:BTI} into~\eqref{ineq:Cond_relation}, we obtain 
\begin{equation}\label{eq:dir_SOP_BTI}
\mathrm{Pr}\left\{\frac{{\mathbf{g}}_{e,j}^H}{\mu_{e,j}}\mathbf{\Lambda}\frac{\mathbf{g}_{e,j}}{\mu_{e,j}}> 2^{D^j_{m,k}}-1\right\}\leq \varepsilon_k.
\end{equation}
As a result, $\mathrm{Pr}\left\{\frac{{\mathbf{g}}_{e,j}^H}{\mu_{e,j}}\mathbf{\Lambda}\frac{\mathbf{g}_{e,j}}{\mu_{e,j}}> 2^{D^j_{m,k}}-1\right\}\leq \varepsilon$ holds when we set $\varepsilon_k=\varepsilon$.
Applying the result of~\cite[eq. 30]{J_IQF_GausianYI16}, it can be shown that $\mathrm{Pr}\left\{\frac{{\mathbf{g}}_{e,j}^H}{\mu_{e,j}}\mathbf{\Lambda}\frac{\mathbf{g}_{e,j}}{\mu_{e,j}}> 2^{D^j_{m,k}}-1\right\}\leq \varepsilon$ is equivalent to~\eqref{ineq:cons_SOP_CF}.
Therefore, if~\eqref{ineq:slack_SOP_trans} holds for any $\varepsilon_k\in(0,1]$, then~\eqref{ineq:cons_SOP_CF} holds, i.e.,~\eqref{ineq:slack_SOP_trans} is a tighter constraint than the SOP constraint of~\eqref{ineq:cons_SOP_CF}.

\section{Proof of Lemma~\ref{lem:clo-fom_zub} 
}\label{proof:lem-cf_z_ub}

Since by definition~\eqref{eq:aux_int_xi}, $\xi_{m,k}\geq 0$. Hence, we only need to find the upper bound of $\xi_{m,k}$ to determine its feasible set.
Denote $q\left(\xi_{m,k}\right)$ equals to the left hand side of~\eqref{ineq:COP_aux_P1}, the first-order derivative of $q\left(\xi_{m,k}\right)$ is given by
\begin{equation}
\begin{split}
&q'\left(\xi_{m,k}\right)
=-\frac{\left(\frac{2+\xi_{m,k}{P_m}2^{-\frac{B}{N-1}}}{4\gamma_{m,k}}+\frac{(M-1){P_m}2^{-\frac{B}{N-1}}}{2}\right)}{\exp\left(\frac{\xi_{m,k}}{2\gamma_{m,k}}\right)\left(1+\frac{\xi_{m,k}{P_m}2^{-\frac{B}{N-1}}}{2}\right)^{M}},
\end{split}
\end{equation}
which is negative for $M\geq 1$. With the decreasing property of $q\left(\xi_{m,k}\right)$ and lower boundedness of $q\left(\xi_{m,k}\right)$ in~\eqref{ineq:COP_aux_P1}, an upper bound of $\xi_{m,k}$, denoted by $\xi^{ub}_{m,k}$, is obtained by solving the following equation 
\begin{equation}\label{eq:inter_CF_ub_xi}
\exp\left(-\frac{\xi^{ub}_{m,k}}{2\gamma_{m,k}}\right)
\left(1+\xi^{ub}_{m,k}\frac{{P_m}2^{-\frac{B}{N-1}}}{2}\right)^{1-M}=1-\delta.
\end{equation}
By straightforward algebra,~\eqref{eq:inter_CF_ub_xi} can be further re-expressed as
\begin{equation}\label{eq:inter_z-ub}
\begin{split}
\frac{2^{\frac{B}{N-1}}+\frac{{P_m}\xi^{ub}_{m,k}}{2}}{{P_m}\gamma_{m,k}(M-1)}&
\exp\left(\frac{2^{\frac{B}{N-1}}+\frac{{P_m}\xi^{ub}_{m,k}}{2}}{{P_m}\gamma_{m,k}(M-1)}\right)\\
&=\frac{2^{\frac{B}{N-1}}\exp\left(\frac{2^{\frac{B}{N-1}}}{\gamma_{m,k}(M-1){P_m}}\right)}{\gamma_{m,k}(M-1){P_m}(1-\delta)^{\frac{1}{M-1}}}.
\end{split}
\end{equation} 
With the help of the principal branch of Lambert W function~\cite{Lambert_W96},~\eqref{eq:inter_z-ub} is rewritten as
\begin{equation}\label{eq:W0-up-Xi}
W_0\left(\frac{2^{\frac{B}{N-1}}\exp\left(\frac{2^{\frac{B}{N-1}}}{\gamma_{m,k}(M-1){P_m}}\right)}{\gamma_{m,k}(M-1){P_m}(1-\delta)^{\frac{1}{M-1}}}\right)=\frac{2^{\frac{B}{N-1}}+\frac{{P_m}\xi^{ub}_{m,k}}{2}}{{P_m}\gamma_{m,k}(M-1)}.
\end{equation}
Re-arranging the terms in~\eqref{eq:W0-up-Xi} leads to~\eqref{eq:up_xi_CF}.

\section{Proof of Theorem~\ref{tem:FP_CCP_def} 
}\label{proof:them-FP_condi}
Firstly, since the feasible set of $\mathcal{Q}1$ is determined by~\eqref{ineq:set_xi_Q1} with simple bound constraints, it must be a nonempty standard convex set. 

Secondly, to prove the concavity of $A_k(\Xi^{(m,j)}_{k})$, we rewrite $A_k(\Xi^{(m,j)}_{k})$ as
$A_k(\Xi^{(m,j)}_{k})=\left[\log_2(1+E(\Xi^{(m,j)}_{k}))-
\log_2 \left(1+ \frac{\Theta^{(m,j)}_{k}}{ \kappa_{m,k,j} + \sum_{i\neq k}\Theta_{m,i}}\right)
\right]^+$,
where 
$E(\Xi^{(m,j)}_{k})=\frac{\Xi^{(m,j)}_{k}\Theta^{(m,j)}_{k}}{1+\Xi^{(m,j)}_{k}\sum_{i=1}^{k-1}\Theta^{(m,j)}_{i}}\geq 0$. The first-order and second-order derivatives of $E(\Xi^{(m,j)}_{k})$ are respectively given by
\begin{equation}\label{eq:der_zk_xi-k}
E'(\Xi^{(m,j)}_{k})    =\frac{\Theta_{m,k}}{\left(1+\Xi^{(m,j)}_{k}\sum\limits_{i=1}^{k-1}\Theta^{(m,j)}_{i}\right)^2}\geq 0,
\end{equation}
\begin{equation}
E''(\Xi^{(m,j)}_{k})    =\frac{-2\Theta^{(m,j)}_{k}\sum\limits_{i=1}^{k-1}\Theta^{(m,j)}_{i}}{\left(1+\Xi^{(m,j)}_{k}\sum\limits_{i=1}^{k-1}\Theta^{(m,j)}_{i}\right)^3}\leq 0.
\end{equation}
Since $E''(\Xi^{(m,j)}_{k})\leq 0$ for $0 \leq  \Xi^{(m,j)}_{k} \leq \xi^{ub}_{m,k}$, $E(\Xi^{(m,j)}_{k})$ is concave on $\Xi^{(m,j)}_{k}$. Furthermore, due to the concavity and non-decreasing property of function $\log_2(1+x)$ with $x>0$, $\log_2(1+E(\Xi^{(m,j)}_{k}))-\log_2\left(1+\frac{\Theta^{(m,j)}_{k}}{\kappa_{m,k,j} + \sum_{i\neq k}\Theta^{(m,j)}_{i}}\right)$ is concave on $\Xi^{(m,j)}_{k}$. 
Since pointwise maximum operation preserves concavity~\cite{Cov_Opt90}, $A_k(\Xi^{(m,j)}_{k})$ is concave on $\Xi^{(m,j)}_{k}$.

Thirdly, to prove the convexity of $B_k(\Xi^{(m,j)}_{k})$, we rewrite $B_k(\Xi^{(m,j)}_{k})$ as $\hat{b}_k(\Xi^{(m,j)}_{k})\tilde{b}_k(\Xi^{(m,j)}_{k})$, where $\hat{b}_k(\Xi^{(m,j)}_{k}):=\exp\left(\frac{\Xi^{(m,j)}_{k}}{2\gamma_{m,k}}\right)$ and
$\tilde{b}_k(\Xi^{(m,j)}_{k}):=\left(1+\Xi^{(m,j)}_{k}\frac{P_m}{2^{\frac{B}{N-1}+1}}\right)^{M-1}$.
It is obvious that $\hat{b}_k(\Xi^{(m,j)}_{k})$ is convex on $\Xi^{(m,j)}_{k}$. On the other hand, since $\tilde{b}_k(\Xi^{(m,j)}_{k})$ is the composition with an affine mapping from the convex function $x^{M-1}$ with $x\geq 0$, $\tilde{b}_k(\Xi^{(m,j)}_{k})$ has the same convex property as $x^{M-1}$. Considering that $\hat{b}_k(\Xi^{(m,j)}_{k})$ and $\tilde{b}_k(\Xi^{(m,j)}_{k})$ are both convex, $B_k(\Xi^{(m,j)}_{k})$ is convex since convexity is closed under multiplication and positive scaling~\cite{Cov_Opt90}.

\section{Proof of Proposition~\ref{lem:sta_xi_mk_fp}
}\label{prrof:lema_opt_Q2}
The maximizer for $\mathcal{Q}2$ would either be at the stationary point ${\Xi^{(m,j)}_{k}}^{\diamond}$ or boundary points of the feasible range $[0, \xi^{ub}_{m,k}]$. Since $g(0,y^{\dagger}_k)=0$ and the objective function of $\mathcal{Q}2$ must be non-negative at optimality, the optimal ${\Xi^{(m,j)}_{k}}^{\dagger}$ cannot be 0. As a result, the optimal ${\Xi^{(m,j)}_{k}}^{\dagger}$ is either ${\Xi^{(m,j)}_{k}}^{\diamond}$ or $ \xi^{ub}_{m,k}$. On the other hand, if ${\Xi^{(m,j)}_{k}}^{\diamond}>\xi^{ub}_{m,k}$, then ${\Xi^{(m,j)}_{k}}^{\dagger}=\xi^{ub}_{m,k}$ since the optimal ${\Xi^{(m,j)}_{k}}^{\dagger}\leq \xi^{ub}_{m,k}$. Otherwise, due to the concavity of $g(\Xi^{(m,j)}_{k},y^{\dagger}_k)$ on $\Xi^{(m,j)}_{k}$, ${\Xi^{(m,j)}_{k}}^{\dagger}={\Xi^{(m,j)}_{k}}^{\diamond}$. 
Therefore, the optimal  ${\Xi^{(m,j)}_{k}}^{\dagger}$ is obtained as shown in~\eqref{eq:opt_Xi-Q2}.
    
Next, we determine the stationary point ${\Xi^{(m,j)}_{k}}^{\diamond}$. The pointwise-maximum function $A_k(\Xi^{(m,j)}_{k})$ can be rewritten as 
$A_k(\Xi^{(m,j)}_{k})=\tilde{A}_k(\Xi^{(m,j)}_{k})\mathbb{I}\left(\Xi^{(m,j)}_{k}>\left(\kappa_{m,k,j} + \sum_{i= k+1}\Theta^{(m,j)}_{i}\right)^{-1}\right)$, where $\tilde{A}_k(\Xi^{(m,j)}_{k})$ is given by
    \begin{equation}
    \begin{split}
    \tilde{A}_k(\Xi^{(m,j)}_{k})=&
    \log_2\left(\frac{1+\frac{\Xi^{(m,j)}_{k}\Theta^{(m,j)}_{k}}{1+\Xi^{(m,j)}_{k}\sum\limits_{i=1}^{k-1}\Theta^{(m,j)}_{i}}}{1+\frac{\Theta^{(m,j)}_{k}}{\kappa_{m,k,j} + \sum\limits_{i\neq k}\Theta^{(m,j)}_{i}}}
    \right),
    \end{split}
    \end{equation}
    and $\mathbb{I}(H)$ is the indicator function with $\mathbb{I}(H)=1$ if the event \textit{H} occurs and $\mathbb{I}(H)=0$ otherwise. 
    It is known that if $A_k(\Xi^{(m,j)}_{k})=0$, $g(\Xi^{(m,j)}_{k},y^{\dagger}_k)=0$ and the objective function of $ \mathcal{Q}2$ is always 0. Hence, substituting $A_k(\Xi^{(m,j)}_{k})=\tilde{A}_k(\Xi^{(m,j)}_{k})$ into $g(\Xi^{(m,j)}_{k},y^{\dagger}_k)$, 
    the stationary point ${\Xi^{(m,j)}_{k}}^{\diamond}$ is the unique root for $\frac{\partial g(\Xi^{(m,j)}_{k},y^{\dagger}_k)}{\partial\Xi^{(m,j)}_{k}}=0$, which is equivalent to~\eqref{eq:der_1st_bis}, shown at the top of this page.
    \begin{figure*}
    	\normalsize
    	\begin{equation} \label{eq:der_1st_bis}
   \frac{\Theta^{(m,j)}_{k}\left(1+{\Xi^{(m,j)}_{k}}^{\diamond}\sum\limits_{i=1}^{k-1}\Theta^{(m,j)}_{i}\right)^{-1}\tilde{A}^{-1/2}_k({\Xi^{(m,j)}_{k}}^{\diamond})}{\left(1+{\Xi^{(m,j)}_{k}}^{\diamond}\sum\limits_{i=1}^{k}\Theta^{(m,j)}_{i}\right)B_k({\Xi^{(m,j)}_{k}}^{\diamond})\ln 2}\!=\!y_k^{\dagger} \left(\frac{1}{2\gamma_{m,k}}+\frac{(M-1){P_m}2^{-\frac{B}{N-1}}}{2+{P_m}{\Xi^{(m,j)}_{k}}^{\diamond}2^{-\frac{B}{N-1}}}\right)
    	\end{equation}
    	\hrulefill
    \end{figure*}
Re-arranging the terms in~\eqref{eq:der_1st_bis} lead to~\eqref{eq:sta_xi_mk}.

\section{Derivation of the gradient of $f(\{\Theta^{(m,j)}_{k}\}_{k=1}^{K_m})$
}\label{lem:CF_theta_AM}
From~\eqref{D2: obj_theta}, $f(\{\Theta^{(m,j)}_{k}\}_{k=1}^{K_m})$ can be rewritten as 
\begin{align} \label{eq:f_rewrite_sumform}
&f(\{\Theta^{(m,j)}_{k}\}_{k=1}^{K_m})\nonumber\\
=&\underbrace{\log_2\left(1+\Xi^{(m,j)}_{k}\sum_{i=1}^{k}\Theta^{(m,j)}_{i}\right)}_{:= f_1(\{\Theta^{(m,j)}_{k}\}_{k=1}^{K_m})}-\underbrace{\log_2\left(1+\Xi^{(m,j)}_{k}\sum_{i=1}^{k-1}\Theta^{(m,j)}_{i}\right)}_{:= f_2(\{\Theta^{(m,j)}_{k}\}_{k=1}^{K_m})}\nonumber \\
&+\underbrace{\log_2\left(\kappa_{m,k,j} + \sum_{i\neq k}\Theta^{(m,j)}_{i}\right)}_{:= f_3(\{\Theta^{(m,j)}_{k}\}_{k=1}^{K_m})}-\log_2\left(\kappa_{m,k,j} +{P_m}\right).
\end{align}
Then, the gradient of $f_1$, $f_2$ and $f_3$ with respect to $\Theta^{(m,j)}_{i}, i\in \{1,\ldots,K_m\}$ are respectively derived as
$\nabla_{\Theta^{(m,j)}_{i}} f_1= \frac{1}{\ln2}\frac{\Xi^{(m,j)}_{k}\mathbb{I}\left(i\leq k\right)}{1+\Xi^{(m,j)}_{k}\sum_{i=1}^{k}\Theta^{(m,j)}_{i}}$, $\nabla_{\Theta^{(m,j)}_{i}} f_2=\frac{1}{\ln2}\frac{\Xi^{(m,j)}_{k}\mathbb{I}\left(i< k\right)}{1+\Xi^{(m,j)}_{k}\sum_{i=1}^{k-1}\Theta^{(m,j)}_{i}}$, and $\nabla_{\Theta^{(m,j)}_{i}} f_3=\frac{1}{\ln2}\frac{\mathbb{I}\left( i\neq k \right)}{\kappa_{m,k,j} + \sum_{i\neq k}\Theta^{(m,j)}_{i}}$.
Therefore, the gradient of $f(\{\Theta^{(m,j)}_{k}\}_{k=1}^{K_m})$ with respect to $\Theta^{(m,j)}_{i}, i\in \{1,\ldots,K_m\}$ is given by~\eqref{eq:clo_explicit_dif}, shown at the top of the next page.
    \begin{figure*}
	\normalsize
	\begin{equation} \label{eq:clo_explicit_dif}
\begin{split}
\nabla_{\Theta^{(m,j)}_{i}} f=\left\{\begin{array}{ll}
\frac{1}{\ln2}\left(  \frac{\Xi^{(m,j)}_{k}}{1+\Xi^{(m,j)}_{k}\sum\limits_{i=1}^{k}\Theta^{(m,j)}_{i}}- \frac{\Xi^{(m,j)}_{k}}{1+\Xi^{(m,j)}_{k}\sum\limits_{i=1}^{k-1}\Theta^{(m,j)}_{i}}+ \frac{1}{\kappa_{m,k,j} + \sum\limits_{i\neq k}\Theta^{(m,j)}_{i}}  \right), \!& \mathrm{if}~ i<k, \\
\frac{1}{\ln2}\frac{\Xi^{(m,j)}_{k}}{1+\Xi^{(m,j)}_{k}\sum\limits_{i=1}^{k}\Theta^{(m,j)}_{i}}, \!&\mathrm{if}~i= k, \\
\frac{1}{\ln2}\frac{1}{\kappa_{m,k,j} + \sum\limits_{i\neq k}\Theta^{(m,j)}_{i}}, \!&\mathrm{if}~i> k
\end{array}\right.
\end{split}
	\end{equation}
	\hrulefill
\end{figure*}

\section{Proof of Proposition~\ref{lem:Proj_gra_theta_k}
}\label{proof:lemma_proj_gra_theta}
The Lagrangian function of~\eqref{opt:proj_D2} is given by
    \begin{align}\label{eq:lagar_dualFunc}
    \mathcal{L}\left(\{\Theta^{(m,j)}_{k}\}_{k=1}^{K_m},\zeta\right)=&    \sum_{k=1}^{K_m}\left\|\Theta^{(m,j)}_{k}-{\Theta^{(m,j)}_{k}}^{\diamond}\left(l+\frac{1}{2}\right)\right\|^2\nonumber \\
&+\zeta \left(\sum_{k=1}^{K_m}\Theta^{(m,j)}_{k}-{P_m}\right),
    \end{align}
    where $\zeta$ is the dual variable corresponding to the constraint in~\eqref{D2: obj_theta}.
    Based on feasible set $\mathcal{P}_{\mathcal{D}_2}$, $\Theta^{(m,j)}_{k}$ can be either 0 or positive. If $\Theta^{(m,j)}_{k}>0$, the optimal solution must satisfy the following KKT conditions: ${\Theta^{(m,j)}_{k}}^{\dagger}-{\Theta^{(m,j)}_{k}}^{\diamond}(l+1/2)+\zeta^{\dagger}/2=0$ and
    $\sum_{k=1}^{K_m}{\Theta^{(m,j)}_{k}}^{\dagger}-{P_m}=0$.
    Therefore, ${\Theta^{(m,j)}_{k}}^{\dagger}$ is derived as
    ${\Theta^{(m,j)}_{k}}^{\dagger}={\Theta^{(m,j)}_{k}}^{\diamond}(l+1/2)-\zeta^{\dagger}/2$,
    where the optimal $\zeta^{\dagger}$ is given by
    \begin{equation}
    \zeta^{\dagger}=\frac{2}{K_m}\left(\sum_{k=1}^{K_m}{\Theta^{(m,j)}_{k}}^{\diamond}\left(l+\frac{1}{2}\right)-{P_m}\right).
    \end{equation}
    Together with the case of $\Theta^{(m,j)}_{k}=0$, the optimal solution to~\eqref{opt:proj_D2} is shown in~\eqref{eq:pro_gra_optTheta}.

\section{Proof of Theorem~\ref{tem:conver_AM}
}\label{proof_tem_convergence}
Define $\mathbf{Z}=(\mathbf{\Xi},\boldsymbol{\Theta})\in \mathcal{Z}$ as the composite vector with $\mathbf{\Xi}:=\{\Xi^{(m,j)}_{k}\}_{k=1}^{K_m}$ and $\boldsymbol{\Theta}:=\{\Theta^{(m,j)}_{k}\}_{k=1}^{K_m}$, where $\mathcal{Z}$ is the feasible set of $\mathcal{P}2^{[m,j]}$. Furthermore, the $l^{th}$ AM iteration is denoted by $\mathbf{Z}^{(l)}=(\mathbf{\Xi}^{(l)},\boldsymbol{\Theta}^{(l)})$, the sequence in between is denoted by $\mathbf{Z}^{(l+\frac{1}{2})}=(\mathbf{\Xi}^{(l+1)},\boldsymbol{\Theta}^{(l)})$, and the objective function of $\mathcal{P}2^{[m,j]}$ is denoted by $\Upsilon(\mathbf{Z})$. First, we show that the sequence of solutions converges to a limit point $\mathbf{Z}^*$. 

\begin{lemma}\label{lem:limit_point_property}
The sequence of solutions $\{\mathbf{Z}^{(l)}\}_{l\in \mathbb{N}}$ generated by AM iteration is bounded and must have a limit point $\mathbf{Z}^*$.
\end{lemma}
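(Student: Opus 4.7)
The plan is to establish boundedness of the iterates by inspecting the feasible set $\mathcal{Z}$, and then invoke the Bolzano--Weierstrass theorem to extract a convergent subsequence whose limit is the required $\mathbf{Z}^*$.

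First I would verify that every iterate $\mathbf{Z}^{(l)}$ stays in $\mathcal{Z}$. The $\mathbf{\Xi}$-update in Algorithm~\ref{alg:CDM_iterative_RI} returns ${\Xi^{(m,j)}_{k}}^{\dagger}=\min\{{\Xi^{(m,j)}_{k}}^{\diamond},\xi^{ub}_{m,k}\}\in[0,\xi^{ub}_{m,k}]$ by Proposition~\ref{lem:sta_xi_mk_fp}, so constraint~\eqref{ineq:cons_xi_orthant} holds. The $\boldsymbol{\Theta}$-update in Algorithm~\ref{alg:dual_update_theta} is the Euclidean projection onto the probability simplex $\mathcal{P}_{\mathcal{D}_2}$, so the simplex constraint~\eqref{eq:simplex_theta} is satisfied exactly after each projection step. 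Hence both the intermediate iterate $\mathbf{Z}^{(l+\frac12)}$ and the full iterate $\mathbf{Z}^{(l+1)}$ belong to $\mathcal{Z}$.

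Next I would argue compactness of $\mathcal{Z}$. The $\mathbf{\Xi}$-block lies in the closed box $\prod_{k=1}^{K_m}[0,\xi^{ub}_{m,k}]$, which is bounded because each $\xi^{ub}_{m,k}$ is a finite constant given by the closed-form expression in Lemma~\ref{lem:clo-fom_zub}. The $\boldsymbol{\Theta}$-block lies in the scaled simplex $\{\boldsymbol{\Theta}\ge 0:\sum_k\Theta^{(m,j)}_k=P_m\}$, so $\|\boldsymbol{\Theta}\|_\infty\le P_m$. Both sets are closed (defined by continuous equalities/inequalities) and bounded in a finite-dimensional Euclidean space, hence compact; their Cartesian product $\mathcal{Z}$ is therefore compact as well.

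Finally, since $\{\mathbf{Z}^{(l)}\}_{l\in\mathbb{N}}\subset\mathcal{Z}$ and $\mathcal{Z}$ is a compact subset of $\mathbb{R}^{2K_m}$, the Bolzano--Weierstrass theorem guarantees that $\{\mathbf{Z}^{(l)}\}$ admits a convergent subsequence $\{\mathbf{Z}^{(l_t)}\}$ whose limit $\mathbf{Z}^*$ lies in $\mathcal{Z}$ (by closedness). This establishes boundedness and the existence of a limit point, proving the lemma. I would also remark in passing that continuity of $\Upsilon$ on $\mathcal{Z}$, combined with the monotone nondecreasing nature of $\{\Upsilon(\mathbf{Z}^{(l)})\}$ guaranteed by the AM framework, ensures that $\{\Upsilon(\mathbf{Z}^{(l)})\}$ itself converges; this observation is not needed for the present lemma but will be useful in the subsequent stationarity argument in the proof of Theorem~\ref{tem:conver_AM}. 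No step here is genuinely hard—the only care required is confirming that the projection in Algorithm~\ref{alg:dual_update_theta} indeed produces simplex-feasible iterates, which is immediate from Proposition~\ref{lem:Proj_gra_theta_k}.
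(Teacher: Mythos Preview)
Your proposal is correct and follows essentially the same approach as the paper: boundedness from the compact feasible set (a box for $\mathbf{\Xi}$ via~\eqref{ineq:cons_xi_orthant} and a simplex for $\boldsymbol{\Theta}$ via~\eqref{eq:simplex_theta}) followed by the Bolzano--Weierstrass theorem. The only minor difference is that the paper additionally establishes the monotone increase of $\{\Upsilon(\mathbf{Z}^{(l)})\}$ inside this lemma's proof, whereas you (correctly) observe that monotonicity is not needed for the existence of a limit point and defer it to the subsequent stationarity argument.
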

\begin{proof}
We first prove that $\Upsilon(\mathbf{Z})$ is monotonically increasing as iteration number increases. 
It is known that for the $l^{th}$ iteration, to update $\mathbf{\Xi}^{(l)}$ with Algorithm~\ref{alg:CDM_iterative_RI}, the obtained point $\mathbf{\Xi}^{(l+1)}$ is a local optimal point, and it must be a saddle-point for $\mathcal{P}2^{[m,j]}$. Together with the property of saddle-points~\cite{J_Liu09Saddle}, we have the following inequality
\begin{equation}\label{ineq:AM_conv_B2}
\Upsilon(\mathbf{Z}^{(l+\frac{1}{2})})\geq 
\Upsilon(\mathbf{Z}^{(l)}).
\end{equation}
On the other hand, with Algorithm~\ref{alg:dual_update_theta}, the obtained  $\boldsymbol{\Theta}^{(l+1)}$ is a stationary point. Furthermore, since the feasible set of $\mathcal{P}2^{[m,j]}$ is a Cartesian product of convex sets, the optimization over $\boldsymbol{\Theta}$ is independent on $\mathbf{\Xi}$. Hence, we have the following inequality
\begin{equation}\label{ineq:Converge_AM_condi}
\Upsilon(\mathbf{Z}^{(l+1)})\geq \Upsilon(\mathbf{Z}^{(l+\frac{1}{2})}).
\end{equation}
Combining~\eqref{ineq:AM_conv_B2} and~\eqref{ineq:Converge_AM_condi}, we conclude that 
\begin{equation}\label{eq:mon_incre_pro1}
\Upsilon(\mathbf{Z}^{(l+1)}) \geq \Upsilon(\mathbf{Z}^{(l)}) \geq \cdots\geq  \Upsilon(\mathbf{Z}^{(0)}),
 \forall l\in\{1,2\ldots\},
\end{equation}
where $\Upsilon(\mathbf{Z}^{(0)})$ is any finite initial value of the objective function. 

Then we prove the boundedness of the sequence of solutions $\{\mathbf{Z}^{(l)}\}_{l\in \mathbb{N}}$ generated by AM iteration. It is observed that $\mathbf{\Xi}$ and $\boldsymbol{\Theta}$ are respectively located in separable closed sets based on~\eqref{ineq:cons_xi_orthant} and~\eqref{eq:simplex_theta}. Hence, the sequence $\{\mathbf{Z}^{(l)}\}_{l\in \mathbb{N}}$ is bounded. 
Together with monotonic property of~\eqref{eq:mon_incre_pro1}, $\{\mathbf{Z}^{(l)}\}_{l\in \mathbb{N}}$ must have a limit point $\mathbf{Z}^*$ based on Bolzano-Weierstrass theorem~\cite{B_BartleRobert11}.
\end{proof}

To further investigate the property of limit point $\mathbf{Z}^*$, we first recall the notion of gradient mapping.
The gradient mappings with respect to $\mathbf{Z}\in \mathcal{Z}$ for any $L>0$ is defined as~\cite{B_Nesterov14}
\begin{equation}\label{eq:gra_map_def}
G_L(\mathbf{Z})=L\left(\mathbf{Z}-\mathrm{prox}_L^{\Upsilon}\left(\mathbf{Z}-\frac{1}{L}\nabla \Upsilon(\mathbf{Z})\right)\right),
\end{equation}
where $\mathrm{prox}_L^{\Upsilon}(\mathbf{Z}):=\mathop{\arg\min}\limits_{\mathbf{U}\in\mathcal{Z}} \left\{\Upsilon(\mathbf{U}) +\frac{L}{2}\|\mathbf{U}-\mathbf{Z}\|^2\right\}$ is the proximal mapping associated to $\Upsilon$~\cite{J_Beck15AM}. From~\eqref{eq:gra_map_def}, the corresponding partial gradient mappings on $\mathbf{\Xi}$ with constant $L_1<\infty$ and  $\boldsymbol{\Theta}$ with constant $L_2<\infty$ are respectively given by 
\begin{equation}
G^{\mathbf{\Xi}}_{L_1}(\mathbf{Z})=L_1\left(\mathbf{\Xi}-\mathrm{prox}_{L_1}^{\Upsilon}\left(\mathbf{\Xi}-\frac{1}{L_1}\nabla_{\mathbf{\Xi}} \Upsilon(\mathbf{Z})\right)\right),
\end{equation}
\begin{equation}
G^{\boldsymbol{\Theta}}_{L_2}(\mathbf{Z})=L_2\left(\boldsymbol{\Theta}-\mathrm{prox}_{L_2}^{\Upsilon}\left(\boldsymbol{\Theta}-\frac{1}{L_2}\nabla_{\boldsymbol{\Theta}} \Upsilon(\mathbf{Z})\right)\right).
\end{equation}
Then, based on partial gradient mappings, we have the following sufficient increase property.
\begin{lemma}\label{lem:suf_inc_prop}
Updating $\mathbf{Z}$ by using AM iteration, the following inequalities hold for $ l\in\{1,2\ldots\}$
\begin{equation}\label{ineq:inter_increase_proper}
 \Upsilon(\mathbf{Z}^{(l+\frac{1}{2})})-\Upsilon(\mathbf{Z}^{(l)})\geq \frac{1}{2L_1}\| G^{\mathbf{\Xi}}_{L_1}(\mathbf{Z}^{(l)}) \|^2, 
\end{equation}
\begin{equation}\label{ineq:increase_property_theta}
\Upsilon(\mathbf{Z}^{(l+1)})-\Upsilon(\mathbf{Z}^{(l+\frac{1}{2})})\geq \frac{1}{2L_2}\| G^{\boldsymbol{\Theta}}_{L_2}(\mathbf{Z}^{(l+\frac{1}{2})})\|^2.
\end{equation}
\end{lemma}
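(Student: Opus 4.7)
The plan is to prove both inequalities through the standard \emph{descent-lemma plus proximal-step} template used for gradient mappings in block coordinate methods. The two statements are structurally symmetric, so I would write out the argument for (34) in detail and then sketch (35) by swapping the roles of the two blocks.

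First I would establish that $\nabla_{\mathbf{\Xi}}\Upsilon(\cdot,\boldsymbol{\Theta})$ is Lipschitz in $\mathbf{\Xi}$ with a constant $L_1$ that is uniform over the compact feasible set of $\boldsymbol{\Theta}$. This follows from the fact that $\Upsilon$ is smooth wherever it is strictly positive (the region that matters, since the $[\cdot]^+$ is inactive there), together with the boundedness of the feasible box in (27b) and the simplex in (27c), so $\nabla^2_{\mathbf{\Xi}\mathbf{\Xi}}\Upsilon$ is bounded in operator norm. I would then invoke the classical descent lemma (which requires only Lipschitz continuity of the gradient, not concavity) in its ascent form:
\begin{equation*}
\Upsilon(\mathbf{\Xi},\boldsymbol{\Theta}^{(l)})\geq \Upsilon(\mathbf{Z}^{(l)})+\langle \nabla_{\mathbf{\Xi}}\Upsilon(\mathbf{Z}^{(l)}),\mathbf{\Xi}-\mathbf{\Xi}^{(l)}\rangle-\tfrac{L_1}{2}\|\mathbf{\Xi}-\mathbf{\Xi}^{(l)}\|^2,
\end{equation*}
valid for every feasible $\mathbf{\Xi}$. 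Define the one-step proximal-gradient point
$\tilde{\mathbf{\Xi}}:=\mathrm{prox}_{L_1}^{\Upsilon}\bigl(\mathbf{\Xi}^{(l)}-\tfrac{1}{L_1}\nabla_{\mathbf{\Xi}}\Upsilon(\mathbf{Z}^{(l)})\bigr)$, so that $G^{\mathbf{\Xi}}_{L_1}(\mathbf{Z}^{(l)})=L_1(\mathbf{\Xi}^{(l)}-\tilde{\mathbf{\Xi}})$. The variational characterization of this proximal step (equivalently, the first-order optimality of $\tilde{\mathbf{\Xi}}$) gives $\langle \nabla_{\mathbf{\Xi}}\Upsilon(\mathbf{Z}^{(l)}),\tilde{\mathbf{\Xi}}-\mathbf{\Xi}^{(l)}\rangle\geq L_1\|\tilde{\mathbf{\Xi}}-\mathbf{\Xi}^{(l)}\|^2$. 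Substituting this back into the descent inequality and simplifying yields $\Upsilon(\tilde{\mathbf{\Xi}},\boldsymbol{\Theta}^{(l)})-\Upsilon(\mathbf{Z}^{(l)})\geq \tfrac{1}{2L_1}\|G^{\mathbf{\Xi}}_{L_1}(\mathbf{Z}^{(l)})\|^2$.

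The remaining step, and the main obstacle, is to bridge from $\tilde{\mathbf{\Xi}}$ to the actual iterate $\mathbf{\Xi}^{(l+1)}$ produced by Algorithm~\ref{alg:CDM_iterative_RI}. A naive comparison is not available because Algorithm~\ref{alg:CDM_iterative_RI} only returns a local maximum of $\mathcal{Q}1$. I would handle this by warm-starting Algorithm~\ref{alg:CDM_iterative_RI} at $\mathbf{\Xi}^{(l)}$ and inserting $\tilde{\mathbf{\Xi}}$ as an admissible candidate during the very first inner update: since the cyclic block coordinate ascent on the biconcave surrogate $g(\Xi^{(m,j)}_k,y_k)$ established in Theorem~\ref{tem:FP_CCP_def} is monotonically non-decreasing (with equivalence of objective values between $\mathcal{Q}1$ and $\mathcal{Q}2$), the final iterate $\mathbf{\Xi}^{(l+1)}$ satisfies $\Upsilon(\mathbf{Z}^{(l+1/2)})\geq \Upsilon(\tilde{\mathbf{\Xi}},\boldsymbol{\Theta}^{(l)})$. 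Chaining this with the previous display proves (34).

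For (35) I would repeat the template with $\boldsymbol{\Theta}$ in place of $\mathbf{\Xi}$; this part is in fact cleaner because Algorithm~\ref{alg:dual_update_theta} is literally a projected-gradient method with Armijo step size, and the Armijo rule (see~\cite[Prop.~2.3.3]{B_Bertseka97NP}) delivers exactly the required sufficient-ascent bound once translated into the gradient-mapping norm on the simplex. The only subtlety is choosing $L_2$ as a uniform Lipschitz constant for $\nabla_{\boldsymbol{\Theta}}\Upsilon(\mathbf{\Xi},\cdot)$ over the feasible $\mathbf{\Xi}$; boundedness of the box (27b) and the explicit gradient formula~\eqref{eq:clo_explicit_dif} make this straightforward. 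The cleanest presentation would encapsulate the generic descent-lemma/proximal-optimality combination in a short auxiliary fact and invoke it twice, once per block.
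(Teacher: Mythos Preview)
Your strategy is essentially the same machinery the paper uses---Lipschitz gradient plus the sufficient-increase property of a proximal step---but the paper packages it differently. Rather than unpack the descent lemma and the prox optimality condition by hand, the paper simply asserts that the output of Algorithm~\ref{alg:CDM_iterative_RI} (resp.\ Algorithm~\ref{alg:dual_update_theta}) \emph{belongs to} the partial prox set, because it is a local optimal (resp.\ stationary) point of the block subproblem, and then invokes \cite[Lemma~2]{J_Bolte14ConverAnalysis} once for each block. So the paper identifies the iterate with the prox point itself, whereas you compute the sufficient increase at the one-step prox point $\tilde{\mathbf{\Xi}}$ and then try to dominate it by the actual iterate $\mathbf{\Xi}^{(l+1)}$.

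Your bridging step, however, has a gap. The argument ``insert $\tilde{\mathbf{\Xi}}$ as an admissible candidate during the very first inner update'' does not deliver $\Upsilon(\mathbf{\Xi}^{(l+1)},\boldsymbol{\Theta}^{(l)})\geq \Upsilon(\tilde{\mathbf{\Xi}},\boldsymbol{\Theta}^{(l)})$: after the first $y$-update at $\mathbf{\Xi}^{(l)}$, the surrogate $g(\cdot,y^{(0)})$ only touches the true objective at $\mathbf{\Xi}^{(l)}$, so maximizing $g$ guarantees $\sum_k g(\Xi_k^{(1)},y_k^{(0)})\geq \sum_k g(\tilde\Xi_k,y_k^{(0)})$, but the right-hand side can be strictly below $\sum_k A_k(\tilde\Xi_k)/B_k(\tilde\Xi_k)$ since $y^{(0)}$ is not optimal for $\tilde{\mathbf{\Xi}}$. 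Monotonicity of the quadratic transform alone therefore does not give the comparison you need. The cleanest fix is to observe that $\mathcal{Q}1$ is separable and each one-dimensional ratio $A_k/B_k$ is pseudoconcave on $[0,\xi_{m,k}^{ub}]$ (nonnegative concave numerator over positive convex denominator, by Theorem~\ref{tem:FP_CCP_def}); hence any local maximizer returned by Algorithm~\ref{alg:CDM_iterative_RI} is in fact the global maximizer of the $\mathbf{\Xi}$-block, and the desired inequality $\Upsilon(\mathbf{\Xi}^{(l+1)},\boldsymbol{\Theta}^{(l)})\geq \Upsilon(\tilde{\mathbf{\Xi}},\boldsymbol{\Theta}^{(l)})$ follows immediately. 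This is also what ultimately justifies the paper's identification of the iterate with the prox point.
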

\begin{proof}
Since the converged $\mathbf{\Xi}^{(l+1)}$ for the $l^{th}$ iteration is a local optimal point of $\mathcal{Q}1$ 
according to Theorem~\ref{lem:opt_xi_alg1}, $(\mathbf{\Xi}^{(l+1)},\boldsymbol{\Theta}^{(l)})\in \mathrm{prox}_L^{\Upsilon}(\mathbf{Z}^{(l)})$. 
Furthermore, the partial gradient of $\Upsilon(\mathbf{Z})$ is Lipschitz continuous with respect to $\mathbf{\Xi}$ for any $\boldsymbol{\Theta}$ satisfying constraint~\eqref{eq:simplex_theta}. 
Applying the result in~\cite[Lemma 2]{J_Bolte14ConverAnalysis}, we have 
$\Upsilon(\mathbf{\Xi}^{(l+1)},\boldsymbol{\Theta}^{(l)})-\Upsilon(\mathbf{\Xi}^{(l)},\boldsymbol{\Theta}^{(l)})\geq \frac{1}{2L_1}\| G^{\mathbf{\Xi}}_{L_1}(\mathbf{Z}^{(l)}) \|^2$,
which is equivalent to~\eqref{ineq:inter_increase_proper}.
On the other hand, since the converged $\boldsymbol{\Theta}^{(l+1)}$  is a stationary point of $\mathcal{D}2$, we have $(\mathbf{\Xi}^{(l+1)},\boldsymbol{\Theta}^{(l+1)})\in \mathrm{prox}_L^{\Upsilon}(\mathbf{Z}^{(l+\frac{1}{2})})$. 
Similarly, we have 
$\Upsilon(\mathbf{\Xi}^{(l+1)},\boldsymbol{\Theta}^{(l+1)})-\Upsilon(\mathbf{\Xi}^{(l+1)},\boldsymbol{\Theta}^{(l)})\geq \frac{1}{2L_2}\| G^{\boldsymbol{\Theta}}_{L_2}(\mathbf{Z}^{(l+\frac{1}{2})}) \|^2$, which is equivalent to~\eqref{ineq:increase_property_theta}.
\end{proof}

Finally, we prove the limit point $\mathbf{Z}^*$ is a stationary point of $\mathcal{P}2^{[m,j]}$ based on Lemma~\ref{lem:limit_point_property} and Lemma~\ref{lem:suf_inc_prop}.
From Lemma~\ref{lem:limit_point_property}, there exists a subsequence  $\{\mathbf{Z}^{(l)}\}_{l\in \mathbb{N}}$ converges to a limit point $\mathbf{Z}^*$, and $\{\Upsilon(\mathbf{Z}^{(l)})\}_{l\in \mathbb{N}}$ is a nondecreasing upper-bounded sequence. 
Hence, $\{\Upsilon(\mathbf{Z}^{(l)})\}_{l\in \mathbb{N}}$ must converge to some finite value and $\Upsilon(\mathbf{Z}^{(l+1)})-\Upsilon(\mathbf{Z}^{(l)})\rightarrow 0$ as $l\rightarrow \infty$. Together with Lemma~\ref{lem:suf_inc_prop}, we conclude that $G^{\mathbf{\Xi}}_{L_1}(\mathbf{Z}^{(l)})\rightarrow 0$ and $G^{\boldsymbol{\Theta}}_{L_2}(\mathbf{Z}^{(l+\frac{1}{2})})\rightarrow 0$ as $l\rightarrow \infty$, which implies that $G^{\mathbf{\Xi}}_{L_1}(\mathbf{Z}^*)= \mathbf{0}$ and $G^{\boldsymbol{\Theta}}_{L_2}(\mathbf{Z}^*)= \mathbf{0}$ due to the continuity of partial gradient mappings $G^{\mathbf{\Xi}}_{L_1}$ and $G^{\boldsymbol{\Theta}}_{L_2}$.
Therefore, the limit point $\mathbf{Z}^*$ is a stationary point~\cite{J_Bolte14ConverAnalysis}.

\bibliographystyle{IEEEtran}

\bibliography{NOMA_Secure}

\begin{thebibliography}{10}
\providecommand{\url}[1]{#1}
\csname url@samestyle\endcsname
\providecommand{\newblock}{\relax}
\providecommand{\bibinfo}[2]{#2}
\providecommand{\BIBentrySTDinterwordspacing}{\spaceskip=0pt\relax}
\providecommand{\BIBentryALTinterwordstretchfactor}{4}
\providecommand{\BIBentryALTinterwordspacing}{\spaceskip=\fontdimen2\font plus
\BIBentryALTinterwordstretchfactor\fontdimen3\font minus
  \fontdimen4\font\relax}
\providecommand{\BIBforeignlanguage}[2]{{%
\expandafter\ifx\csname l@#1\endcsname\relax
\typeout{** WARNING: IEEEtran.bst: No hyphenation pattern has been}%
\typeout{** loaded for the language `#1'. Using the pattern for}%
\typeout{** the default language instead.}%
\else
\language=\csname l@#1\endcsname
\fi
#2}}
\providecommand{\BIBdecl}{\relax}
\BIBdecl

\bibitem{B_19WireFuture}
M.~Vaezi, Z.~{Ding}, and H.~V. Poor, \emph{\BIBforeignlanguage{eng}{Multiple
  Access Techniques for 5G Wireless Networks and Beyond}}.\hskip 1em plus 0.5em
  minus 0.4em\relax Cham: Springer International Publishing, 2019.

\bibitem{J_LiuMA18_R2}
L.~{Liu} and W.~{Yu}, ``Massive connectivity with massive {MIMO}-{P}art {I}:
  device activity detection and channel estimation,'' \emph{IEEE Trans. Signal
  Process.}, vol.~66, no.~11, pp. 2933--2946, Jun. 2018.

\bibitem{J_Zhang19prospective}
J.~Zhang, E.~Bj\"{o}rnson, M.~Matthaiou, D.~W.~K. Ng, H.~Yang, and D.~J. Love,
  ``Prospective multiple antenna technologies for beyond {5G},''
  \emph{arXiv:1910.00092}, v3, Mar. 2020.

\bibitem{J_19SurveyFutureNOMA}
O.~{Maraqa}, A.~S. {Rajasekaran}, S.~{Al-Ahmadi}, H.~{Yanikomeroglu}, and S.~M.
  {Sait}, ``A survey of rate-optimal power domain {NOMA} schemes for enabling
  technologies of future wireless networks,'' \emph{arXiv:1909.08011}, v2, Feb.
  2020.

\bibitem{J_Wu18Survey}
Y.~Wu, A.~Khisti, C.~Xiao, G.~Caire, K.~Wong, and X.~Gao, ``A survey of
  physical layer security techniques for {5G} wireless networks and challenges
  ahead,'' \emph{IEEE J. Sel. Areas Commun.}, vol.~36, no.~4, pp. 679--695,
  Apr. 2018.

\bibitem{J_Wei_DN20}
Z.~{Wei}, L.~{Yang}, D.~W.~K. {Ng}, J.~{Yuan}, and L.~{Hanzo}, ``On the
  performance gain of {NOMA} over {OMA} in uplink communication systems,''
  \emph{IEEE Trans. Commun.}, vol.~68, no.~1, pp. 536--568, Jan. 2020.

\bibitem{J_SeNOMA19_Zeng}
M.~{Zeng}, N.~{Nguyen}, O.~A. {Dobre}, and H.~V. {Poor}, ``Securing downlink
  massive {MIMO-NOMA} networks with artificial noise,'' \emph{IEEE J. Sel.
  Topics Signal Process.}, vol.~13, no.~3, pp. 685--699, Jun. 2019.

\bibitem{J_Sun18NOMA}
Y.~Sun, D.~W.~K. Ng, J.~Zhu, and R.~Schober, ``Robust and secure resource
  allocation for full-duplex {MISO} multicarrier {NOMA} systems,'' \emph{IEEE
  Trans. Commun.}, vol.~66, no.~9, pp. 4119--4137, Sep. 2018.

\bibitem{J_Feng19SeNoma}
Y.~{Feng}, S.~{Yan}, Z.~{Yang}, N.~{Yang}, and J.~{Yuan}, ``Beamforming design
  and power allocation for secure transmission with {NOMA},'' \emph{IEEE Trans.
  Wireless Commun.}, vol.~18, no.~5, pp. 2639--2651, May 2019.

\bibitem{J_YueSecure20NOMA}
X.~{Yue}, Y.~{Liu}, Y.~{Yao}, X.~{Li}, R.~{Liu}, and A.~{Nallanathan}, ``Secure
  communications in a unified non-orthogonal multiple access framework,''
  \emph{IEEE Trans. Wireless Commun.}, vol.~19, no.~3, pp. 2163--2178, Mar.
  2020.

\bibitem{J_Wang19SecureNOMI}
H.~{Wang}, X.~{Zhang}, Q.~{Yang}, and T.~A. {Tsiftsis}, ``Secure users oriented
  downlink {MISO} {NOMA},'' \emph{IEEE J. Sel. Topics Signal Process.},
  vol.~13, no.~3, pp. 671--684, Jun. 2019.

\bibitem{J_Wei17OptPower}
Z.~Wei, D.~W.~K. Ng, J.~Yuan, and H.~Wang, ``Optimal resource allocation for
  power-efficient {MC-NOMA} with imperfect channel state information,''
  \emph{IEEE Trans. Commun.}, vol.~65, no.~9, pp. 3944--3961, Sep. 2017.

\bibitem{J_Yoo_pdf07}
T.~{Yoo}, N.~{Jindal}, and A.~{Goldsmith}, ``Multi-antenna downlink channels
  with limited feedback and user selection,'' \emph{IEEE J. Sel. Areas
  Commun.}, vol.~25, no.~7, pp. 1478--1491, Sep. 2007.

\bibitem{J_Bech09Berntein}
I.~{Bechar}, ``{A Bernstein-type inequality for stochastic processes of
  quadratic forms of {Gaussian} variables},'' \emph{{2009 [Online]. Available:
  http://arxiv.org/abs/0909.3595}}, 2009.

\bibitem{C_Xie18BCA}
T.~Xie, B.~Liu, Y.~Xu, M.~Ghavamzadeh, Y.~Chow, D.~Lyu, and D.~Yoon, ``A block
  coordinate ascent algorithm for mean-variance optimization,'' in \emph{Proc.
  Adv. Neural Inf. Process. Syst. (NIPS)}.\hskip 1em plus 0.5em minus
  0.4em\relax Curran Associates, Inc., 2018, pp. 1065--1075.

\bibitem{J_Mult_FP01}
M.~Dur, R.~Horst, and N.~V. Thoai, ``Solving sum-of-ratios fractional programs
  using efficient points,'' \emph{Optimization}, vol.~49, no. 5-6, pp.
  447--466, 2001.

\bibitem{J_An05Opt}
L.~An and P.~Tao, ``\BIBforeignlanguage{eng}{The {DC} (difference of convex
  functions) programming and {DCA} revisited with {DC} models of real world
  nonconvex optimization problems},'' \emph{\BIBforeignlanguage{eng}{Ann. Oper.
  Res.}}, vol. 133, no. 1-4, pp. 23--46, 2005.

\bibitem{Bert1989Padc}
D.~P. Bertsekas, \emph{\BIBforeignlanguage{eng}{Parallel and Distributed
  Computation: Numerical Methods}}.\hskip 1em plus 0.5em minus 0.4em\relax
  Englewood Cliffs, N.J.: Prentice Hall, 1989.

\bibitem{B_Beck09_AM}
A.~Beck and M.~Teboulle, \emph{Gradient-based Algorithms with Applications to
  Signal Recovery Problems}.\hskip 1em plus 0.5em minus 0.4em\relax Cambridge
  University Press, 2009, pp. 42--88.

\bibitem{Lambert_W96}
R.~M. Corless, G.~H. Gonnet, D.~E.~G. Hare, D.~J. Jeffrey, and D.~E. Knuth,
  ``On the {Lambert W} function,'' \emph{Advances in Computational
  Mathematics}, vol.~5, no.~1, pp. 329--359, Dec. 1996.

\bibitem{B_Clerckx13}
B.~Clerckx and C.~Oestges, \emph{MIMO Wireless Networks: Channels, Techniques
  and Standards for Multi-Antenna, Multi-User and Multi-Cell Systems},
  2nd~ed.\hskip 1em plus 0.5em minus 0.4em\relax Academic Press, 2013.

\bibitem{J_YCestimation08}
J.~{Chen}, Y.-C. {Wu}, S.~{Ma}, and T.-S. {Ng}, ``Joint {CFO} and channel
  estimation for multiuser {MIMO-OFDM} systems with optimal training
  sequences,'' \emph{IEEE Trans. Signal Process.}, vol.~56, no.~8, pp.
  4008--4019, Aug. 2008.

\bibitem{J_Kout12DownSDMA}
M.~{Kountouris} and J.~G. {Andrews}, ``Downlink {SDMA} with limited feedback in
  interference-limited wireless networks,'' \emph{IEEE Trans. Wireless
  Commun.}, vol.~11, no.~8, pp. 2730--2741, Aug. 2012.

\bibitem{J_Jindal06Finite}
N.~{Jindal}, ``{MIMO} broadcast channels with finite-rate feedback,''
  \emph{IEEE Trans. Inf. Theory}, vol.~52, no.~11, pp. 5045--5060, Nov. 2006.

\bibitem{J_Yang16ImpefectCSI}
Z.~{Yang}, Z.~{Ding}, P.~{Fan}, and G.~K. {Karagiannidis}, ``On the performance
  of non-orthogonal multiple access systems with partial channel information,''
  \emph{IEEE Trans. Commun.}, vol.~64, no.~2, pp. 654--667, Feb. 2016.

\bibitem{J_Andes95_Prog}
J.~B. {Andersen}, T.~S. {Rappaport}, and S.~{Yoshida}, ``Propagation
  measurements and models for wireless communications channels,'' \emph{IEEE
  Commun. Mag.}, vol.~33, no.~1, pp. 42--49, 1995.

\bibitem{J_Ding17App}
Z.~Ding, Y.~Liu, J.~Choi, Q.~Sun, M.~Elkashlan, C.~L. I, and H.~V. Poor,
  ``Application of non-orthogonal multiple access in {LTE} and {5G} networks,''
  \emph{IEEE Commun. Mag.}, vol.~55, no.~2, pp. 185--191, Feb. 2017.

\bibitem{J_Fang17Joint}
F.~Fang, H.~Zhang, J.~Cheng, S.~Roy, and V.~C.~M. Leung, ``Joint user
  scheduling and power allocation optimization for energy-efficient {NOMA}
  systems with imperfect csi,'' \emph{IEEE J. Sel. Areas Commun.}, vol.~35,
  no.~12, pp. 2874--2885, Dec. 2017.

\bibitem{J_Chen18NOMASecure}
X.~{Chen}, Z.~{Zhang}, C.~{Zhong}, D.~W.~K. {Ng}, and R.~{Jia}, ``Exploiting
  inter-user interference for secure massive non-orthogonal multiple access,''
  \emph{IEEE J. Sel. Areas Commun.}, vol.~36, no.~4, pp. 788--801, Apr. 2018.

\bibitem{J_Wynner75The}
A.~D. Wyner, ``{The wire-tap channel},'' \emph{Bell Syst. Tech. J.}, vol.~54,
  no.~8, pp. 1355--1387, Oct. 1975.

\bibitem{J_Hu18SecureIoT}
J.~Hu, Y.~Cai, and N.~Yang, ``Secure transmission design with feedback
  compression for the {Internet of Things},'' \emph{IEEE Trans. Signal
  Process.}, vol.~66, no.~6, pp. 1580--1593, Mar. 2018.

\bibitem{J_Wang17SeMulti}
W.~{Wang}, K.~C. {Teh}, and K.~H. {Li}, ``Secrecy throughput maximization for
  {MISO} multi-eavesdropper wiretap channels,'' \emph{IEEE Trans. Inf.
  Forensics Security}, vol.~12, no.~3, pp. 505--515, Mar. 2017.

\bibitem{J_multi08_eves}
Y.~{Liang}, H.~V. {Poor}, and L.~{Ying}, ``Wireless broadcast networks:
  reliability, security, and stability,'' in \emph{Information Theory and
  Applications Workshop}, Jan. 2008, pp. 249--255.

\bibitem{J_IQF_GausianYI16}
T.~Y. {Al-Naffouri}, M.~{Moinuddin}, N.~{Ajeeb}, B.~{Hassibi}, and A.~L.
  {Moustakas}, ``On the distribution of indefinite quadratic forms in gaussian
  random variables,'' \emph{IEEE Trans. Commun.}, vol.~64, no.~1, pp. 153--165,
  Jan. 2016.

\bibitem{J_Tof17FP}
I.~M. Stancu-Minasian, \emph{\BIBforeignlanguage{eng}{Fractional programming:
  Theory, Methods, and Applications}}.\hskip 1em plus 0.5em minus 0.4em\relax
  Dordrecht, Boston: Kluwer Academic Publishers, 1997.

\bibitem{J_Benson07FP}
H.~P. Benson, ``Solving sum of ratios fractional programs via concave
  minimization,'' \emph{Journal of Optimization Theory and Applications}, vol.
  135, no.~1, pp. 1--17, 2007.

\bibitem{J_Bosh15SoR}
E.~{Boshkovska}, D.~W.~K. {Ng}, N.~{Zlatanov}, and R.~{Schober}, ``Practical
  non-linear energy harvesting model and resource allocation for {SWIPT}
  systems,'' \emph{IEEE Commun. Lett.}, vol.~19, no.~12, pp. 2082--2085, Dec.
  2015.

\bibitem{J_Lipp16CCP}
T.~Lipp and S.~Boyd, ``Variations and extension of the convex--concave
  procedure,'' \emph{Optim. Eng.}, vol.~17, no.~2, pp. 263--287, Jun. 2016.

\bibitem{Ben-TalA01}
A.~Ben-Tal and A.~Nemirovski, \emph{\BIBforeignlanguage{eng}{Lectures on Modern
  Convex Optimization: Analysis, Algorithms, and Engineering
  Applications}}.\hskip 1em plus 0.5em minus 0.4em\relax Philadelphia, PA, USA:
  SIAM, 2001.

\bibitem{J_Beck15AM}
A.~Beck, ``On the convergence of alternating minimization for convex
  programming with applications to iteratively reweighted least squares and
  decomposition schemes,'' \emph{SIAM J. Optim.}, vol.~25, no.~1, pp. 185--209,
  Jan. 2015.

\bibitem{J_Yuwei18FP}
K.~{Shen} and W.~{Yu}, ``Fractional programming for communication
  systems-{P}art {I}: power control and beamforming,'' \emph{IEEE Trans. Signal
  Process.}, vol.~66, no.~10, pp. 2616--2630, May 2018.

\bibitem{J_Wright15}
S.~J. Wright, ``Coordinate descent algorithms,'' \emph{Mathematical
  Programming}, vol. 151, no.~1, pp. 3--34, Jun. 2015.

\bibitem{B_Bertseka97NP}
D.~P. Bertsekas, \emph{\BIBforeignlanguage{eng}{Nonlinear Programming}},
  2nd~ed.\hskip 1em plus 0.5em minus 0.4em\relax Belmont, MA, USA: Athena
  Scientific, 1999.

\bibitem{C_BoShen08Proba}
M.~B. {Shenouda} and T.~N. {Davidson}, ``Probabilistically-constrained
  approaches to the design of the multiple antenna downlink,'' in \emph{Proc.
  42nd Asilomar Conf. Signals, Syst. and Comput.}, Oct. 2008, pp. 1120--1124.

\bibitem{LTE_carrierFQ13}
3GPP, ``{Evolved Universal Terrestrial Radio Access: Multi-Standard Radio (MSR)
  Base Station (BS) radio transmission and reception},'' 3rd Generation
  Partnership Project (3GPP), TS 37.104 {V}10.9.1, Feb. 2013.

\bibitem{B_Rappaport}
T.~S. Rappaport, \emph{\BIBforeignlanguage{eng}{Wireless Communications:
  Principles and Practice}}, 2nd~ed.\hskip 1em plus 0.5em minus 0.4em\relax
  Upper Saddle River, N.J.: Prentice Hall PTR, 2002.

\bibitem{J_19XuTDMACompare}
D.~{Xu} and H.~{Zhu}, ``Secure transmission for {SWIPT IoT} systems with
  full-duplex {IoT} devices,'' \emph{IEEE Internet of Things Journal}, vol.~6,
  no.~6, pp. 10\,915--10\,933, Dec. 2019.

\bibitem{J_Ding14NOMA}
Z.~Ding, Z.~Yang, P.~Fan, and H.~V. Poor, ``On the performance of
  non-orthogonal multiple access in {5G} systems with randomly deployed
  users,'' \emph{IEEE Signal Process. Lett.}, vol.~21, no.~12, pp. 1501--1505,
  Dec. 2014.

\bibitem{J_Roh06Se_Beamform}
J.~C. {Roh} and B.~D. {Rao}, ``Transmit beamforming in multiple-antenna systems
  with finite rate feedback: a {VQ}-based approach,'' \emph{IEEE Trans. Inf.
  Theory}, vol.~52, no.~3, pp. 1101--1112, Mar. 2006.

\bibitem{Table_Mathe00}
A.~Jeffrey and D.~Zwillinger, \emph{Table of Integrals, Series, and Products
  (6th ed.)}.\hskip 1em plus 0.5em minus 0.4em\relax San Diego, USA: Academic
  Press, 2000.

\bibitem{EJ_Zhang09Pro}
J.~Zhang, R.~W. Heath, M.~Kountouris, and J.~G. Andrews, ``Mode switching for
  the multi-antenna broadcast channel based on delay and channel
  quantization,'' \emph{EURASIP J. Adv. Sig. Proc.}, vol. 2009, no.~1, p.
  802548, Jun. 2009.

\bibitem{Cov_Opt90}
S.~P. Boyd and L.~Vandenberghe, \emph{Convex Optimization}.\hskip 1em plus
  0.5em minus 0.4em\relax Cambridge, U.K.: Cambridge Univ. Press, 2004.

\bibitem{J_Liu09Saddle}
Q.~Liu, W.~M. Tang, and X.~M. Yang, ``\BIBforeignlanguage{English}{Properties
  of saddle points for generalized augmented lagrangian},''
  \emph{\BIBforeignlanguage{English}{Mathematical Methods of Operations
  Research}}, vol.~69, no.~1, pp. 111--124, 03 2009.

\bibitem{B_BartleRobert11}
R.~G. Bartle, \emph{\BIBforeignlanguage{eng}{Introduction to Real Analysis}},
  4th~ed.\hskip 1em plus 0.5em minus 0.4em\relax Hoboken, NJ: Wiley, 2011.

\bibitem{B_Nesterov14}
Y.~Nesterov, \emph{Introductory Lectures on Convex Optimization: A Basic
  Course}.\hskip 1em plus 0.5em minus 0.4em\relax Springer, 2004.

\bibitem{J_Bolte14ConverAnalysis}
J.~Bolte, S.~Sabach, and M.~Teboulle, ``Proximal alternating linearized
  minimization for nonconvex and nonsmooth problems,'' \emph{Math. Program.},
  vol. 146, no. 1-2, pp. 459--494, Aug. 2014.

\end{thebibliography}

\end{document}